\newcommand{\im}{\mathrm{i}}
\newcommand{\Z}{\mathbb{Z}}
\newcommand{\R}{\mathbb{R}}
\newcommand{\C}{\mathbb{C}}
\newcommand{\defeq}{\coloneqq}
\newcommand{\lhs}{\langle\langle}
\newcommand{\rhs}{\rangle\rangle}
\newcommand{\cF}{\mathcal{F}}
\newcommand{\one}{\mathbf{1}}
\newcommand{\cA}{\mathcal{A}}
\newcommand{\cAe}{\mathcal{A}_{\mathrm{e}}}
\DeclareMathOperator{\tr}{tr}
\newcommand{\fa}{\dagger}
\newcommand{\la}{*}
\newcommand{\fge}{\mathfrak{g}_{\mathrm{e}}}
\newcommand{\fg}{\mathfrak{g}}
\newcommand{\ue}{\mathrm{U}}
\newcommand{\so}{\mathfrak{so}}
\newcommand{\fh}{\mathfrak{h}}
\newcommand{\fm}{\mathfrak{m}}
\newcommand{\fu}{\mathfrak{u}}
\newcommand{\fn}{\mathfrak{n}}
\newcommand{\cAn}{\mathcal{A}_0}
\newcommand{\Ge}{G_{\mathrm{e}}}
\newcommand{\vac}{\psi_0}
\newcommand{\acl}{[}
\newcommand{\acr}{]_{\mathrm{a.c.}}}
\newcommand{\wa}{\star}
\newcommand{\Wo}{\mathfrak{n}_\mathrm{s}}
\newcommand{\fp}{\mathfrak{p}}
\newcommand{\ad}{\mathrm{Ad}}
\newcommand{\xd}{\mathrm{d}}
\newcommand{\SO}{\mathrm{SO}}
\newcommand{\Galg}{G_{\mathrm{alg}}}
\newcommand{\Gdec}{G_{\mathrm{dec}}}
\newcommand{\tcl}{\mathcal{T}(L)}
\newcommand{\asl}{\Xi(L)}
\newcommand{\lai}{(}
\newcommand{\rai}{)_{\cA'}}
\newcommand{\cAx}{\mathcal{A}_{\mathrm{s}}}
\newcommand{\lop}{\|}
\newcommand{\rop}{\|_{\mathrm{op}}}
\newcommand{\hol}{\mathrm{Hol}}
\newcommand{\spec}{\mathrm{spec}}
\theoremstyle{definition}
\newtheorem{dfn}{Definition}[section]
\theoremstyle{plain}
\newtheorem{lem}[dfn]{Lemma}
\newtheorem{prop}[dfn]{Proposition}
\newtheorem{thm}[dfn]{Theorem}
\newtheorem{cor}[dfn]{Corollary}
\begin{document}

\begin{titlepage}
\title{\textbf{Coherent states\\ in the fermionic Fock space}}
\author{Robert Oeckl\footnote{email: robert@matmor.unam.mx}\\ \\
Institute for Quantum Gravity,\\
Friedrich-Alexander-Universität Erlangen-Nürnberg,\\
Staudtstraße 7, 91058 Erlangen, Germany\\ \\
Centro de Ciencias Matemáticas,\\
Universidad Nacional Autónoma de México,\\
C.P.~58190, Morelia, Michoacán, Mexico}
\date{UNAM-CCM-2014-1\\ 12 August 2014\\ 30 November 2014 (v2)}

\maketitle

\vspace{\stretch{1}}

\begin{abstract}
We construct the coherent states in the sense of Gilmore and Perelomov for the fermionic Fock space. Our treatment is from the outset adapted to the infinite-dimensional case. The fermionic Fock space becomes in this way a reproducing kernel Hilbert space of continuous holomorphic functions.

\end{abstract}

\vspace{\stretch{1}}
\end{titlepage}

\section{Introduction}

Coherent states have long played an important role in bosonic quantum field theory. While there is no agreed upon universal definition to the term ``coherent state'' these are generally states that (a) approximate classical behavior in a reasonable sense, (b) are parametrized by continuous sets, often related to a classical phase space, (c) minimize certain joint uncertainty relations, and (d) exhibit factorization of suitable correlation functions. Moreover, mathematically, they often (a) span a dense subspace of the state space, (b) satisfy completeness relations, (c) are eigenstates of the annihilation operator, (d) arise through the application of a shift operator on the ground state, and (e) yield reproducing properties.

In the simplest case (see e.g.\ \cite{ItZu:qft}), the state space is a Fock space $\cF$ over a Hilbert space $L$ and the coherent states are in correspondence to elements of $L$. Moreover, $\cF$ becomes a space of holomorphic functions on $L$ \cite{Bar:hilbanalytic}. When the space $L$ is identified with solutions of the classical equations of motions, the coherent state associated to an element of $L$ behaves in many respects similar to the corresponding classical solution.
Unfortunately, there is no fermionic analogue of these coherent states. When mimicking the construction of these states with fermions it is necessary to use Grassmann variables in their definition. The resulting objects are useful for calculations, especially manipulations with the path integral \cite{OhKa:cohfermip}. However, they are not states, i.e., not elements of the Hilbert space.

The approaches of Gilmore \cite{ZFG:coherent} and Perelomov \cite{Per:coherent} to the concept of coherent state are more fruitful for obtaining genuine states also in the fermionic case. In this approach a ``dynamical group'' $G$ is identified that acts unitarily on the state space. The coherent states are then all the states generated by the action of the group on a reference state, usually the vacuum. The coherent states can be brought into correspondence with the elements of the homogeneous space $G/H$, where $H$ is the subgroup of $G$ that maps the reference state to a multiple of itself.
We shall consider here the case of a fermionic Fock space $\cF$ over the Hilbert space $L$, with the usual CAR-algebra of creation and annihilation operators. Restricting to the states of even degree and supposing that $L$ is finite-dimensional the relevant dynamical group $G$ and its complexification was analyzed in some detail by Balian and Brezin \cite{BaBr:nbogtrans}. If the dimension of $L$ is $n$ this group can be identified with $\SO(2n)$. The corresponding coherent states were introduced by Berezin who also analyzed the structure of the relevant homogeneous space $G/H$ where $H$ can be identified with $\mathrm{U}(n)$. In particular, Berezin shows that $G/H$ is a Kähler manifold. Even though Berezin's treatment is also based on matrices, he does give consideration to the case where $\cF$ is infinite-dimensional.
There is very little literature on coherent states for the full Fock space (including states of odd degree). The corresponding dynamical group is $\SO(2n+1)$ and the homogeneous space is $\SO(2n+1)/\mathrm{U}(n)$ if $L$ has dimension $n$ \cite{ZFG:coherent}.

The purpose of the present paper is to provide a treatment of fermionic coherent states that makes them fully accessible to quantum field theory, i.e., is from the outset adapted to the infinite-dimensional case. Moreover, the focus is on coherent states for the full Fock space, not only its even part. This is in view of making them available for applications such as outlined in \cite{Ber:grossneveu} or in \cite{ZFG:coherent}, e.g., Gross-Neveu models and the Hartree-Fock-Bogoliubov method, but in an infinite-dimensional setting. However, the paper is strictly limited to the development of the mathematical theory, with such applications being beyond its scope.

In Section~\ref{sec:CAR} the dynamical Lie algebra is defined based on the CAR-algebra. Conjugate linear maps play an important role. Exponentiation of this Lie algebra is to provide the dynamical group. In Section~\ref{sec:dyneven} the even dynamical Lie algebra is further analyzed and represented on the creation and annihilation operators. This Section is very much in the spirit of reference \cite{BaBr:nbogtrans}. In Section~\ref{sec:dynall} the analysis is extended to the full dynamical Lie algebra. As a crucial ingredient categorical bosonization \cite{Maj:cpbos} is added. In Section~\ref{sec:gparam} the dynamical group is defined and a decomposition theorem proven which generalizes the one of reference \cite{BaBr:nbogtrans}. The homogeneous space $G/H$ and a complexified version $G^\C/X_+$ are analyzed and parametrized in Section~\ref{sec:homspace}. Finally, coherent states are constructed in Section~\ref{sec:cohspace}. The Section culminates in a presentation of the Fock space $\cF$ as a reproducing kernel Hilbert space of continuous (anti-)holomorphic functions on another Hilbert space.

\section{CAR-algebra and dynamical Lie algebras}
\label{sec:CAR}

Let $L$ be a complex separable Hilbert space. Denote its inner product by $\{\cdot,\cdot\}$.\footnote{We use physicist's conventions throughout according to which a complex inner product is complex linear in the second argument and complex conjugate linear in the first argument.} For a bounded operator $X:L\to L$ we write the adjoint as $X^\la$. Denote by $\tcl$ the algebra of \emph{trace class operators} on $L$. For $\lambda\in \tcl$ denote its trace by,
\begin{equation}
 \tr(\lambda)\defeq\sum_{i\in I} \{\zeta_i, \lambda\zeta_i\},
\end{equation}
where $\{\zeta_i\}_{i\in I}$ is an orthonormal basis of $L$.

We consider the unital $*$-algebra $\cA$ over $\C$ generated by elements $a_{\xi}$ for $\xi\in L$ and subject to the relations,
\begin{equation}
 a_{\xi+\tau}=a_\xi+a_\tau,\quad a_{\lambda\xi}=\lambda a_{\xi},\quad
 a_{\xi} a_{\tau} + a_{\tau} a_{\xi}=0,\quad
 a_{\xi}^\fa a_{\tau}+a_{\tau} a_{\xi}^\fa=\{\xi,\tau\}\one .
\label{eq:CAR}
\end{equation}
This is the usual CAR-algebra over $L$, with $\fa$ denoting the $*$-structure. The elements $a_{\xi}$ are called the \emph{annihilation operators} and the elements $a_{\xi}^\fa$ are called the \emph{creation operators}. $\cA$ is $\Z$-graded by considering a monomial consisting of $p$ creation operators and $q$ annihilation operators to have degree $p-q$. Of particular interest are the $*$-subalgebras $\cAe$, consisting of elements of even degree, and $\cAn$, consisting of elements of degree zero.
Denote by $\cF$ the Fock space over $L$ on which $\cA$ acts in the usual way. Let $\cA'$, $\cAe'$, $\cAn'$ denote the completions of $\cA$, $\cAe$, $\cAn$ in the operator norm topology.

Given $\lambda\in\tcl$ define the operator $\hat{\lambda}:\cF\to\cF$ by,
\begin{equation}
  \hat{\lambda}\defeq \frac{1}{2} \sum_{i\in I}
   \left(a_{\zeta_i}^{\fa} a_{\lambda(\zeta_i)} - a_{\lambda(\zeta_i)} a_{\zeta_i}^{\fa}\right)
   = \sum_{i\in I} a_{\zeta_i}^{\fa} a_{\lambda(\zeta_i)}
     - \frac{1}{2}\tr(\lambda)\one .
\end{equation}
Up to a constant multiple of the identity this is the fermionic current operator as given in reference \cite{HaKa:aqft}. Note, the relation of the $*$-structures, $\hat{\lambda^\la}=\hat{\lambda}^\fa$. The vector space $\fh^\C$ of these current operators forms a complex Lie algebra, with the Lie bracket given by the commutator. This is,
\begin{align}
 & [\hat{\lambda},\hat{\lambda'}]=\hat{\lambda''}, && \text{with}\quad \lambda''=\lambda'\lambda-\lambda\lambda' .
 \label{eq:hcrel}
\end{align}
Thus $\tcl$ is naturally anti-isomorphic to $\fh^\C$ as a complex Lie algebra.
$\fh^\C$ has a natural real structure coming from its action on $\cF$. Namely, the complex conjugate of $X\in\fh^\C$ is given by $-X^\fa$. The real Lie subalgebra $\fh$ is thus generated by the elements $\hat{\lambda}$ that are skew-adjoint, equivalent to $\lambda$ itself being skew-adjoint, $\lambda=-\lambda^\la$. If $L$ is of dimension $n$, $\fh$ is the unitary Lie algebra $\fu(n;\R)$ and $\fh^\C$ is its complexification $\fu(n;\C)$. If $L$ is infinite-dimensional, $\fh$ is an infinite-dimensional unitary Lie algebra, further determined here by the condition that the operators $\lambda$ be trace class.
\begin{prop}
The completed universal enveloping algebra $\ue'(\fh)^\C=\ue'(\fh^\C)$ coincides with $\cAn'$.
\end{prop}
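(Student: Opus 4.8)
The plan is to read $\ue'(\fh^\C)$ as the operator-norm closure, inside $\cA'$, of the unital subalgebra generated by the current operators $\hat{\lambda}$, $\lambda\in\tcl$ --- equivalently, the closure of the image of $\ue(\fh^\C)$ under its canonical representation on $\cF$ --- and similarly for $\ue'(\fh)^\C$; one then establishes the two inclusions $\cAn'\subseteq\ue'(\fh^\C)$ and $\ue'(\fh^\C)\subseteq\cAn'$ in turn. The computational starting point is the value of $\hat{\lambda}$ on a rank-one operator: for $\xi,\tau\in L$, the operator $\lambda=|\xi\rangle\langle\tau|$ defined by $\zeta\mapsto\xi\{\tau,\zeta\}$ lies in $\tcl$ with $\tr(\lambda)=\{\tau,\xi\}$, and a short computation using the completeness relation gives $\hat{\lambda}=a_\tau^\fa a_\xi-\tfrac{1}{2}\{\tau,\xi\}\one$. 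Since $\ue(\fh^\C)$ is unital, every quadratic element $a_\tau^\fa a_\xi$, $\xi,\tau\in L$, therefore lies in the image of $\ue(\fh^\C)$, hence in $\ue'(\fh^\C)$.

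For $\cAn'\subseteq\ue'(\fh^\C)$ I would prove that $\cAn$ is generated, as a unital algebra, by these quadratic elements $a_\tau^\fa a_\xi$. Each element of $\cAn$ is a linear combination of normal-ordered monomials $a_{\xi_1}^\fa\cdots a_{\xi_p}^\fa a_{\eta_1}\cdots a_{\eta_p}$, and one argues by induction on $p$: such a monomial equals, up to a sign, the product $(a_{\xi_1}^\fa a_{\eta_1})\,a_{\xi_2}^\fa\cdots a_{\xi_p}^\fa a_{\eta_2}\cdots a_{\eta_p}$ after commuting $a_{\eta_1}$ to the right through $a_{\xi_2}^\fa,\dots,a_{\xi_p}^\fa$ using~\eqref{eq:CAR}; the second factor is a normal-ordered degree-zero monomial of length $2(p-1)$ (covered by the inductive hypothesis), and the extra terms produced in the commutation are shorter normal-ordered degree-zero monomials (likewise covered). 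Hence $\cAn$ is contained in the image of $\ue(\fh^\C)$, and taking norm-closures gives $\cAn'\subseteq\ue'(\fh^\C)$.

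For the reverse inclusion it suffices to show that each generator $\hat{\lambda}$ already lies in $\cAn'$; since $\cAn'$ is a norm-closed subalgebra of $\cA'$, it then contains the algebra generated by the $\hat{\lambda}$, and hence its closure $\ue'(\fh^\C)$. If $\lambda$ has finite rank one may pick the orthonormal basis $\{\zeta_i\}$ so that $\lambda\zeta_i=0$ for all but finitely many $i$, and then $\hat{\lambda}\in\cAn$ directly. In general, the finite-rank operators are dense in $\tcl$ in trace norm and $\lambda\mapsto\hat{\lambda}$ is norm-continuous --- $\lop\hat{\lambda}\rop$ is bounded by a constant times the trace norm of $\lambda$, as one reads off from the rank-one formula together with the singular value decomposition --- so $\hat{\lambda}\in\cAn'$ for every $\lambda\in\tcl$. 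Combining the two inclusions yields $\ue'(\fh^\C)=\cAn'$. Lastly, every $\lambda\in\tcl$ is a complex-linear combination of skew-adjoint trace-class operators, since $\lambda=\tfrac{1}{2}(\lambda-\lambda^\la)+\im\bigl(-\tfrac{\im}{2}(\lambda+\lambda^\la)\bigr)$ with both summands skew-adjoint; hence the complex algebra generated by $\fh^\C$ is the complexification of the real algebra generated by $\fh$, and this persists after completion, giving $\ue'(\fh)^\C=\ue'(\fh^\C)$.

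I expect the only genuine difficulty to be the topological bookkeeping proper to the infinite-dimensional setting: one has to be precise that the ``completed universal enveloping algebra'' is exactly the operator-norm closure formed inside $\cA'\subseteq B(\cF)$, so that it is compared with $\cAn'$ on equal footing, and one has to know that $\hat{\lambda}$ is a bona fide bounded operator depending norm-continuously on $\lambda\in\tcl$ in order to bootstrap from finite-rank $\lambda$ (where membership in $\cAn$ is immediate) to arbitrary trace-class $\lambda$. By contrast, the purely algebraic step --- that the quadratic elements generate $\cAn$ --- is a routine Wick-ordering induction in which the only care needed is with signs.
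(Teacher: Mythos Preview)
The paper states this proposition without proof, so there is no argument to compare against; your task was effectively to supply one, and your outline is sound. The two inclusions are established exactly as they should be: the rank-one computation $\widehat{|\xi\rangle\langle\tau|}=a_\tau^\fa a_\xi-\tfrac{1}{2}\{\tau,\xi\}\one$ gives all quadratic monomials, the Wick-ordering induction then yields all of $\cAn$, and the trace-norm-to-operator-norm continuity of $\lambda\mapsto\hat\lambda$ (via the singular-value decomposition and $\lop a_\xi\rop=\|\xi\|$) pushes the finite-rank case to arbitrary $\lambda\in\tcl$. Your closing remark that every trace-class operator is a complex combination of skew-adjoint ones correctly handles the identification $\ue'(\fh)^\C=\ue'(\fh^\C)$, provided one reads both sides as operator-norm closures inside $\cA'$ of the complex unital algebras generated by $\fh$ and $\fh^\C$ respectively---which is precisely the interpretation you adopt at the outset.
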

We shall refer to $\fh$ as the \emph{degree-preserving dynamical Lie algebra}.

Let $\Lambda:L\to L$ be a real linear map. We shall say that $\Lambda$ is \emph{anti-symmetric} (not to be confused with skew-adjoint) iff for all $\xi,\tau\in L$ we have,
\begin{equation}
 \{\xi,\Lambda\tau\}=-\{\tau,\Lambda\xi\} .
\end{equation}
This implies in particular that $\Lambda$ is conjugate linear rather than complex linear. Moreover, $\Lambda^2$ is complex linear, self-adjoint, and strictly negative (if $\Lambda\neq 0$),
\begin{gather}
 \{\xi,\Lambda^2\tau\}=-\{\Lambda\tau,\Lambda\xi\}=\{\Lambda^2\xi,\tau\},\\
 \{\xi,\Lambda^2\xi\}=-\{\Lambda\xi,\Lambda\xi\}\le 0 .
\end{gather}
Let $\asl$ be the vector space of anti-symmetric maps $\Lambda:L\to L$ such that $\Lambda^2$ is trace class (this implies that $\Lambda$ is bounded).\footnote{Rather than requiring that $\Lambda^2$ be trace class we could impose an equivalent condition on $\Lambda$ itself. This would be a real linear version of the Hilbert-Schmidt property.}
Even though the elements of $\asl$ are not complex linear maps, the space $\asl$ itself is naturally a complex vector space. This is because given $\Lambda\in\asl$, the map $\im\Lambda$ defined by $(\im \Lambda)(\eta)\defeq \im\, \Lambda(\eta)$ is also in $\asl$. But note, $\im\, \Lambda(\eta)=-\Lambda(\im\eta)$.
Given $\Lambda\in\asl$ define the operator $\hat{\Lambda}:\cF\to\cF$ as follows,
\begin{equation}
  \hat{\Lambda}\defeq \frac{1}{2} \sum_{i\in I}
   a_{\zeta_i} a_{\Lambda(\zeta_i)} .
\end{equation}
The condition on $\Lambda$ to be anti-symmetric is precisely equivalent to this definition being manifestly basis independent. These operators form a complex abelian Lie algebra that we shall denote by $\fm_+$. The adjoint of $\hat{\Lambda}$ is,
\begin{equation}
  \hat{\Lambda}^\fa= \frac{1}{2} \sum_{i\in I}
   a_{\Lambda(\zeta_i)}^\fa a_{\zeta_i}^\fa  .
\end{equation}
The complex abelian Lie algebra spanned by these operators will be denoted by $\fm_-$. The space $\fm_+$ is naturally isomorphic to $\asl$ as a complex vector space and $\fm_-$ is naturally isomorphic to $\overline{\asl}$ as a complex vector space. Here $\overline{\asl}$ denotes $\asl$ but with opposite complex structure.
For later use we also note the operator norm,
\begin{equation}
 \lop \hat{\Lambda}\rop^2=\lop \hat{\Lambda}^\fa\rop^2=\|\hat{\Lambda}^\fa\vac\|_{\cF}^2=-\frac{1}{2}\tr(\Lambda^2)
\label{eq:Lopn}
\end{equation}
The second equality arises here from the fact that $\hat{\Lambda}^\fa$ does not contain any annihilation operators.
Denote the direct sum by $\fm^\C\defeq\fm_+\oplus\fm_-$. Its real subspace spanned by the skew-adjoint operators $\hat{\Lambda}-\hat{\Lambda}^\fa$ denote by $\fm$. Then the direct sum $\fge\defeq \fh\oplus\fm$ is a real Lie algebra with complexification $\fge^\C=\fh^\C\oplus \fm^\C$. In addition to (\ref{eq:hcrel}) the Lie brackets are as follows. (The brackets not mentioned vanish.)
\begin{align}
 & [\hat{\lambda},\hat{\Lambda}]=\hat{\Lambda'},
  && \text{with}\quad \Lambda'=-\lambda\Lambda-\Lambda\lambda^\la, \label{eq:mrel1} \\
 & [\hat{\lambda},\hat{\Lambda}^\fa]=\hat{\Lambda'}^\fa,
  && \text{with}\quad \Lambda'= \lambda^\la\Lambda+\Lambda\lambda, \label{eq:mrel2} \\
 & [\hat{\Lambda'},\hat{\Lambda}^\fa]=\hat{\lambda},
  && \text{with}\quad \lambda=\Lambda'\Lambda . \label{eq:mrel3}
\end{align}

\begin{prop}
The completed universal enveloping algebra $\ue'(\fge)^\C=\ue'(\fge^\C)$ coincides with the even algebra $\cAe'$.
\end{prop}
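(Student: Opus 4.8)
The plan is to identify the image of $\ue(\fge^\C)$ under its representation on $\cF$ with the even algebra $\cAe$, and then pass to operator-norm closures; here ``$\ue'$'' is understood, exactly as in the preceding Proposition and in the definition of $\cAe'$, as that closure. One inclusion is immediate: by their defining formulas every element of $\fh^\C$, of $\fm_+$ and of $\fm_-$ is a (norm-convergent combination of) even monomial(s) in the creation and annihilation operators, of degree $0$, $-2$ and $+2$ respectively, and is a bounded operator --- for $\fm_\pm$ by the norm identity~(\ref{eq:Lopn}), and for $\fh^\C$ by the boundedness of the current operators $\hat\lambda$ for trace class $\lambda$ that already underlies the statement $\ue'(\fh^\C)=\cAn'$. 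Hence $\ue(\fge^\C)\subseteq\cAe'$, and taking closures, $\ue'(\fge^\C)\subseteq\cAe'$.

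For the reverse inclusion I would use the elementary fact that $\cAe$ is generated as a unital algebra by the quadratic elements $a_\xi^\fa a_\tau$, $a_\xi a_\tau$ and $a_\xi^\fa a_\tau^\fa$: a monomial of even degree involves an even number of creation/annihilation operators, which one groups into consecutive pairs, rewriting any mixed pair via $a_\xi a_\tau^\fa=\{\xi,\tau\}\one-a_\tau^\fa a_\xi$. It then suffices to place each of these three families inside $\ue(\fge^\C)$. For the degree-zero bilinears, take $\lambda$ to be the rank-one (hence trace class) operator $\eta\mapsto\{\xi,\eta\}\tau$, with $\tr(\lambda)=\{\xi,\tau\}$; expanding $\hat\lambda$ in the orthonormal basis and using $\sum_i\{\xi,\zeta_i\}a_{\zeta_i}^\fa=a_\xi^\fa$ gives $\hat\lambda=a_\xi^\fa a_\tau-\tfrac12\{\xi,\tau\}\one$, so $a_\xi^\fa a_\tau\in\fh^\C\oplus\C\one$. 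For the degree-$(-2)$ bilinears, take the conjugate-linear map $\Lambda\colon\eta\mapsto\{\eta,\xi\}\tau-\{\eta,\tau\}\xi$, which is anti-symmetric and of finite rank, so $\Lambda^2$ is trace class and $\Lambda\in\asl$; the analogous computation gives $\hat\Lambda=a_\xi a_\tau$, so $a_\xi a_\tau\in\fm_+$, and taking adjoints $a_\tau^\fa a_\xi^\fa=\hat\Lambda^\fa\in\fm_-$. Thus $\cAe\subseteq\ue(\fge^\C)$, so altogether $\cAe\subseteq\ue(\fge^\C)\subseteq\cAe'$, and closing up yields $\cAe'=\ue'(\fge^\C)$.

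The only genuinely computational steps are the basis-independent rewritings of $\sum_i a_{\zeta_i}^\fa a_{\lambda\zeta_i}$ and $\sum_i a_{\zeta_i}a_{\Lambda\zeta_i}$ for the chosen rank-one $\lambda$ and $\Lambda$; these use nothing beyond $\xi=\sum_i\{\zeta_i,\xi\}\zeta_i$, the (conjugate-)linearity of $\xi\mapsto a_\xi$ and $\xi\mapsto a_\xi^\fa$, and the anticommutation relations, so they are routine. I do not expect a serious obstacle: the substance of the Proposition is carried by the fact that $\cAe$ is generated by bilinears together with the identifications above (and, optionally, by the preceding Proposition for the degree-zero part). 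The one point to keep in mind is the interpretation of the completion --- $\ue(\fge^\C)$ is in general strictly larger than $\cAe$, since $\fh^\C$ and $\fm_\pm$ already contain the ``infinite'' trace class elements, so one should only claim equality after taking operator-norm closures, exactly as in the previous Proposition.
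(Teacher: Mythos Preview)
Your argument is correct. The paper itself states this proposition without proof (as it does for the analogous propositions on $\fh^\C$ and $\fg^\C$); your approach---showing that $\cAe$ is generated by the three families of quadratic bilinears and identifying each with a finite-rank element of $\fh^\C$, $\fm_+$ or $\fm_-$, then passing to closures---is the natural verification and is presumably what the authors had in mind.
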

We shall refer to $\fge$ as the \emph{even dynamical Lie algebra}.

Set $\hat{\xi}\defeq 1/\sqrt{2}\, a_{\xi}$ and consider the complex vector spaces spanned by the operators $\hat{\xi},\hat{\xi}^\fa$ denoting them by $\fn_+,\fn_-$ respectively. With this definition
\begin{equation}
\lop\hat{\xi}\rop^2=\lop\hat{\xi}^\fa\rop^2=\|\hat{\xi}^\fa\vac\|_{\cF}^2=\frac{1}{2}\|\xi\|_L^2 .
\label{eq:xopn}
\end{equation}
$\fn_+$ is naturally isomorphic to $L$ as a complex vector space while $\fn_-$ is naturally isomorphic to $\overline{L}$ as a complex vector space. Here $\overline{L}$ denotes $L$ with opposite complex structure. Denote the direct sum by $\fn^\C\defeq\fn_+\oplus\fn_-$ and its real subspace spanned by $\hat{\xi}-\hat{\xi}^\fa$ by $\fn$. Then, $\fg\defeq \fh\oplus\fm\oplus\fn$ forms a real Lie algebra and $\fg^\C= \fh^\C\oplus\fm^\C\oplus\fn^\C$ is its complexification. In addition to (\ref{eq:hcrel}), (\ref{eq:mrel1}), (\ref{eq:mrel2}), (\ref{eq:mrel3}) the Lie brackets are as follows. (The brackets not mentioned vanish.)
\begin{align}
 & [\hat{\lambda},\hat{\xi}]=\hat{\xi'},
  && \text{with}\quad \xi'=-\lambda\xi, \label{eq:nrel1} \\
 & [\hat{\lambda},\hat{\xi}^\fa]=\hat{\xi'}^\fa,
  && \text{with}\quad \xi'= \lambda^\la\xi, \label{eq:nrel2} \\
 & [\hat{\Lambda}^\fa,\hat{\xi}]=\hat{\xi'}^\fa,
  && \text{with}\quad \xi'=\Lambda\xi , \label{eq:nrel3} \\
 & [\hat{\Lambda},\hat{\xi}^\fa]=\hat{\xi'},
  && \text{with}\quad \xi'=-\Lambda\xi , \label{eq:nrel4} \\
 & [\hat{\xi},\hat{\xi'}]=\hat{\Lambda},
  && \text{with}\quad \Lambda \eta=\xi'\{\eta,\xi\}-\xi\{\eta,\xi'\},
  \label{eq:nrel5} \\
 & [\hat{\xi}^\fa,\hat{\xi'}^\fa]=\hat{\Lambda}^\fa,
  && \text{with}\quad \Lambda \eta=\xi\{\eta,\xi'\}-\xi'\{\eta,\xi\},
  \label{eq:nrel6} \\
 & [\hat{\xi}^\fa,\hat{\xi'}]=\hat{\lambda},
  && \text{with}\quad \lambda \eta=\xi'\{\xi,\eta\}.
  \label{eq:nrel7}
\end{align}

\begin{prop}
The completed universal enveloping algebra $\ue'(\fg)^\C=\ue'(\fg^\C)$ coincides with the CAR-algebra $\cA$.
\end{prop}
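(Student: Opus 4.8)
The plan is to prove the two inclusions $\ue'(\fg^\C)\subseteq\cA$ and $\cA\subseteq\ue'(\fg^\C)$ separately, the second being essentially immediate and the first being the substantive one.

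For $\cA\subseteq\ue'(\fg^\C)$: the CAR-algebra $\cA$ is generated as a unital algebra by the operators $a_\xi$ and $a_\xi^\fa$, and by definition $\hat\xi=\tfrac{1}{\sqrt2}a_\xi\in\fn_+\subseteq\fg^\C$ and $\hat\xi^\fa\in\fn_-\subseteq\fg^\C$. Hence every generator of $\cA$ lies in $\fg^\C$ and therefore in the image of the universal enveloping algebra; since $\ue'(\fg^\C)$ is by construction an algebra containing (the image of) $\fg^\C$, we get $\cA\subseteq\ue'(\fg^\C)$. Here I would want to be careful only about what ``completed universal enveloping algebra'' means in this infinite-dimensional operator-algebraic context: presumably $\ue'(\fg^\C)$ is defined as the operator-norm closure inside $\cA'$ of the (image of the) ordinary universal enveloping algebra, so this inclusion is really $\cA\subseteq\overline{\ue(\fg^\C)}$, which still holds since $\cA$ already contains all the generators and all their products.

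For the reverse inclusion $\ue'(\fg^\C)\subseteq\cA$ (or $\subseteq\cA'$): the image of $\ue(\fg^\C)$ in operators on $\cF$ is the unital subalgebra generated by $\fh^\C$, $\fm_\pm$ and $\fn_\pm$. By the two preceding propositions, the subalgebra generated by $\fh^\C\oplus\fm^\C$ is $\cAe'$, so it suffices to adjoin $\fn_\pm$ to $\cAe'$ and show the result is contained in $\cA$ (resp.\ $\cA'$). But $\fn_+$ is spanned by the $a_\xi$ and $\fn_-$ by the $a_\xi^\fa$, all of which manifestly lie in $\cA$. So the algebra generated by $\cAe'\cup\fn_+\cup\fn_-$ is contained in $\cA'$, giving $\ue'(\fg^\C)\subseteq\cA'$. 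To get equality with $\cA$ exactly as stated (rather than $\cA'$), one uses that the odd and even parts are handled symmetrically: every element of $\cA$ is a sum of an even and an odd part, the even part lies in $\cAe'$ by the second proposition, and the odd part is $\cAe'$ times a single creation/annihilation operator, hence lies in $\ue'(\fg^\C)$; combined with the density/closure conventions this identifies $\ue'(\fg^\C)$ with $\cA'$ or with $\cA$ according to whichever the paper's convention for $\ue'$ dictates. I expect the statement is really $\ue'(\fg^\C)=\cA'$ by analogy with the previous two propositions (which used $\cAn'$ and $\cAe'$), but in any case the algebraic content is that the generators of $\cA$ and of $\ue(\fg^\C)$ coincide up to the normalization $\hat\xi=\tfrac1{\sqrt2}a_\xi$.

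The main obstacle, such as it is, is not algebraic but analytic: one must check that the completions match, i.e.\ that the operator-norm closure of $\ue(\fg^\C)$ is exactly $\cA'$ and not something strictly smaller or (after closure) strictly larger. This reduces to the two earlier propositions together with the observation that the bounded operators $a_\xi$, $a_\xi^\fa$ have operator norm $\|\xi\|_L$ (cf.\ equation~\eqref{eq:xopn}), so that adjoining them to $\cAe'$ and closing is a controlled operation and the resulting closed algebra is precisely $\cA'$. Thus, modulo citing the first two propositions and pinning down the completion convention, the proof is a short assembly: generators of both sides coincide, the Lie-bracket relations \eqref{eq:nrel1}--\eqref{eq:nrel7} are automatically satisfied inside $\cA$, and no new elements are produced.
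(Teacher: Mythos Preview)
The paper states this proposition without proof, treating it (like the two preceding propositions for $\fh$ and $\fge$) as an essentially definitional observation: since $\fn_\pm$ already contains the generators $a_\xi$, $a_\xi^\fa$ of $\cA$ up to the scalar $1/\sqrt{2}$, the algebra generated by $\fg^\C$ inside the operator algebra is exactly the CAR-algebra, and the completed enveloping algebra is its closure. Your argument spells this out correctly and in the same spirit.

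Your remark about $\cA$ versus $\cA'$ is well taken: the two previous propositions use $\cAn'$ and $\cAe'$, so by parallelism the statement here should read $\cA'$ rather than $\cA$; this appears to be a minor slip in the paper. The substantive content, as you say, is purely that the generators coincide, and your handling of the analytic side (boundedness of $a_\xi$, $a_\xi^\fa$ and the closure argument) is adequate for the level of rigor the paper is operating at.
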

We shall refer to $\fg$ as the \emph{(full) dynamical Lie algebra}.

\section{Adjoint representation\\ of even dynamical Lie algebra}
\label{sec:dyneven}

The Lie algebra $\fg^\C$ acts on $\cA'$ via the \emph{adjoint action}. For $x\in\fg^\C$ we denote its adjoint action on $a\in\cA'$ by
\begin{equation}
\check{x}a\defeq [x,a] .
\end{equation}
Let $\langle\cdot,\cdot\rangle$ denote the inner product in the Fock space $\cF$ and $\vac$ denote its vacuum state. Define an inner product $\lai\cdot,\cdot\rai$ on $\cA$ via,
\begin{equation}
 \lai x,y\rai\defeq \frac{1}{2}\langle\vac, (x^\fa y +y x^\fa)\vac\rangle
\label{eq:defipA}
\end{equation}
Denote the adjoint for operators on $\cA$ with respect to this inner product by $^\wa$. We note that under this inner product the subspaces $\fn_-\subset\cA$ and $\fn_+\subset\cA$ are orthogonal. On $\fn^\C=\fn_+\oplus\fn_-$ the inner product takes the form,
\begin{equation}
 \lai\hat{\xi}^\fa+\hat{\eta},\hat{\mu}^\fa+\hat{\tau}\rai=\{\mu,\xi\}+\{\eta,\tau\} .
\end{equation}
In particular, it is positive definite.

We proceed to examine the action of the complexified degree-preserving dynamical Lie algebra $\fh^\C$.
\begin{prop}
Under the adjoint action of $\fh^\C$ on $\cA$ the subspaces $\fn_-\subset\cA$ and $\fn_+\subset\cA$ are invariant. Both are faithful and irreducible representations. Moreover they are $*$-representations with respect to the inner product $\lai\cdot,\cdot\rai$. More explicitly, $\fh^\C$ acts on each of $\fn_-\subset\cA$ and $\fn_+\subset\cA$ precisely as the Lie algebra of all trace class operators. $\fh$ acts as the Lie algebra of all skew-adjoint trace class operators.
\end{prop}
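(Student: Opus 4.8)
The plan is to read everything off from the bracket relations (\ref{eq:nrel1})--(\ref{eq:nrel2}), the explicit form of the inner product $\lai\cdot,\cdot\rai$ on $\fn^\C$ established above, and the elementary fact that every finite-rank operator is trace class. Equations (\ref{eq:nrel1}) and (\ref{eq:nrel2}) express $[\hat\lambda,\hat\xi]$ and $[\hat\lambda,\hat\xi^\fa]$ as elements of $\fn_+$ and $\fn_-$ respectively, so these subspaces are $\fh^\C$-invariant; since the adjoint action of $\fg^\C$ on $\cA'$ is a Lie algebra representation, they are genuine subrepresentations, and there is nothing to check about homomorphism properties. Reading off the parameters, $\check{\hat\lambda}$ acts on $\fn_+\cong L$ as the operator $-\lambda$ and on $\fn_-\cong\overline L$ as the operator $\lambda^\la$ (which is complex linear for the conjugate complex structure of $\overline L$ precisely because $\lambda^\la$ is complex linear on $L$). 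Both assignments $\hat\lambda\mapsto-\lambda$ and $\hat\lambda\mapsto\lambda^\la$ are complex linear bijections of $\fh^\C$ onto the space $\tcl$, so each representation is faithful (its kernel is $\{\hat\lambda:\lambda=0\}=0$) and its image is exactly the Lie algebra of all trace class operators; intersecting with the real form $\fh$, i.e. imposing $\lambda=-\lambda^\la$, leaves precisely the skew-adjoint trace class operators.

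Next I would prove irreducibility by showing every nonzero vector is cyclic. Given $0\neq v\in\fn_+\cong L$ and an arbitrary target $w\in L$, the rank-one operator $T\defeq\|v\|_L^{-2}\,|w\rangle\langle v|$, acting as $u\mapsto\{v,u\}\,w$, is trace class and satisfies $Tv=w$; hence the element $-\hat{T}\in\fh^\C$, which acts on $\fn_+$ as $T$, carries $\hat v$ to $\hat w$. Thus the $\fh^\C$-submodule generated by any nonzero element is all of $\fn_+$, so $\fn_+$ is irreducible, and the same argument applies verbatim to $\fn_-$. Since this produces the whole subspace directly, it is insensitive to whether $L$ is finite- or infinite-dimensional and makes no distinction between algebraic and topological irreducibility.

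Finally, for the $*$-representation property I would use the displayed formula for $\lai\cdot,\cdot\rai$ on $\fn^\C$: it shows that the identification $\fn_+\cong L$ is an isometry onto $L$ and that $\fn_-\cong\overline L$ is an isometry onto $\overline L$ with its conjugate inner product, so that on either summand the $\lai\cdot,\cdot\rai$-adjoint $^\wa$ is computed by the ordinary operator adjoint $^\la$ on $L$. Since $\check{\hat\lambda}$ acts as $-\lambda$ on $\fn_+$, its $^\wa$-adjoint acts as $(-\lambda)^\la=-\lambda^\la$, which is exactly the action of $\hat\lambda^\fa=\widehat{\lambda^\la}$; since $\check{\hat\lambda}$ acts as $\lambda^\la$ on $\fn_-$, its $^\wa$-adjoint acts as $(\lambda^\la)^\la=\lambda$, again the action of $\hat\lambda^\fa$. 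Hence $\fn_+$ and $\fn_-$ are $*$-representations of $\fh^\C$ with respect to $\lai\cdot,\cdot\rai$. The only step I expect to need genuine care — the main obstacle, such as it is — is bookkeeping the conjugate complex structure $\overline L$ on $\fn_-$ together with checking that the $\lai\cdot,\cdot\rai$-adjoint on each invariant subspace really coincides with the ordinary adjoint on $L$; everything else is immediate from the bracket relations and from rank-one operators being trace class.
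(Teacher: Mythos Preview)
Your proposal is correct and follows exactly the route the paper indicates: the paper's proof consists solely of the sentence ``This is straightforward to verify from relations (\ref{eq:nrel1}) and (\ref{eq:nrel2}) and definition (\ref{eq:defipA}),'' and you have simply spelled out that verification in detail, including the irreducibility argument via rank-one (hence trace class) operators and the adjoint bookkeeping on $\fn_-\cong\overline L$.
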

\begin{proof}
This is straightforward to verify from relations (\ref{eq:nrel1}) and (\ref{eq:nrel2}) and definition (\ref{eq:defipA}).
\end{proof}
The $*$-representation property means that for $x\in \fh^\C$ we have $\check{x^\fa}=\check{x}^\wa$.
The inner product on $\fn_+$ (say) induces a Hilbert-Schmidt inner product on $\fh^\C$. Given $x,y\in \fh^\C$ define,
\begin{equation}
 \lhs x,y\rhs \defeq 2\tr_{\fn_+}(\check{x}^\wa \check{y}) .
\label{eq:hsiph}
\end{equation}
(The factor of $2$ is a useful convention here.) Replacing $\fn_+$ by $\fn_-$ yields the same inner product. For $\lambda\in\tcl$ this is,
\begin{equation}
 \lhs \check{\lambda_1},
  \check{\lambda_2} \rhs
 =2\tr_L(\lambda_1^\la \lambda_2) .
\label{eq:iph}
\end{equation}
Note that we write $\check{\lambda}$ rather than $\check{\hat{\lambda}}$ for simplicity.
This shows in particular that this inner product is positive definite. Moreover, it is easy to see that it is invariant under the action of an element $z$ of the real Lie algebra $\fh$, i.e., with $\check{z}^\wa=-\check{z}$,
\begin{multline}
 \lhs \check{z} x,y\rhs+\lhs x, \check{z} y\rhs
 =\lhs [z,x],y\rhs + \lhs x, [z,y] \rhs\\
 =2\tr_{\fn_+}(\check{x}^\wa\check{z}^\wa\check{y}-\check{z}^\wa\check{x}^\wa\check{y})+2\tr_{\fn_+}(\check{x}^\wa\check{z}\check{y}-\check{x}^\wa\check{y}\check{z})
 =2\tr_{\fn_+}(\check{z}\check{x}^\wa\check{y}-\check{x}^\wa\check{y}\check{z})=0 .
\label{eq:invrla}
\end{multline}
If $L$ is finite-dimensional
$\fh$ is a compact real Lie algebra and thus admits a unique negative definite invariant bilinear form, called the \emph{Killing form}. Thus, the inner product $\lhs\cdot,\cdot\rhs$ restricted to $\fh$ is precisely a negative multiple of this Killing form.

We turn to examine the action of the complexified even dynamical Lie algebra $\fge^\C$.
\begin{prop}
Under the adjoint action of $\fge^\C$ the subspace $\fn^\C\subset\cA'$ is invariant and forms a faithful and irreducible representation. Moreover it is a $*$-representation with respect to the inner product $\lai\cdot,\cdot\rai$. The representation of $\fge$ on $\fn^\C$ is faithful, irreducible and unitary.
\end{prop}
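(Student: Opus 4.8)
The plan is to read all four assertions off the bracket relations (\ref{eq:nrel1})--(\ref{eq:nrel4}), the vanishing of $[\hat\Lambda,\hat\xi]$ and $[\hat\Lambda^\fa,\hat\xi^\fa]$, the explicit form of $\lai\cdot,\cdot\rai$ on $\fn^\C$, and the already-proven statement for $\fh^\C$. Write a general element of $\fge^\C=\fh^\C\oplus\fm_+\oplus\fm_-$ as $x=\hat\lambda+\hat\Lambda+\hat M^\fa$ with $\lambda\in\tcl$ and $\Lambda,M\in\asl$. The cited relations give $\check x\,\hat\xi=\widehat{-\lambda\xi}+\widehat{M\xi}^\fa$ and $\check x\,\hat\xi^\fa=\widehat{-\Lambda\xi}+\widehat{\lambda^\la\xi}^\fa$, both lying in $\fn^\C$; this proves invariance and shows that $\fh^\C$ preserves each of $\fn_\pm$, that $\fm_+$ maps $\fn_-$ into $\fn_+$ and annihilates $\fn_+$, and that $\fm_-$ does the mirror image. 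Faithfulness is then immediate: if $\check x=0$, reading off the $\fn_+$- and $\fn_-$-components of $\check x\,\hat\xi$ and of $\check x\,\hat\xi^\fa$ separately and using injectivity of $\xi\mapsto\hat\xi$ (see (\ref{eq:xopn})) forces $\lambda\xi=M\xi=\Lambda\xi=0$ for all $\xi\in L$, hence $x=0$.

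Irreducibility is the step that needs a genuine argument. Let $W\subseteq\fn^\C$ be a nonzero invariant subspace and pick $0\neq w=\hat\xi+\hat\eta^\fa\in W$. Acting by $\hat M^\fa\in\fm_-$ yields $\widehat{M\xi}^\fa\in W\cap\fn_-$ and acting by $\hat M\in\fm_+$ yields $\widehat{-M\eta}\in W\cap\fn_+$. For any nonzero $\mu\in L$ there is an admissible anti-symmetric map with $M\mu\neq 0$ --- e.g.\ the finite-rank map $M\eta=\mu'\{\eta,\mu\}-\mu\{\eta,\mu'\}$ for some $0\neq\mu'\perp\mu$, for which $M\mu=\|\mu\|^2\mu'\neq 0$ and $M^2\in\tcl$, and here $\dim L\geq 2$ enters. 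Consequently a nonzero $W$ cannot be contained entirely in $\fn_+$ nor entirely in $\fn_-$, so at least one of $W\cap\fn_+$, $W\cap\fn_-$ is nonzero; being $\fh^\C$-invariant, that intersection equals all of $\fn_+$, resp.\ all of $\fn_-$, by the preceding proposition, and a further application of $\fm_\mp$ together with irreducibility under $\fh^\C$ recovers the complementary summand. Hence $W=\fn^\C$. (When $\dim L=1$ one has $\asl=0$, $\fm=0$, $\fge=\fh\cong\fu(1)$, and $\fn^\C$ splits; the statement is understood for $\dim L\geq 2$, in particular in the infinite-dimensional case of interest.)

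It remains to check $\check{x^\fa}=\check x^\wa$ on $\fge^\C$, after which unitarity of $\fge$ is formal. For $x\in\fh^\C$ this is the preceding proposition; since $^\fa$ interchanges $\fm_+$ and $\fm_-$, it suffices to verify it for $x=\hat\Lambda$, i.e.\ $\lai\check{\hat\Lambda^\fa}u,v\rai=\lai u,\check{\hat\Lambda}v\rai$ for $u,v\in\fn^\C$. Both $\check{\hat\Lambda^\fa}$ and $\check{\hat\Lambda}$ are off-diagonal for $\fn^\C=\fn_+\oplus\fn_-$ and $\fn_+\perp\fn_-$, so both sides vanish unless $u\in\fn_+$ and $v\in\fn_-$; taking $u=\hat\xi$, $v=\hat\tau^\fa$ and using the explicit inner product on $\fn^\C$ gives $\lai\widehat{\Lambda\xi}^\fa,\hat\tau^\fa\rai=\{\tau,\Lambda\xi\}$ on the left and $\lai\hat\xi,\widehat{-\Lambda\tau}\rai=-\{\xi,\Lambda\tau\}$ on the right, and these agree precisely by the anti-symmetry $\{\xi,\Lambda\tau\}=-\{\tau,\Lambda\xi\}$ defining $\asl$. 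Restricting to the real form $\fge=\{x\in\fge^\C:x^\fa=-x\}$ then yields $\check x^\wa=\check{x^\fa}=-\check x$, so $\fge$ acts on $\fn^\C$ by operators skew-adjoint for the positive-definite form $\lai\cdot,\cdot\rai$; together with faithfulness and irreducibility this is the claimed faithful, irreducible, unitary representation. I expect the irreducibility step --- the need to mix $\fn_+$ with $\fn_-$ by producing suitable $\Lambda$'s --- to be the only genuinely non-routine point.
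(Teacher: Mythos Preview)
Your proof is correct and follows exactly the route the paper indicates: it is a direct verification from the bracket relations (\ref{eq:nrel1})--(\ref{eq:nrel4}) and the explicit inner product (\ref{eq:defipA}), only with the details written out where the paper's one-line proof leaves them implicit. Your explicit irreducibility argument (producing a nonzero element of $W\cap\fn_\pm$ via a rank-two $M\in\asl$, then invoking $\fh^\C$-irreducibility on each summand) and your observation that this requires $\dim L\ge 2$ are both accurate additions; the paper tacitly assumes the nontrivial case and does not flag the degenerate $\dim L=1$ situation where $\fm=0$ and $\fn^\C$ is reducible under $\fge^\C=\fh^\C$.
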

\begin{proof}
This follows with relations (\ref{eq:nrel1}), (\ref{eq:nrel2}), (\ref{eq:nrel3}), (\ref{eq:nrel4}) and definition (\ref{eq:defipA}).
\end{proof}
Using the convenient notation $(\xi,\eta)\defeq \hat{\xi}^\fa+\hat{\eta}$ for elements of $\fn^\C$, the action of $\fge^\C$ takes the explicit form
\begin{equation}
(\check{\lambda}+\check{\Lambda}^\fa+\check{\Lambda'}) (\tau,\eta)=
 (\lambda^*\tau+\Lambda\eta,-\lambda\eta-\Lambda\tau) .
 \label{eq:fgeact}
\end{equation}

We proceed to examine in more detail $\fn^\C=\fn_+\oplus\fn_-$ as a subspace of $\cA'$. The inner product on $\fn^\C$ also allows to express the invariance of the CAR (\ref{eq:CAR}) under the action of a group or Lie algebra. For the action of a Lie algebra element $Z$ this invariance is,
\begin{equation}
 \acl \check{Z} x, y\acr + \acl x, \check{Z} y\acr=0,\qquad\forall x,y\in \fn^\C .
\end{equation}
Here $\acl\cdot,\cdot\acr$ denotes the anti-commutator. From the definition (\ref{eq:defipA}) this is immediately seen to be equivalent to,
\begin{equation}
 \lai (\check{Z} x)^\fa,y\rai+\lai x^\fa,\check{Z} y\rai=0,\qquad\forall x,y\in \fn^\C .
\end{equation}
This in turn can be rewritten as,
\begin{equation}
 Z^\wa=-\sigma Z \sigma,
\label{eq:lainvact}
\end{equation}
where the conjugate linear map $\sigma:\fn^\C\to\fn^\C$ is a different notation for the operation of taking the adjoint in $\cA$. The validity of (\ref{eq:lainvact}) for elements of $\fge^\C$ follows from the $*$-representation property, i.e., $\check{x^\fa}=\check{x}^\wa$, together with the relation between commutator and $*$-structure
\begin{equation}
 [x^\fa,a]=-[x,a^\fa]^\fa\quad\forall a\in \fn^\C .
\end{equation}
If $L$ is finite-dimensional any operator $Z$ on $\fn^\C$ satisfying equation (\ref{eq:lainvact}) is easily seen to be of the form (\ref{eq:fgeact}) with $\lambda$ complex linear and $\Lambda,\Lambda'$ anti-symmetric. Thus, equation (\ref{eq:lainvact}) completely determines the Lie algebra $\fge^\C$. In particular, $\fge^\C$ is precisely the Lie algebra of transformations that preserve the anti-commutation relations. This prompted Balian and Brezin to call the associated Lie group $\Ge^\C$ the group of ``generalized Bogoliubov transformations'' \cite{BaBr:nbogtrans}. The usual Bogoliubov transformations arise by restriction to the real Lie algebra $\fge$ and associated real Lie group $\Ge$ that acts unitarily, both on $\cF$ and on $\fn^\C$. We summarize:
\begin{prop}
The Lie algebra $\fge^\C$ acts on $\fn^\C$ through operators satisfying equation (\ref{eq:lainvact}). If $L$ is finite-dimensional any operator on $\fn^\C$ satisfying equation (\ref{eq:lainvact}) arises in this way.
\end{prop}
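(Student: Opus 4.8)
The first assertion has, in effect, already been obtained in the discussion immediately preceding the statement: from the $*$-representation property $\check{x^\fa}=\check{x}^\wa$ of $\fge^\C$ on $\fn^\C$ (established in the previous proposition), together with the identity $[x^\fa,a]=-[x,a^\fa]^\fa$ valid for all $a\in\fn^\C$, one reads off $\check{x}^\wa=-\sigma\check{x}\sigma$ for every $x\in\fge^\C$, which is precisely~(\ref{eq:lainvact}). So the plan is concerned only with the converse, and from here on $\dim_\C L=n<\infty$.

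First I would fix an orthonormal basis of $L$ and use the splitting $\fn^\C=\fn_+\oplus\fn_-$ to write an arbitrary complex linear operator $Z$ on $\fn^\C$ in block form, with four blocks $A,B,C,D$ indexed by this splitting; transported to $L$ via the identifications $\fn_+\cong L$ and $\fn_-\cong\overline{L}$, the two diagonal blocks become complex linear maps $L\to L$ and the two off-diagonal blocks become conjugate linear ones. Next I would compute the two ingredients of~(\ref{eq:lainvact}) in this description: the conjugate linear involution $\sigma$, which merely interchanges $\fn_+$ and $\fn_-$ (in the notation $(\xi,\eta)=\hat{\xi}^\fa+\hat{\eta}$ introduced before~(\ref{eq:fgeact}), one has $\sigma(\xi,\eta)=(\eta,\xi)$); and the adjoint $^\wa$ taken with respect to $\lai\cdot,\cdot\rai$, using the explicit formula $\lai\hat{\xi}^\fa+\hat{\eta},\hat{\mu}^\fa+\hat{\tau}\rai=\{\mu,\xi\}+\{\eta,\tau\}$ to express $Z^\wa$ blockwise through the operator adjoints of $A,B,C,D$ on $L$. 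Substituting both into $Z^\wa=-\sigma Z\sigma$ turns this single operator identity into a small system of relations among $A,B,C,D$.

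The plan is then to solve that system and read off the conclusion. The diagonal relations should pin the two diagonal blocks to one another (one being minus the adjoint of the other), so that the diagonal part of $Z$ coincides with that of $\check\lambda$ in~(\ref{eq:fgeact}) for a uniquely determined complex linear $\lambda$, with no further restriction since in finite dimensions every operator lies in $\tcl$. The off-diagonal relations should say that each of $B$ and $C$ is \emph{anti-symmetric} in the sense of the paper, i.e.\ $\{\xi,B\tau\}=-\{\tau,B\xi\}$ and likewise for $C$; since $\dim L<\infty$ both then automatically belong to $\asl$, and writing $\Lambda,\Lambda'$ for the corresponding elements one obtains $Z=\check\lambda+\check{\Lambda}^\fa+\check{\Lambda'}$, i.e.\ $Z$ lies in the image of $\fge^\C$ among the operators on $\fn^\C$ — which is the assertion.

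The main obstacle I expect is bookkeeping rather than conceptual depth: one has to be scrupulous with the conjugate linear structures — the embedding $\fn_-\cong\overline{L}$, the conjugate linearity of $\sigma$ and of the off-diagonal blocks, and the resulting ``twisted'' way in which scalars act on the pairs $(\xi,\eta)$ — and, in particular, one must verify that the off-diagonal consequences of~(\ref{eq:lainvact}) are exactly the anti-symmetry condition defining $\asl$ and not, say, a skew-adjointness condition. Finite-dimensionality enters only to discard the trace-class (Hilbert--Schmidt) constraints on $\lambda$ and on $\Lambda,\Lambda'$; the very same computation shows in general that $\fge^\C$ is contained in, though a priori strictly smaller than, the set of operators on $\fn^\C$ satisfying~(\ref{eq:lainvact}).
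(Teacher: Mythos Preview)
Your proposal is correct and follows essentially the same route as the paper: the paper does not give a formal proof but merely asserts that in finite dimensions any $Z$ satisfying (\ref{eq:lainvact}) ``is easily seen to be of the form (\ref{eq:fgeact}) with $\lambda$ complex linear and $\Lambda,\Lambda'$ anti-symmetric,'' and then rephrases this in $2n\times 2n$ matrix form (equation (\ref{eq:fgesu})). Your block decomposition with respect to $\fn^\C=\fn_+\oplus\fn_-$ is exactly that matrix computation written out more carefully, including the bookkeeping with conjugate linearity that the paper leaves implicit.
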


In the specific case that $L$ has finite dimension $n$ we may identify operators on $\fn^\C$ with complex $2n\times 2n$ matrices. Taking care of the conjugate linearity of $\sigma$, the equation (\ref{eq:lainvact}) for operators translates for matrices to the equation,
\begin{equation}
 Z^\wa=-\tilde{\sigma}\overline{Z}\tilde{\sigma} .
 \label{eq:fgesu}
\end{equation}
Here $\overline{Z}$ is the entry-wise complex conjugation of the matrix $Z$. $\tilde{\sigma}$ is, in block matrix form,
\begin{equation}
\tilde{\sigma}=\begin{pmatrix} 0 & \one_{n\times n} \\ \one_{n\times n} & 0
\end{pmatrix} .
\end{equation}
The equation (\ref{eq:fgesu}) is easily recognizable as determining the special orthogonal matrix Lie algebra. This recovers the well known fact that $\fge^\C$ is precisely $\so(2n;\C)$ (and $\fge$ is $\so(2n;\R)$) \cite{ZFG:coherent}.

The inner product on $\fn^\C$ induces a Hilbert-Schmidt inner product on $\fge^\C$. Given $x,y\in \fge^\C$ define,
\begin{equation}
 \lhs x,y\rhs \defeq\tr_{\fn^\C}(\check{x}^\wa \check{y}) .
\end{equation}
For generic elements of $\fge^\C$, parametrized as above, this yields,
\begin{equation}
 \lhs \check{\lambda_1}+\check{\Lambda_1}^\wa+\check{\Lambda_1'},
  \check{\lambda_2}+\check{\Lambda_2}^\wa+\check{\Lambda_2'} \rhs
 =2\tr_L(\lambda_1^\la \lambda_2)-\tr_L(\Lambda_2'\Lambda_1')-\tr_L(\Lambda_1\Lambda_2) .
\label{eq:ipe}
\end{equation}
Note that we have normalized the corresponding inner product on $\fh^\C$ given by (\ref{eq:hsiph}) so as to precisely coincide with this one when restricted. The inner product is positive definite. Moreover, it is easy to see that it is invariant under the action of an element $z$ of the real Lie algebra $\fge$. This follows analogous to (\ref{eq:invrla}). If $L$ is finite-dimensional, $\fge$ is a compact real Lie algebra and admits a unique negative definite invariant bilinear form. Thus, the inner product $\lhs\cdot,\cdot\rhs$ restricted to $\fge$ is precisely a negative multiple of this Killing form.

\section{The full dynamical Lie algebra\\ and bosonization}
\label{sec:dynall}

If $L$ is finite-dimensional of dimension $n$ the full dynamical Lie algebra $\fg$ is known to be $\so(2n+1;\R)$ and an explicit representation in terms of $(2n+1)\times (2n+1)$ matrices can be written down \cite{ZFG:coherent}. However, this representation does not arise in the way that the representation on $\fn^\C\subset\cA'$ was obtained for $\fge$. Indeed, acting with $\fg$ does not leave $\fn^\C$ invariant, but rather leads to a much larger representation as can be seen from relations (\ref{eq:nrel5}), (\ref{eq:nrel6}) and (\ref{eq:nrel7}). While the analog of this approach works for the full dynamical group in the bosonic case (see \cite{BaBr:nbogtrans}), we should not be surprised that it fails here, once we lift the restriction to the even Lie subalgebra $\fge\subset\fg$. After all, the CAR-algebra $\cA$ should really be considered a superalgebra and we should use super-Lie algebras and supergroups to properly deal with this. On the other hand, this would defeat our declared purpose of constructing ordinary spaces (i.e., sets) of coherent states rather than superspaces.

There is a way out, however, and that is \emph{categorical bosonization} \cite{Maj:cpbos}. In general this is a functor from a category of braided objects to a category of unbraided objects. In the present case the braiding is simply a $\Z_2$-grading. Bosonization then converts $\Z_2$-graded objects, i.e., superobjects to ungraded objects. In the case at hand the superalgebra $\cA$ is replaced by the ordinary algebra $\cAx$. $\cAx$ is the same algebra as $\cA$, but with an additional element $k$ adjoint, satisfying the relations,\footnote{$\cAx$ really arises as a smash product $\cA \rtimes\Z_2$, where $\Z_2$ is the two-dimensional Hopf algebra spanned by the unit and the $\Z_2$-grading operator.}
\begin{equation}
 k^2=\one,\quad k^\fa=k,\quad k a_\xi=- a_\xi k,\quad a_\xi^\fa k =-k a_\xi^\fa .
\label{eq:krel}
\end{equation}
On $\cF$ the element $k$ acts as the identity on states of even degree and as minus the identity on states of odd degree.

Consider now the subspace $k \fn^\C$ of $\cA$. It is easy to see that $\fn^\C\to k \fn^\C$ given by $x\mapsto k x$ is an isomorphism both of Hilbert spaces with the inner product (\ref{eq:defipA}) and of representations of $\fge$. Note, however, that the map $\sigma': k\fn^\C\to k\fn^\C$ induced by this isomorphism is not the adjoint, but minus the adjoint. But this again leads to formula (\ref{eq:lainvact}), with $\sigma$ replaced by $\sigma'$. Indeed, using for elements of $k\fn^\C$ the notation $(\xi,\eta)\defeq k\hat{\xi}^\fa+k\hat{\eta}$ yields $\sigma':(\xi,\eta)\mapsto (\eta,\xi)$. Moreover, the action of $\fge$ is expressed by the very same formula (\ref{eq:fgeact}) as in the case of $\fn^\C$. To be explicit, we list the relevant commutation relations,
\begin{align}
 & [\hat{\lambda},k\hat{\xi}]=k\hat{\xi'},
  && \text{with}\quad \xi'=-\lambda\xi, \label{eq:nrelo1} \\
 & [\hat{\lambda},k\hat{\xi}^\fa]=k\hat{\xi'}^\fa,
  && \text{with}\quad \xi'= \lambda^\la\xi, \label{eq:nrelo2} \\
 & [\hat{\Lambda}^\fa,k\hat{\xi}]=k\hat{\xi'}^\fa,
  && \text{with}\quad \xi'=\Lambda\xi , \label{eq:nrelo3} \\
 & [\hat{\Lambda},k\hat{\xi}^\fa]=k\hat{\xi'},
  && \text{with}\quad \xi'=-\Lambda\xi . \label{eq:nrelo4}
\end{align}

The difference to the representation $\fn^\C\subset\cA'$ becomes obvious once we act with the additional elements of $\fn^\C\subset\fg^\C$. The representation grows by only a $1$-dimensional space, generated by $k$.
\begin{prop}
Under the adjoint action of $\fg^\C$ on $\cAx'$ the subspace $\Wo\defeq k \fn^\C\oplus\C[k]\subset\cAx'$ is invariant and forms a faithful and irreducible representation. Moreover it is a $*$-representation with respect to the inner product $\lai\cdot,\cdot\rai$. The representation of $\fg$ on $\Wo$ is faithful, irreducible and unitary.
\end{prop}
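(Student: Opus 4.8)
The plan is to verify the four claims (invariance, faithfulness, irreducibility, and the $*$-representation property, with unitarity for the real form $\fg$) directly from the explicit commutation relations already assembled in Section~\ref{sec:CAR}, extended by the bosonization relations~(\ref{eq:krel}).

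First I would establish invariance. Since $\fg^\C=\fge^\C\oplus\fn^\C$, and since the action of $\fge^\C$ on $k\fn^\C$ is governed by exactly the same formula~(\ref{eq:fgeact}) as on $\fn^\C$ (this is the content of relations (\ref{eq:nrelo1})--(\ref{eq:nrelo4})), the subspace $k\fn^\C$ is $\fge^\C$-invariant, and $\C[k]$ is $\fge^\C$-invariant because $k$ commutes with all of $\cAe'$, hence with $\fge^\C$ in particular; so $\Wo$ is $\fge^\C$-invariant. It then remains to check that acting with the extra generators $\hat\xi,\hat\xi^\fa\in\fn^\C$ maps $\Wo$ into $\Wo$. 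For this I would compute the brackets $[\hat\xi, k\hat\tau^\fa]$, $[\hat\xi^\fa, k\hat\tau]$, $[\hat\xi, k\hat\tau]$, $[\hat\xi^\fa, k\hat\tau^\fa]$, $[\hat\xi, k]$, $[\hat\xi^\fa, k]$. Using $ka_\tau=-a_\tau k$ one has e.g.\ $[\hat\xi, k\hat\tau^\fa] = \hat\xi k\hat\tau^\fa - k\hat\tau^\fa\hat\xi = -k\hat\xi\hat\tau^\fa - k\hat\tau^\fa\hat\xi = -k\,[\hat\xi,\hat\tau^\fa]_+ = -\tfrac12\{\xi,\tau\}\,k$, which lands in $\C[k]$; similarly $[\hat\xi^\fa,k\hat\tau^\fa]$ and $[\hat\xi,k\hat\tau]$ give $k$ times a product of two creation (resp.\ annihilation) operators, i.e.\ an element of $k\fm_-$ (resp.\ $k\fm_+$) — wait, that would leave $\Wo$. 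The correct computation uses $a_\xi^\fa k=-k a_\xi^\fa$ and the grading carefully: $[\hat\xi,k\hat\tau]=\hat\xi k\hat\tau-k\hat\tau\hat\xi=-k\hat\xi\hat\tau-k\hat\tau\hat\xi=-k[\hat\xi,\hat\tau]_+ =0$ since annihilation operators anticommute, and likewise $[\hat\xi^\fa,k\hat\tau^\fa]=0$; and $[\hat\xi,k]=\hat\xi k-k\hat\xi=-k\hat\xi-k\hat\xi=-2k\hat\xi=-2k\hat\xi\in k\fn_+\subset k\fn^\C$, $[\hat\xi^\fa,k]=2k\hat\xi^\fa\in k\fn^\C$. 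So every bracket stays in $\Wo=k\fn^\C\oplus\C[k]$, proving invariance.

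Next, faithfulness on $\Wo$: since the representation restricted to $\fge$ is already faithful on $k\fn^\C\cong\fn^\C$ (previous Proposition), only the extra generators in $\fn^\C$ need be checked not to act trivially, and the brackets $[\hat\xi,k]=-2k\hat\xi\neq 0$ for $\xi\neq 0$ settle this. For irreducibility I would argue that any nonzero invariant subspace $V\subseteq\Wo$ must meet either $k\fn^\C$ or contain $k$; acting with $\fge$ and using irreducibility of $k\fn^\C$ as a $\fge$-module, together with the maps $k\fn^\C\ni k\hat\xi\mapsto[\hat\xi^\fa,k\hat\xi]\in\C[k]$ and $\C[k]\ni k\mapsto[\hat\xi^\fa,k]=2k\hat\xi^\fa\in k\fn^\C$ (which are surjective onto $\C[k]$ and onto all of $k\fn_-$ respectively), one shows $V$ must be all of $\Wo$. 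For the $*$-representation property I would verify $\check{x^\fa}=\check x^\wa$ on $\Wo$ for $x$ ranging over a spanning set: on the $\fge^\C$-part this is inherited from the corresponding Proposition for $\fn^\C$ via the isomorphism $x\mapsto kx$, noting that although $\sigma'=-\sigma$ on $k\fn^\C$ the two sign changes cancel in~(\ref{eq:lainvact}); on the extra generators one checks directly using $\lai\cdot,\cdot\rai$ and $[x^\fa,a]=-[x,a^\fa]^\fa$ (valid now on $\Wo$ since $k^\fa=k$) that $\hat\xi$ and $\hat\xi^\fa$ act as mutual $\wa$-adjoints, with $k$ self-adjoint. Unitarity of the $\fg$-action then follows because $\fg$ is spanned by skew-adjoint elements $\hat\lambda$ with $\lambda=-\lambda^\la$, $\hat\Lambda-\hat\Lambda^\fa$, and $\hat\xi-\hat\xi^\fa$, all of which therefore act skew-symmetrically on $\Wo$.

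The main obstacle I anticipate is purely bookkeeping: keeping the signs straight in the bosonized brackets, where $k$ anticommutes with the odd generators $a_\xi,a_\xi^\fa$ but commutes with the even ones, and simultaneously tracking that the induced conjugation on $k\fn^\C$ is $\sigma'=-\sigma$ rather than $\sigma$ — it is precisely the interplay of these two sign flips that makes~(\ref{eq:lainvact}) come out unchanged and the representation unitary. Once the correct signs are fixed, invariance and the $*$-property are short computations, and irreducibility is the standard ``cyclic-vector'' argument exploiting that the $\fn^\C$-generators connect the $k\fn^\C$-part to the one-dimensional $\C[k]$-part in both directions.
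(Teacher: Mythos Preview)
Your approach is essentially the paper's own: the paper's proof is a one-line reference to relations (\ref{eq:krel}), (\ref{eq:nrelo1})--(\ref{eq:nrelo4}) and definition (\ref{eq:defipA}), and you are simply spelling out the verification those relations encode. The only slips are bookkeeping ones you yourself anticipated---e.g.\ $[\hat\xi^\fa,k]=-2k\hat\xi^\fa$ (not $+2k\hat\xi^\fa$), cf.\ (\ref{eq:nrelo8}), and $[\hat\xi,k\hat\tau^\fa]=-\tfrac12\{\tau,\xi\}k$---but these signs do not affect invariance, irreducibility, or the $*$-property.
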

\begin{proof}
This follows with relations (\ref{eq:krel}), (\ref{eq:nrelo1}), (\ref{eq:nrelo2}), (\ref{eq:nrelo3}), (\ref{eq:nrelo4}) and definition (\ref{eq:defipA}).
\end{proof}
Explicitly, the additional non-vanishing relations are,
\begin{align}
 & [\hat{\xi},k\hat{\xi'}^\fa]=-k \acl \hat{\xi},\hat{\xi'}^\fa\acr
  =-\frac{1}{2}\{\xi',\xi\} k,
  \label{eq:nrelo5} \\
 & [\hat{\xi}^\fa,k\hat{\xi'}]=-k \acl \hat{\xi}^\fa,\hat{\xi'}\acr
  =-\frac{1}{2}\{\xi,\xi'\} k,
  \label{eq:nrelo6} \\
 & [\hat{\xi},k]=-2 k \hat{\xi},
  \label{eq:nrelo7} \\
 & [\hat{\xi}^\fa,k]=-2 k \hat{\xi}^\fa.
  \label{eq:nrelo8}
\end{align}
We shall use the notation $(\mu,\tau,\eta)\defeq 1/2 \mu k + k \hat{\tau}^\fa+ k\hat{\eta}$ for elements of $\Wo$. Let $\lambda\in\tcl$, $\Lambda,\Lambda'\in\asl$, and $\xi,\xi'\in L$. The action of a general element of $\fg^\C$ on $\Wo$ then takes the form,
\begin{multline}
(\check{\lambda}+\check{\Lambda}^\fa+\check{\Lambda'}+\check{\xi}^\fa+\check{\xi'})(\mu,\tau,\eta)=\\
 (-\{\xi,\eta\}-\{\tau,\xi'\},\lambda^*\tau+\Lambda\eta-\overline{\mu}\tau,-\lambda\eta-\Lambda\tau-\mu\eta) .
 \label{eq:fgact}
\end{multline}
The inner product (\ref{eq:defipA}) is positive-definite also on $\Wo$. Moreover, as for $\fn^\C$ the equality of the adjoint with respect to the inner products in $\cF$ and in $\Wo$ is easily verified. Let $\sigma:\Wo\to\Wo$ be again the operation of taking the adjoint. Then, equation (\ref{eq:lainvact}) again characterizes precisely the Lie algebra $\fg^\C$ in the following sense. The action of any element $Z$ of $\fg^\C$ satisfies the equation and conversely, if $L$ is finite-dimensional, any operator $Z$ on $\Wo$ satisfying equation (\ref{eq:lainvact}) must be of the form $Z=\check{\lambda}+\check{\Lambda}^\fa+\check{\Lambda'}+\check{\xi}^\fa+\check{\xi'}$ with $\lambda,\Lambda,\Lambda',\xi,\xi'$ as above. Summarizing:
\begin{prop}
The Lie algebra $\fg^\C$ acts on $\Wo$ through operators satisfying equation (\ref{eq:lainvact}). If $L$ is finite-dimensional any operator on $\fn^\C$ satisfying equation (\ref{eq:lainvact}) arises in this way.
\end{prop}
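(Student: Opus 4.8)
The plan is to establish the two halves of the statement separately: the forward direction (that the adjoint action of $\fg^\C$ satisfies (\ref{eq:lainvact})) for arbitrary $L$, and the converse only when $\dim L<\infty$. Throughout I write $\sigma\colon\Wo\to\Wo$ for the operation $a\mapsto a^\fa$ — which does preserve $\Wo=k\fn^\C\oplus\C[k]$ since $k^\fa=k$ and $\fn^\C$ is adjoint-invariant — and $\lai\cdot,\cdot\rai$ for the inner product (\ref{eq:defipA}) restricted to $\Wo$. The forward direction I would argue exactly as for $\fge^\C$ on $\fn^\C$: by the preceding Proposition the adjoint action of $\fg^\C$ on $\Wo$ is a $*$-representation, so $\check{x^\fa}=\check{x}^\wa$ for all $x\in\fg^\C$; combining this with the general $*$-algebra identity $[x^\fa,a]=-[x,a^\fa]^\fa$ and with $a^\fa=\sigma a$ gives, for $a\in\Wo$,
\begin{equation*}
 \check{x}^\wa a=\check{x^\fa}a=[x^\fa,a]=-\big([x,a^\fa]\big)^\fa=-\sigma\big(\check{x}(\sigma a)\big),
\end{equation*}
that is, $\check{x}^\wa=-\sigma\check{x}\sigma$, which is (\ref{eq:lainvact}).

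For the converse, assume $\dim_\C L=n$, so $\dim_\C\Wo=2n+1$. The key move is to recast (\ref{eq:lainvact}) as a statement about a bilinear form. First I would note that $\sigma$ is anti-unitary for $\lai\cdot,\cdot\rai$ — from (\ref{eq:defipA}) and $\sigma a=a^\fa$ one reads off $\lai\sigma x,\sigma y\rai=\lai y,x\rai=\overline{\lai x,y\rai}$ — and then set $B(v,w)\defeq\lai\sigma v,w\rai$. Using $\sigma^2=\mathrm{id}$, the anti-unitarity of $\sigma$ and the Hermitian symmetry of $\lai\cdot,\cdot\rai$, one checks routinely that $B$ is a $\C$-bilinear, symmetric, nondegenerate form on $\Wo$. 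Since $\lai v,Zw\rai=B(\sigma v,Zw)$ for any $\C$-linear $Z$, and $Z^\wa$ is characterized by $\lai Z^\wa v,w\rai=\lai v,Zw\rai$, the equation $Z^\wa=-\sigma Z\sigma$ is equivalent (replacing $v$ by $\sigma v$ and using that $\sigma$ is bijective) to $B(Zv,w)+B(v,Zw)=0$ for all $v,w$. Thus the set of operators on $\Wo$ satisfying (\ref{eq:lainvact}) is precisely the orthogonal Lie algebra $\so(B)$ of a nondegenerate symmetric form on a $(2n+1)$-dimensional space — the coordinate-free analogue here of the matrix equation (\ref{eq:fgesu}) that identified $\fge^\C$ with $\so(2n;\C)$ — and hence has complex dimension $\binom{2n+1}{2}=n(2n+1)$.

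It then remains to compare dimensions. By the preceding Proposition $\fg^\C$ acts faithfully on $\Wo$, so $x\mapsto\check{x}$ is an injective $\C$-linear map of $\fg^\C$ into $\mathrm{End}(\Wo)$ whose image, by the forward direction, lies inside $\so(B)$. Since $\dim_\C\fg^\C=\dim_\C\fh^\C+\dim_\C\fm^\C+\dim_\C\fn^\C=n^2+n(n-1)+2n=n(2n+1)$ (equivalently, $\fg^\C=\so(2n+1;\C)$, as recalled at the start of Section~\ref{sec:dynall}), this image is a subspace of $\so(B)$ of the same finite dimension, hence equals $\so(B)$. Therefore every operator $Z$ on $\Wo$ solving (\ref{eq:lainvact}) is of the form $\check{\lambda}+\check{\Lambda}^\fa+\check{\Lambda'}+\check{\xi}^\fa+\check{\xi'}$ — which is the asserted converse (with $\Wo$, rather than $\fn^\C$, understood in its second sentence).

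The routine parts are the $*$-algebra bookkeeping in the forward direction and the symmetry/nondegeneracy checks for $B$; the only point that genuinely needs care is keeping track of the conjugate-linearity of both $\sigma$ and $^\wa$, so that $B$ comes out $\C$-bilinear (not sesquilinear) and $\so(B)$ really is $\so(2n+1;\C)$. Once (\ref{eq:lainvact}) has been reformulated as membership in $\so(B)$, the finite-dimensional statement is a pure dimension count, and I do not anticipate a further obstacle there.
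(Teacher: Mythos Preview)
Your argument is correct and follows the paper's line of reasoning: the forward direction is exactly the paper's $*$-representation argument (given for $\fge^\C$ on $\fn^\C$ and carried over verbatim to $\Wo$), and your converse is the coordinate-free version of what the paper does with the matrix $\tilde{\sigma}$ immediately after the proposition---recognizing the solution set of (\ref{eq:lainvact}) as the orthogonal Lie algebra of a nondegenerate symmetric bilinear form on the $(2n+1)$-dimensional space $\Wo$, then matching dimensions. The only difference is cosmetic: the paper writes down $\tilde{\sigma}$ in a basis and reads off $\so(2n+1;\C)$ from the matrix equation (\ref{eq:fgesu}), whereas you package the same content in the bilinear form $B(v,w)=\lai\sigma v,w\rai$ and do the dimension count explicitly; both routes rely on the faithfulness established in the preceding proposition.
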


If $L$ is finite-dimensional, of dimension $n$, we may identify operators on $\Wo$ with $(2n+1)\times(2n+1)$ matrices. Equation (\ref{eq:lainvact}) then turns into equation (\ref{eq:fgesu}) with the matrix $\tilde{\sigma}$ in a suitable basis given by
\begin{equation}
\tilde{\sigma}=\begin{pmatrix} 1 & 0 & 0 \\ 0 & 0 & \one_{n\times n} \\ 0 & \one_{n\times n} & 0
\end{pmatrix} .
\end{equation}
This makes then explicit the structure of $\fg^\C$ as $\so(2n+1;\C)$.

As for the even dynamical Lie algebra $\fge^\C$, the inner product on $\Wo$ induces a Hilbert-Schmidt inner product on $\fg^\C$. Given $x,y\in \fg^\C$ define,
\begin{equation}
 \lhs x,y\rhs \defeq\tr_{\Wo}(\check{x}^\wa \check{y}) .
\label{eq:gip}
\end{equation}
For generic elements of $\fg^\C$, parametrized as above, this yields,
\begin{multline}
 \lhs \check{\lambda_1}+\check{\Lambda_1}^\wa+\check{\Lambda_1'}+\check{\xi_1}^\wa+\check{\xi_1'},
  \check{\lambda_2}+\check{\Lambda_2}^\wa+\check{\Lambda_2'}+\check{\xi_2}^\wa+\check{\xi_2'} \rhs \\
 =2\tr_L(\lambda_1^\la \lambda_2)-\tr_L(\Lambda_2'\Lambda_1')-\tr_L(\Lambda_1\Lambda_2) +2\{\xi_2,\xi_1\} +2\{\xi_1',\xi_2'\}.
\label{eq:gipeval}
\end{multline}
This makes positive definiteness manifest and also shows that restriction to $\fge^\C$ recovers the inner product (\ref{eq:ipe}). Invariance under the action of the real Lie algebra $\fg$ follows analogous to equation (\ref{eq:invrla}). Thus, the inner product $\lhs\cdot,\cdot\rhs$ restricted to $\fg$ is a negative multiple of its Killing form.

\section{Parametrization of the dynamical group}
\label{sec:gparam}

The groups $H$, $\Ge$, $G$ corresponding to the Lie algebras $\fh$, $\fge$, $\fg$ shall be called the \emph{degree-preserving}, \emph{even} or \emph{(full) dynamical group} respectively. We shall also be interested in their complexified versions corresponding to the complexified Lie algebras.

We define $H^\C$ to be the set of operators on $\cF$ arising from the exponentiation of $\hat{\lambda}\in\fh^\C$ where $\lambda\in\tcl$.
\begin{prop}
Under the adjoint representation $H^\C$ acts precisely as the group of all invertible operators on $\fn_+$ of the form $\one_{\fn_+}+x$ where $x:\fn_+\to \fn_+$ is of trace class.
\end{prop}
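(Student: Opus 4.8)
The plan is to translate the group statement into a statement about the Lie algebra action, which is already fully understood: by Proposition on the adjoint action of $\fh^\C$, the element $\hat\lambda$ acts on $\fn_+$ precisely as the trace class operator $\lambda$ (up to the established anti-isomorphism $\tcl\cong\fh^\C$, and with the minor index-book-keeping that $\check\lambda$ acts as $\lambda^\la$ on the $\fn_+$-component per~(\ref{eq:nrel2}), but this does not affect the class of operators obtained). So exponentiating $\hat\lambda$ on $\cF$ produces, under the adjoint representation on $\fn_+$, the operator $\exp$ of a trace class operator. Thus one inclusion --- that every element of $H^\C$ acts on $\fn_+$ as $\exp(x)$ with $x\in\tcl$ --- is essentially immediate once one checks that the adjoint representation intertwines the exponential on $\cF$ with the exponential on $\mathrm{End}(\fn_+)$, i.e.\ that $\mathrm{Ad}_{\exp(\hat\lambda)} = \exp(\check{\hat\lambda})$ on $\fn_+$, which holds because $\fn_+$ is a finite-energy invariant subspace on which $\check{\hat\lambda}$ is a bounded operator.

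The first real step is therefore to identify the image of $\{\exp x : x\in\tcl\}$ inside $\mathrm{GL}(\fn_+)$. One shows that $\exp(x) = \one + y$ with $y = \exp(x) - \one = \sum_{k\ge 1} x^k/k!$ trace class (the series converges in trace norm since $\|x^k\|_{\tcl}\le \|x\|_{\mathrm{op}}^{k-1}\|x\|_{\tcl}$), so the image lies in the stated group $\{\one+x: x\in\tcl,\ \one+x \text{ invertible}\}$. For the reverse inclusion --- every invertible $\one+x$ with $x$ trace class arises as $\exp$ of a trace class operator --- the plan is to invoke the holomorphic functional calculus: since $\one+x$ is invertible and $x$ is compact, the spectrum of $\one+x$ is a sequence of eigenvalues accumulating only at $1$, none equal to $0$, so a branch of $\log$ can be chosen holomorphic on a neighbourhood of the spectrum, giving $x' \defeq \log(\one+x)$ with $\exp(x') = \one+x$. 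One then checks $x'\in\tcl$: writing $\log(\one+x) = x\cdot h(x)$ for $h$ holomorphic near the spectrum, $x'$ is a trace class operator times a bounded operator, hence trace class. Finally one notes $\exp(x')$ is realized on $\cF$ by exponentiating the corresponding element of $\fh^\C$, so $\one+x$ is in the image of $H^\C$ under the adjoint representation.

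A subsidiary point to address is whether the adjoint representation of $H^\C$ on $\fn_+$ is faithful, so that "acts precisely as" is unambiguous; this follows from the faithfulness of the $\fh^\C$-representation on $\fn_+$ established earlier, together with the fact that two elements of $H^\C$ with the same adjoint action on $\fn_+$ differ by an element acting trivially, which must then be a scalar on $\cF$ --- and in fact the identity, since $H^\C$ is generated by exponentials of $\hat\lambda$ which have no central component beyond what the current operators already carry. It is worth being slightly careful here, as $\hat\lambda$ was defined with a normal-ordering constant $-\frac12\tr(\lambda)$, but this constant is precisely what makes the map $\lambda\mapsto\hat\lambda$ a Lie algebra homomorphism and does not enter the adjoint action, so it is harmless for this statement.

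The main obstacle I anticipate is the reverse inclusion in the infinite-dimensional setting: ensuring that $\log(\one+x)$ is genuinely trace class rather than merely Hilbert--Schmidt or compact, and handling the case where the holomorphic branch of $\log$ cannot be taken globally (if eigenvalues of $\one+x$ wind around the origin or approach the negative real axis). The clean fix is the factorization $\log(\one+z) = z\,h(z)$ with $h$ entire on $\C\setminus(-\infty,-1]$ and, more to the point, holomorphic on any simply connected neighbourhood of $\spec(\one+x)$ avoiding $0$; since $\spec(\one+x)\setminus\{1\}$ is a discrete set in $\C^\times$, such a neighbourhood and branch always exist, and then $x' = x\,h(x)\in\tcl$ by the ideal property. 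Everything else --- convergence of the relevant operator series, boundedness of the pieces, intertwining of exponentials --- is routine estimation with the operator and trace norms.
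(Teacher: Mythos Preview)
Your proposal is correct and follows essentially the same route as the paper: identify the adjoint action of $\exp(\hat\lambda)$ on $\fn_+$ with $\exp(-\lambda)$ on $L$, then use the holomorphic functional calculus to pass between $\exp$ and $\log$ and conclude that the image is exactly $\{\one+x:x\in\tcl,\ \one+x\text{ invertible}\}$. Your verification that the resulting operators are trace class---via the trace-norm convergent series for $\exp(x)-\one$ in one direction and the factorization $\log(\one+x)=x\,h(x)$ with $h(x)$ bounded in the other---is in fact cleaner than the paper's spectral-asymptotics argument, which as written leans on absolute summability of eigenvalues and would need the ideal property to be fully rigorous for non-normal $\lambda$; your faithfulness discussion is an unnecessary aside (the proposition is only about the image), but harmless.
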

\begin{proof}
The adjoint action of $\exp(\hat{\lambda})$ on $\fn_+$ is given by $\exp(\check{\lambda}) \hat{\xi}=\hat{\xi'}$ with $\xi'=\exp(-\lambda)\xi$. In the following we identify $L$ with $\fn_+$ via $\xi\mapsto\hat{\xi}$ for ease of notation. Set $x\defeq \exp(-\lambda)-\one_L$. By the holomorphic functional calculus \cite{Sch:pfunctanal} $\spec(x)=\spec(\exp(-\lambda)-\one_L)=\exp(-\spec(\lambda))-1$, where $\spec$ denotes the spectrum. $\lambda$ being compact, its spectrum is bounded and discrete with only the origin a possible accumulation point. Thus $x$ also has a bounded and discrete spectrum with only the origin as accumulation point. Moreover, $\lambda$ being trace class the sum over its spectrum (with multiplicities) converges absolutely. So does then the sum over the spectrum of $x$ as the two spectra approach each other near the origin. Thus, $x$ is also trace class. Recall also that by construction $\one_L+x$ is invertible. Conversely, suppose now $x\in\tcl$ such that $\one_L+x$ is invertible. In particular, there is some neighborhood of the origin which does not contain any point of the spectrum of $\one_L+x$. Also the only possible accumulation point of $\spec(\one_L+x)$ is away from the origin. So we can take a logarithm of $\one_L+x$. What is more, any branch of the logarithm that coincides with the standard branch near $1$ will map a neighborhood of $1$ to a neighborhood of $0$. By the same argument as before, $\log(\one_L+x)$ is then trace class. It remains to see that invertible operators of the form $\one_L+x$ with $x\in\tcl$ form a group. First note that $(\one_L+x)(\one_L+y)=\one_L+x+y+xy$ and if $x,y\in\tcl$ then $x+y+xy\in\tcl$. Now, the inverse is a holomorphic function away from the origin. So $\spec((\one_L+x)^{-1})=\spec(\one_L+x)^{-1}$. Finally, the possible accumulation point $1$ is mapped to itself, approximately isometrically. So $(\one_L+x)^{-1}-\one_L$ is trace class.
\end{proof}

We define $H$ to be the set of operators on $\cF$ arising from the exponentiation of $\hat{\lambda}\in\fh$ where $\lambda\in\tcl$ and $\lambda^\la=-\lambda$. Note that this is the same as restricting to the unitary elements of $H^\C$, as a suitable branch of the logarithm can always be chosen.

To define the groups $\Ge$ and $G$ one strategy is to use the representations $\fn^\C$ and $\Wo$ to this end. Exponentiating the Lie algebra representation condition (\ref{eq:lainvact}) for $\fge^\C$ and $\fg^\C$ yields,
\begin{equation}
 Z^\wa=\sigma Z^{-1} \sigma,
\label{eq:gcinvact}
\end{equation}
as a possible definition of $\Ge^\C$ and $G^\C$ in terms of operators $Z$ on $\fn^\C$ and $\Wo$ respectively. The real groups $\Ge$ and $G$ arise then from the additional restriction of unitarity, i.e., $Z^\wa=Z^{-1}$. A weakness of this approach is that it does not immediately yield a description of the action of the groups on the Fock space $\cF$. After all, $\Ge$ and $G$ should really be groups of operators on $\cF$. Even if $L$ is finite-dimensional, it is necessary to exclude a singular subset from $\Ge$ to describe its action on $\cF$ \cite{Ber:grossneveu}. (In Berezin's article this arises as an invertibility condition on a submatrix of the representation matrix.) If $L$ is infinite-dimensional there are additional difficulties. $G$ and $\Ge$ defined in this way would almost certainly be too big, as e.g.\ the trace class restrictions on the operators defining $\fm$ and $\fn$ have no counterpart on the group side.

We shall use a more algebraic definition of $G$, directly in terms of the action on $\cF$. Define $\Galg^\C$ to be the group consisting of all finite products of exponentials of elements of $\fg^\C$ as operators on $\cF$. Let $\Galg'^\C$ be the completion of $\Galg^\C$ in the operator norm. Define $G^\C$ as the intersection of $\Galg'^\C$ with the invertible operators on $\cF$. Note that $G^\C$ is indeed a group and the subgroup $\Galg^\C\subseteq G^\C$ is dense. Moreover, the elements of $G^\C$ satisfy the condition (\ref{eq:gcinvact}) on $\Wo$.

It will be convenient to establish identities between certain products of exponentials of operators in $\fg^\C$. These are obtained from the relations of Section~\ref{sec:CAR} using suitable versions of Baker-Campbell-Hausdorff formulas.
\begin{lem}
Let $\lambda\in\tcl$. Let $\Lambda\in\asl$. Then, $\Lambda':L\to L$ defined by,
\begin{equation}
 \Lambda'\defeq e^{-\lambda} \Lambda e^{\lambda^\la}
\end{equation}
is in $\asl$. Moreover, we have the identity of operators on $\cF$,
\begin{equation}
 e^{\hat{\lambda}} e^{\hat{\Lambda}}=e^{\hat{\Lambda'}} e^{\hat{\lambda}} .
\label{eq:rellL}
\end{equation}
\end{lem}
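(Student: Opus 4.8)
The plan is to establish both claims — that $\Lambda'\in\asl$ and the operator identity \eqref{eq:rellL} — by reducing everything to the adjoint representation on a faithful module, most conveniently $\fn^\C\subset\cA'$, where the relevant commutation relations \eqref{eq:mrel1}, \eqref{eq:mrel2} take a transparent linear-algebraic form. First I would verify that $\Lambda'\defeq e^{-\lambda}\Lambda e^{\lambda^\la}$ is again anti-symmetric: given $\xi,\tau\in L$, compute $\{\xi,\Lambda'\tau\}=\{\xi,e^{-\lambda}\Lambda e^{\lambda^\la}\tau\}=\{e^{-\lambda^\la}\xi,\Lambda e^{\lambda^\la}\tau\}$, then use anti-symmetry of $\Lambda$ to swap the two arguments, obtaining $-\{e^{\lambda^\la}\tau,\Lambda e^{-\lambda^\la}\xi\}$ — wait, one must be careful that the conjugate-linearity of $\Lambda$ interacts correctly with $e^{\lambda^\la}$; I would instead write $\{e^{\lambda}e^{-\lambda^\la}\xi,\;\Lambda e^{\lambda^\la}\tau\}$ is not quite right either, so the cleanest route is: $\{\xi,\Lambda'\tau\}=\{e^{-\lambda^\la}\xi,\Lambda e^{\lambda^\la}\tau\}=-\{e^{\lambda^\la}\tau,\Lambda e^{\lambda^\la}\cdot\}$... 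The subtle point, and the one I expect to require the most care, is tracking how $\Lambda$'s conjugate-linearity turns $e^{\lambda^\la}$ into $e^{\overline{\lambda^\la}}$-type factors; but since by the excerpt's own observation $\im\Lambda(\eta)=-\Lambda(\im\eta)$, one has $\Lambda e^{\lambda^\la}=e^{\overline{\lambda^\la}}\Lambda=e^{\lambda^\Ts}\Lambda$ heuristically, and the anti-symmetry identity $\{\xi,\Lambda\tau\}=-\{\tau,\Lambda\xi\}$ is preserved under precisely the substitution $\xi\mapsto e^{\lambda^\la}\xi$, $\tau\mapsto e^{\lambda^\la}\tau$ because $e^{\lambda^\la}$ is a complex-linear operator applied to both slots symmetrically after the adjoint is moved across. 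I would also note $(\Lambda')^2=e^{-\lambda}\Lambda e^{\lambda^\la}e^{-\lambda}\Lambda e^{\lambda^\la}$; since $\Lambda^2$ is trace class and $e^{\pm\lambda},e^{\pm\lambda^\la}$ are bounded, standard trace-class ideal properties give $(\Lambda')^2\in\tcl$, so $\Lambda'\in\asl$.

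Second, for the operator identity \eqref{eq:rellL}, the approach is to differentiate. Define the one-parameter family $F(t)\defeq e^{t\hat\lambda}e^{\hat\Lambda}e^{-t\hat\lambda}$ of operators on $\cF$ and show it equals $e^{\hat{\Lambda(t)}}$ where $\Lambda(t)\defeq e^{-t\lambda}\Lambda e^{t\lambda^\la}$, so that at $t=1$, after multiplying on the right by $e^{\hat\lambda}$, we get \eqref{eq:rellL}. Since $\fm_+$ is abelian, $e^{\hat{\Lambda(t)}}$ is entire in its argument, and $\frac{\xd}{\xd t}e^{\hat{\Lambda(t)}}=\hat{\Lambda(t)}'\,e^{\hat{\Lambda(t)}}$ where $\Lambda(t)'=\frac{\xd}{\xd t}\Lambda(t)=-\lambda\Lambda(t)-\Lambda(t)\lambda^\la$; on the other hand $\frac{\xd}{\xd t}F(t)=\hat\lambda F(t)-F(t)\hat\lambda=(\ad_{\hat\lambda})F(t)$, and one checks, using \eqref{eq:mrel1} in the form $[\hat\lambda,\hat{\Lambda(t)}]=\hat{\Lambda(t)'}$ together with the fact that $\ad_{\hat\lambda}$ is a derivation commuting through the exponential series in $\fm_+$, that $[\hat\lambda,e^{\hat{\Lambda(t)}}]=\hat{\Lambda(t)'}e^{\hat{\Lambda(t)}}$. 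Thus both $F(t)$ and $e^{\hat{\Lambda(t)}}$ satisfy the same linear ODE with the same initial value $e^{\hat\Lambda}$ at $t=0$, so they coincide.

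The main obstacle I anticipate is not the formal ODE argument — that is routine given the bracket relations — but the \emph{analytic} justification that these exponentials and their $t$-derivatives make sense as bounded operators on $\cF$ and that differentiation under the exponential is legitimate. Here the norm estimate \eqref{eq:Lopn}, $\|\hat\Lambda^\fa\|_{\mathrm{op}}^2=-\tfrac12\tr(\Lambda^2)$, is the key tool: it shows $\hat{\Lambda(t)}$ is bounded with norm controlled by $\|e^{-t\lambda}\|_{\mathrm{op}}\|e^{t\lambda^\la}\|_{\mathrm{op}}\|\hat\Lambda\|_{\mathrm{op}}$, locally uniformly in $t$, so the series for $e^{\hat{\Lambda(t)}}$ converges in operator norm and is termwise $t$-differentiable. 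Since $\hat\lambda$ is in general unbounded when $L$ is infinite-dimensional, I would phrase the derivation either on the dense domain of states of finite particle number, where all operators act algebraically and the BCH-type manipulations are literal identities, and then extend \eqref{eq:rellL} to all of $\cF$ by boundedness of both sides and density; or, alternatively, absorb $e^{t\hat\lambda}$ by working inside $\Galg^\C$ and invoking that conjugation by $e^{t\hat\lambda}$ implements the linear map $\Lambda\mapsto\Lambda(t)$ on the level of the generators $a_\xi$, which follows from \eqref{eq:nrel1}–\eqref{eq:nrel2}. Either way the content is that the exponentiated adjoint action on the CAR generators is $a_\xi\mapsto a_{e^{-\lambda}\xi}$, and \eqref{eq:rellL} is just the statement that this conjugation carries $\hat\Lambda=\tfrac12\sum_i a_{\zeta_i}a_{\Lambda\zeta_i}$ to $\hat{\Lambda'}$.
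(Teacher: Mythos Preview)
Your overall strategy---checking $\Lambda'\in\asl$ directly and then establishing the operator identity by an ODE/conjugation argument driven by relation \eqref{eq:mrel1}---is precisely the kind of Baker--Campbell--Hausdorff computation the paper intends (it gives no detailed proof here, only the remark preceding these lemmas that the identities ``are obtained from the relations of Section~\ref{sec:CAR} using suitable versions of Baker--Campbell--Hausdorff formulas'').

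There is, however, a genuine error in the execution, and it is the reason your anti-symmetry verification kept going in circles. Your central claim, that $\Lambda(t)\defeq e^{-t\lambda}\Lambda e^{t\lambda^\la}$ satisfies $\dot\Lambda(t)=-\lambda\Lambda(t)-\Lambda(t)\lambda^\la$, is false: differentiating carefully (the real parameter $t$ passes through the conjugate-linear $\Lambda$ unchanged) gives $\dot\Lambda(t)=-\lambda\Lambda(t)+\Lambda(t)\lambda^\la$, which does \emph{not} match \eqref{eq:mrel1}. The flow that does match \eqref{eq:mrel1} is $t\mapsto e^{-t\lambda}\Lambda e^{-t\lambda^\la}$, yielding $\Lambda'=e^{-\lambda}\Lambda e^{-\lambda^\la}$ at $t=1$. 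A quick check in $L=\C^2$ with $\lambda=\im\,\mathrm{diag}(1,0)$ and $\Lambda\zeta_1=\zeta_2$, $\Lambda\zeta_2=-\zeta_1$ confirms that $e^{\hat\lambda}\hat\Lambda e^{-\hat\lambda}=e^{-\im}\hat\Lambda$, which agrees with $e^{-\lambda}\Lambda e^{-\lambda^\la}$ and not with $e^{-\lambda}\Lambda e^{\lambda^\la}$. Correspondingly, $e^{-\lambda}\Lambda e^{\lambda^\la}$ is not anti-symmetric in general, whereas $e^{-\lambda}\Lambda e^{-\lambda^\la}$ is, by the one-line computation
\[
\{\xi,e^{-\lambda}\Lambda e^{-\lambda^\la}\tau\}=\{e^{-\lambda^\la}\xi,\Lambda e^{-\lambda^\la}\tau\}=-\{e^{-\lambda^\la}\tau,\Lambda e^{-\lambda^\la}\xi\}=-\{\tau,e^{-\lambda}\Lambda e^{-\lambda^\la}\xi\}.
\]
In short, the exponent on the right of the displayed formula for $\Lambda'$ should carry a minus sign; once that is corrected, your ODE argument and your trace-class verification go through cleanly. (A minor aside: since $\lambda\in\tcl$, the operator $\hat\lambda$ is in fact bounded on $\cF$, so the domain worries you raise about $e^{t\hat\lambda}$ do not arise here.)
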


\begin{lem}
Let $\lambda\in\tcl$ and $\xi\in L$. Define $\xi'\defeq e^{-\lambda} \xi$. Then,
\begin{equation}
e^{\hat{\lambda}} e^{\hat{\xi}} = e^{\hat{\xi'}} e^{\hat{\lambda}} .
\label{eq:rellx}
\end{equation}
\end{lem}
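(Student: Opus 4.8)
The plan is to prove this by a Baker--Campbell--Hausdorff argument analogous to (and in fact simpler than) the one needed for Lemma~\ref{eq:rellL}. The key observation is that the commutation relation (\ref{eq:nrel1}), $[\hat{\lambda},\hat{\xi}]=\hat{\xi'}$ with $\xi'=-\lambda\xi$, says exactly that the adjoint action $\check{\lambda}=\ad(\hat{\lambda})$ restricted to $\fn_+$ is, under the identification $\fn_+\cong L$ via $\xi\mapsto\hat{\xi}$, the operator $\xi\mapsto-\lambda\xi$. Since $\fn_+$ together with $\fh^\C$ closes under the bracket with $[\fh^\C,\fn_+]\subseteq\fn_+$, exponentiating the adjoint action gives $e^{\hat{\lambda}}\,\hat{\xi}\,e^{-\hat{\lambda}}=\widehat{e^{-\lambda}\xi}=\hat{\xi'}$ as operators on $\cF$, where $\xi'=e^{-\lambda}\xi$. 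This is exactly the intertwining identity one wants, and it immediately gives $e^{\hat{\lambda}}e^{\hat{\xi}}=e^{\hat{\lambda}}e^{\hat{\xi}}e^{-\hat{\lambda}}e^{\hat{\lambda}}=\exp\!\bigl(e^{\hat{\lambda}}\hat{\xi}\,e^{-\hat{\lambda}}\bigr)e^{\hat{\lambda}}=e^{\hat{\xi'}}e^{\hat{\lambda}}$, using that conjugation by an invertible operator commutes with the exponential series.

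First I would make precise that $\hat{\lambda}$ and $\hat{\xi}$ are genuine bounded operators on $\cF$, so that their exponentials are bounded and the manipulations are legitimate: $\hat{\xi}=\tfrac{1}{\sqrt2}a_\xi$ is bounded by (\ref{eq:xopn}), and $\hat{\lambda}$ is bounded because $\lambda$ is trace class (hence Hilbert--Schmidt, hence $\hat\lambda$ lies in $\cAn'$; cf.\ the earlier discussion). Next I would verify the finite-order identity $[\hat{\lambda},\hat{\xi}]=\widehat{(-\lambda)\xi}$ — this is just relation (\ref{eq:nrel1}) — and iterate it: $\ad(\hat{\lambda})^n\hat{\xi}=\widehat{(-\lambda)^n\xi}$, so that $\sum_n \tfrac{1}{n!}\ad(\hat{\lambda})^n\hat{\xi}=\widehat{e^{-\lambda}\xi}$, with norm convergence guaranteed by $\|(-\lambda)^n\xi\|_L\le\|\lambda\|_{\mathrm{op}}^n\|\xi\|_L$. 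Then $e^{\hat{\lambda}}\hat{\xi}e^{-\hat{\lambda}}=\widehat{e^{-\lambda}\xi}=\hat{\xi'}$ by the standard identity $e^{A}Be^{-A}=\sum_n\tfrac1{n!}\ad(A)^nB$ valid for bounded $A,B$. Finally I would conclude with the one-line algebraic step above, noting $e^{\hat{\lambda}}e^{\hat{\xi}}e^{-\hat{\lambda}}=\exp(e^{\hat{\lambda}}\hat{\xi}e^{-\hat{\lambda}})$ since $e^{\hat{\lambda}}(\,\cdot\,)e^{-\hat{\lambda}}$ is an algebra automorphism of $\cA'$.

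The only genuine subtlety — and the place I would be most careful — is the infinite-dimensional bookkeeping: one must know that $e^{-\lambda}$ is a well-defined bounded operator on $L$ (immediate since $\lambda$ is bounded) and that all the rearrangements of the BCH/conjugation series converge in operator norm on $\cF$ rather than merely formally. Both follow from the operator-norm estimates just mentioned, so there is no real obstacle; the result is essentially a corollary of relation (\ref{eq:nrel1}) plus elementary operator calculus. I would also remark that, unlike in Lemma~\ref{eq:rellL}, no separate verification that $\xi'$ lies in the correct space is needed, since $L$ is closed under bounded operators.
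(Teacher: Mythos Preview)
Your proposal is correct and follows precisely the route the paper indicates: the paper states this lemma without proof, remarking only that such identities ``are obtained from the relations of Section~\ref{sec:CAR} using suitable versions of Baker--Campbell--Hausdorff formulas,'' and your argument---iterate relation~(\ref{eq:nrel1}) to get $e^{\hat\lambda}\hat\xi e^{-\hat\lambda}=\widehat{e^{-\lambda}\xi}$, then conjugate the exponential---is exactly that. The boundedness and convergence checks you include are appropriate and not in tension with anything in the paper.
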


\begin{lem}
Let $\xi_1,\xi_2\in L$ and $\Lambda_1,\Lambda_2\in\asl$. Define
\begin{equation}
\Lambda'(\eta)\defeq \xi_2\{\eta,\xi_1\}-\xi_1\{\eta,\xi_2\}.
\end{equation}
Then $\Lambda'\in\asl$ and,
\begin{equation}
 e^{\hat{\Lambda_1}+\hat{\xi_1}} e^{\hat{\Lambda_2}+\hat{\xi_2}} = e^{\hat{\Lambda_1}+\hat{\Lambda_2}+\hat{\Lambda'}/2+\hat{\xi_1}+\hat{\xi_2}} .
\label{eq:relpm}
\end{equation}
\end{lem}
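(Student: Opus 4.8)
The plan is to work inside the nilpotent-type Lie subalgebra $\fm^\C\oplus\fn^\C$ of $\fg^\C$ and use a Baker–Campbell–Hausdorff argument. First I would verify that $\Lambda'$ as defined is anti-symmetric: for $\eta,\eta'\in L$ one computes $\{\eta',\Lambda'\eta\}=\{\eta',\xi_2\}\{\eta,\xi_1\}-\{\eta',\xi_1\}\{\eta,\xi_2\}$, which is manifestly anti-symmetric under $\eta\leftrightarrow\eta'$ (the two terms swap with a sign). Boundedness and the trace-class property of $(\Lambda')^2$ are immediate since $\Lambda'$ has finite rank (rank at most $2$), so $\Lambda'\in\asl$. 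In fact, comparing with relation (\ref{eq:nrel5}) one recognizes $\hat{\Lambda'}=[\hat{\xi_1},\hat{\xi_2}]$ up to the sign/ordering already fixed there, so the $\Lambda'/2$ appearing in the exponent on the right is exactly the ``correction term'' one expects from BCH applied to the sum $\hat{\xi_1}+\hat{\xi_2}$.

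The core of the argument is the following. The elements $\hat{\Lambda_1},\hat{\Lambda_2},\hat{\Lambda'},\hat{\xi_1},\hat{\xi_2}$ all lie in $\fm_+\oplus\fn_+$. Inspecting the bracket relations (\ref{eq:nrel5}) and the vanishing of brackets among $\fm_+$ elements and of $[\fn_+,\fm_+]$ (relation (\ref{eq:nrel3}) pairs $\fn_+$ with $\fm_-$, not $\fm_+$), one sees that $\fm_+\oplus\fn_+$ is a two-step nilpotent Lie algebra: $[\fn_+,\fn_+]\subseteq\fm_+$ and all other brackets among these generators vanish, with $\fm_+$ central. For a two-step nilpotent Lie algebra the BCH formula truncates exactly:
\begin{equation}
 e^{A}e^{B}=e^{A+B+\frac12[A,B]} .
\end{equation}
Applying this with $A=\hat{\Lambda_1}+\hat{\xi_1}$ and $B=\hat{\Lambda_2}+\hat{\xi_2}$, the only surviving bracket is $[\hat{\xi_1},\hat{\xi_2}]=\hat{\Lambda'}$ (by (\ref{eq:nrel5}) with the stated $\Lambda'$), giving $A+B+\tfrac12[A,B]=\hat{\Lambda_1}+\hat{\Lambda_2}+\hat{\xi_1}+\hat{\xi_2}+\tfrac12\hat{\Lambda'}$, which is precisely the exponent on the right-hand side of (\ref{eq:relpm}).

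The one genuine subtlety — and what I expect to be the main obstacle — is justifying the truncated BCH identity as an identity of \emph{operators on $\cF$}, rather than merely a formal one, when $L$ is infinite-dimensional. Here the operators $\hat{\xi}$ and $\hat{\Lambda}$ are bounded (with norms controlled by (\ref{eq:xopn}) and (\ref{eq:Lopn})), so each exponential converges in operator norm; and the nilpotency is not merely formal but holds already at the level of these bounded operators, since the relevant iterated commutators of the $\hat{\xi_i},\hat{\Lambda_j}$ genuinely vanish or land in the central, mutually commuting space $\fm_+$. One clean way to make this rigorous is to note that on any fixed vector $\psi\in\cF$ the exponential series applied to $\psi$ can be rearranged freely (all operators involved commute well enough and act with norms bounded by fixed constants), or alternatively to observe that the identity holds on the dense subspace of finite-particle vectors by a direct finite computation — since on such vectors only finitely many terms contribute and the two-step nilpotency makes both sides terminate — and then extend by continuity using the operator-norm bounds. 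Either route reduces the claim to the finite-dimensional / algebraic BCH identity, which is standard. I would also double-check the factor of $\tfrac12$ and the sign conventions in $\hat{\Lambda}$ against (\ref{eq:nrel5}) to make sure the bookkeeping of the $\Lambda'/2$ term matches exactly.
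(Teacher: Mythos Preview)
Your proposal is correct and matches the paper's intended approach: the paper does not write out a proof of this lemma but simply remarks that all four exponential identities ``are obtained from the relations of Section~\ref{sec:CAR} using suitable versions of Baker-Campbell-Hausdorff formulas,'' and you have supplied precisely that argument for this case. Your identification of $\fp_+=\fm_+\oplus\fn_+$ as two-step nilpotent with $\fm_+$ central (so that BCH truncates to $e^{A}e^{B}=e^{A+B+\frac12[A,B]}$), together with the boundedness of all operators involved, is the right justification; one minor comment is that the cleanest way to make the identity rigorous for bounded operators is the standard ODE argument (both $t\mapsto e^{tA}e^{tB}$ and $t\mapsto e^{t(A+B)+\frac{t^2}{2}[A,B]}$ satisfy $F'(t)=(A+B+t[A,B])F(t)$ with $F(0)=\one$), which avoids any appeal to dense subspaces.
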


\begin{lem}
\label{lem:relpp}
Let $\xi,\xi'\in L$ and $\Lambda,\Lambda'\in\asl$. Assume that $\one-\Lambda'\Lambda$ is invertible. Then there is $\mu\in\tcl$ such that
\begin{equation}
 e^{-\mu}=\one - \Lambda'\Lambda .
\label{eq:ppmu}
\end{equation}
Set $b\defeq\{\xi,e^\mu\xi'\}$ and assume furthermore that $b\neq -2$. Then, there is $\beta\in\tcl$ such that
\begin{equation}
 e^{-\frac{1}{2}\beta}=\one+\frac{1}{2}\xi'\{e^{\mu^\la}\xi,\cdot\} .
\label{eq:ppbeta}
\end{equation}
Define moreover,
\begin{align}
 \nu & \defeq \frac{1}{2} \Lambda e^\mu \xi'\{\xi',\cdot\} \label{eq:ppnu} \\
 \nu'& \defeq \frac{1}{2} e^\mu \Lambda'\xi \{\xi,\cdot\} \label{eq:ppnup} \\
 \eta & \defeq 2\left(2+\overline{b}\right)^{-1}\left(e^{\mu^*}\xi-\Lambda e^\mu \xi'\right) \\
 \eta' & \defeq 2\left(2+b\right)^{-1}\left(e^{\mu}\xi'-e^\mu\Lambda' \xi\right) \\
 \Omega & \defeq \Lambda e^\mu -\left(2+\overline{b}\right)^{-1}\left(
 \Lambda e^\mu \xi'\{\cdot,e^{\mu^\la}\xi\}-e^{\mu^\la}\xi\{\cdot,\Lambda e^\mu\xi'\}\right) \\
 \Omega' & \defeq  e^\mu \Lambda' -\left(2+b\right)^{-1}\left(
 e^\mu \Lambda' \xi\{\cdot,e^{\mu}\xi'\}-e^{\mu}\xi'\{\cdot,e^\mu \Lambda' \xi\}\right)
\end{align}
Then, $\nu,\nu'\in\tcl$, $\Omega,\Omega'\in\asl$, and
\begin{equation}
 e^{\hat{\Lambda'}+\hat{\xi'}}e^{\hat{\Lambda}^\fa+\hat{\xi}^\fa}= e^{\hat{\Omega}^\fa+\hat{\eta}^\fa} e^{\hat{\nu}^\fa} e^{\hat{\beta}} e^{\hat{\mu}} e^{\hat{\nu'}} e^{\hat{\Omega'}+\hat{\eta'}} .
\label{eq:relpp}
\end{equation}
\end{lem}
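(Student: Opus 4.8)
The plan is to reduce the operator identity \eqref{eq:relpp} to a computation in the finite-dimensional Lie group $\SO(2n+1;\C)$ — or more precisely in a suitable complex matrix group — and to control the analytic (trace class, anti-symmetry) conditions separately. The key observation is that by Propositions stated above, $\fg^\C$ acts faithfully on $\Wo$, and $G^\C$ satisfies condition \eqref{eq:gcinvact} there; moreover $\ue'(\fg^\C)=\cA'_{\mathrm s}$-module structure means an equality of two products of exponentials of elements of $\fg^\C$ as operators on $\cF$ holds if and only if it holds after passing to the adjoint action on $\Wo$ (both sides being group-like, they are determined by their adjoint action together with a scalar; one checks the scalar, i.e.\ the action on the vacuum, separately). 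So the first step is: \emph{rewrite the desired identity as an identity of $(2n+1)\times(2n+1)$ matrices}, valid first when $L$ is finite-dimensional, via formula \eqref{eq:fgact}.

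Second, in the finite-dimensional setting the identity \eqref{eq:relpp} is a Gauss-type (LDU-type) factorization: the left-hand side $e^{\hat\Lambda'+\hat\xi'}e^{\hat\Lambda^\fa+\hat\xi^\fa}$ has a ``lower-triangular times upper-triangular'' structure with respect to the triangular decomposition $\fg^\C=(\fm_-\oplus\fn_-)\oplus\fh^\C\oplus(\fm_+\oplus\fn_+)$, and the right-hand side is precisely its reordering into the normal-ordered form $N_- \cdot ( k\text{-part}) \cdot H \cdot N_+$ — schematically $e^{\hat\Omega^\fa+\hat\eta^\fa}e^{\hat\nu^\fa}e^{\hat\beta}e^{\hat\mu}e^{\hat{\nu'}}e^{\hat{\Omega'}+\hat{\eta'}}$. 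Concretely I would write the $(2n+1)\times(2n+1)$ matrix $M$ for the left side in block form using $\tilde\sigma=\begin{pmatrix}1&0&0\\0&0&\one\\0&\one&0\end{pmatrix}$, and simply \emph{solve} the block equations that equate $M$ to the product of the six elementary factors on the right. The invertibility hypotheses — $\one-\Lambda'\Lambda$ invertible and $b\neq -2$ — are exactly the non-degeneracy of the two Schur complements that must be inverted to carry out the Gauss decomposition; the explicit formulas \eqref{eq:ppmu}, \eqref{eq:ppbeta}, \eqref{eq:ppnu}--\eqref{eq:relpp} for $\mu,\beta,\nu,\nu',\eta,\eta',\Omega,\Omega'$ are precisely the entries one reads off. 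This step is bookkeeping with block matrices, using the relations \eqref{eq:mrel1}--\eqref{eq:mrel3}, \eqref{eq:nrel1}--\eqref{eq:nrel7} and \eqref{eq:nrelo1}--\eqref{eq:nrelo8}; alternatively one may verify it by differentiating both sides along a path and invoking uniqueness of solutions of the resulting linear ODE on the group, which avoids writing all entries.

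Third, I would remove the finite-dimensionality assumption. Here the point is that every object appearing — $\mu$ and $\beta$ (trace class by the argument already used in the proof of the earlier Proposition on $H^\C$: a logarithm of $\one+(\text{trace class})$ is trace class by holomorphic functional calculus), $\nu,\nu'$ (finite rank, hence trace class), $\Omega,\Omega'$ (anti-symmetric with $\Omega^2$ trace class, since $\Lambda e^\mu$ differs from $\Lambda$ by a bounded factor and the rank-two correction is finite rank), $\eta,\eta'\in L$ — is built from $\lambda,\Lambda,\Lambda',\xi,\xi'$ by bounded operations preserving the relevant ideal, so membership in $\tcl$ and $\asl$ follows from closure properties. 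That $e^{-\mu}=\one-\Lambda'\Lambda$ and $e^{-\frac12\beta}=\one+\frac12\xi'\{e^{\mu^\la}\xi,\cdot\}$ admit trace-class logarithms uses that $\Lambda'\Lambda$ is trace class (it is $\Lambda'\Lambda$ with $\Lambda,\Lambda'$ each having trace-class square, hence Hilbert--Schmidt in the real sense, so the product is trace class) and the $b\neq-2$ condition guarantees invertibility of the rank-one perturbation. The operator identity \eqref{eq:relpp} itself is then an identity of bounded operators on $\cF$ all of whose matrix coefficients between finite-particle vectors depend continuously (indeed holomorphically) on the finite-dimensional ``truncations'' of the data, so it passes to the limit; equivalently, both sides lie in $\Galg'^\C$ and agree on a dense set.

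The main obstacle I expect is \emph{not} the algebra of the Gauss decomposition — that is routine once set up — but getting the scalar/normalization right and verifying the trace-class claim for $\beta$, i.e.\ that the rank-one operator $\frac12\xi'\{e^{\mu^\la}\xi,\cdot\}$ has a logarithm whose eigenvalue structure is controlled precisely by $b=\{\xi,e^\mu\xi'\}$ and the exceptional value $b=-2$. A rank-one perturbation $\one+v\{w,\cdot\}$ has determinant $1+\{w,v\}$ and its nontrivial eigenvalue is $1+\{w,v\}$, so here one needs $1+\frac12\{e^{\mu^\la}\xi,\xi'\}=1+\tfrac12\overline{\{\xi,e^\mu\xi'\}}\neq 0$, matching $b\neq-2$ up to conjugation; pinning down this bookkeeping (and the analogous $2+b$, $2+\overline b$ denominators in $\eta,\eta',\Omega,\Omega'$) carefully is where the real care is required.
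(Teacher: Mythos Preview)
The paper does not actually supply a proof of this lemma: it merely says, just before the four Lemmas, that ``These are obtained from the relations of Section~\ref{sec:CAR} using suitable versions of Baker--Campbell--Hausdorff formulas,'' and leaves it at that. So there is no detailed argument to compare against, only the hint that one should manipulate BCH-type identities built directly from the commutation relations \eqref{eq:hcrel}, \eqref{eq:mrel1}--\eqref{eq:mrel3}, \eqref{eq:nrel1}--\eqref{eq:nrel7}.

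Your route is genuinely different and in many ways more conceptual: instead of grinding BCH expansions in $\cA'$, you pass to the faithful (up to centre) adjoint representation on $\Wo$, recognise the statement as a block Gauss/LDU factorisation in $\SO(2n+1;\C)$, and read off the entries. This is cleaner for \emph{deriving} the explicit formulas for $\mu,\beta,\nu,\nu',\eta,\eta',\Omega,\Omega'$, and it makes transparent why the two invertibility hypotheses ($\one-\Lambda'\Lambda$ invertible, $b\neq -2$) appear: they are exactly the nonvanishing of the successive Schur complements. The BCH approach hinted at in the paper, by contrast, would verify the stated formulas rather than discover them, but has the advantage of working directly on $\cF$ without a separate scalar check.

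The one place where your sketch is thin is precisely that scalar check. Equality in the adjoint action on $\Wo$ determines both sides only up to a central element, i.e.\ a scalar on $\cF$ (Schur, since $\cF$ is irreducible). You propose to fix it by comparing actions on $\vac$, but computing $\langle\vac,\,e^{\hat{\Lambda'}+\hat{\xi'}}e^{\hat{\Lambda}^\fa+\hat{\xi}^\fa}\vac\rangle$ on the left without already having \eqref{eq:relpp} is essentially the content of Proposition~\ref{prop:cohip}, so there is a risk of circularity. In finite dimensions this is not fatal --- you can compute both sides in the (finite-dimensional) spinor representation directly, or use the standard Berezin determinant formula for the spinor matrix element, or run your ODE alternative along a path from the identity --- but you should say explicitly which of these you invoke. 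Once the scalar is pinned down in finite dimensions, your density/continuity passage to infinite dimensions is fine, and your trace-class and $\asl$-membership checks are correct as stated.
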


Given the preceding operator identities allows us crucially to describe the structure of $G^\C$ in terms of a decomposition. Consider to this end the following definitions. Set $\fp_+\defeq\fm_+\oplus\fn_+$ and $\fp_-\defeq\fm_-\oplus\fn_-$.
Define $P_-, P_+$ to be the image of the exponential map applied to $\fp_-$ and $\fp_+$ respectively.
It follows immediately that $P_-,P_+$ are groups since $\fp_-$ and $\fp_+$ have the simple composition identities given by equation (\ref{eq:relpm}).

\begin{thm}[Decomposition Theorem]
The map
\begin{equation}
 P_-\times H^\C \times P_+ \to G^\C
\label{eq:gcdeca}
\end{equation}
given by the product in $G^\C$ is injective and its image dense in the operator norm topology.
\end{thm}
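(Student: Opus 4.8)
The plan is to establish injectivity and density of the map \eqref{eq:gcdeca} separately, in each case reducing to the algebraic manipulations already prepared in Lemmas~\ref{lem:relpp} and the surrounding ones. For injectivity, suppose $p_- h p_+ = q_- g q_+$ with $p_\pm, q_\pm \in P_\pm$ and $h, g \in H^\C$. I would first observe that the three factors act in very different ways on the representation $\Wo$ (or already on $\fn^\C$): an element of $P_+$ fixes the ``lowest'' component $k\hat\eta$ of $(\mu,\tau,\eta)$ modulo higher components (since $\fp_+ = \fm_+\oplus\fn_+$ acts by strictly raising degree in an appropriate filtration of $\Wo$), while $P_-$ is lower-triangular and $H^\C$ is block-diagonal with respect to the $\fn_+\oplus\fn_-$ (resp.\ $k\fn_+\oplus k\fn_-\oplus\C k$) decomposition. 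Concretely, using \eqref{eq:fgeact} or \eqref{eq:fgact}, an element of $P_+$ sends $(\tau,\eta)\mapsto(\tau + A\eta, \eta)$ for some anti-symmetric-type $A$ built from $\fm_+$ together with the $\fn_+$ shift, $H^\C$ sends $(\tau,\eta)\mapsto(\rho^{-1}\tau, \rho\eta)$ with $\rho = e^{-\lambda}$ of the form $\one + (\text{trace class})$, and $P_-$ sends $(\tau,\eta)\mapsto(\tau, \eta + B\tau)$. From the equality of the composite block-triangular operators one reads off uniqueness of each factor in turn (the $P_+$ part is determined by how $\eta$ propagates, then $H^\C$ by the diagonal block, then $P_-$ by the remainder); one must check that these reconstructed operators genuinely lie in $P_+$, $H^\C$, $P_-$ and not merely in some larger completion, which is where the trace-class conditions built into $\asl$ and $\tcl$ and Proposition on $H^\C$ (invertible $\one+x$, $x$ trace class) are used.

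For density, the strategy is to show that $\Galg^\C$ — finite products of exponentials of elements of $\fg^\C$ — is contained in the \emph{closure} of $P_- H^\C P_+$, since $\Galg^\C$ is dense in $G^\C$ by definition. It suffices to show that $P_- H^\C P_+$ is stable, up to closure, under left and right multiplication by a single exponential $e^{\hat x}$ for $x$ ranging over a spanning set of $\fg^\C$, namely $\fh^\C$, $\fm_\pm$, $\fn_\pm$. Multiplication on the right by $e^{\hat\Lambda}$ or $e^{\hat\xi}$ with $\Lambda\in\asl$, $\xi\in L$ just enlarges the $P_+$ factor, using that $P_+$ is a group (equation~\eqref{eq:relpm}); similarly left multiplication by elements of $\fp_-$ enlarges $P_-$. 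Left multiplication by $e^{\hat\lambda}\in H^\C$ is absorbed using the commutation identities \eqref{eq:rellL} and \eqref{eq:rellx}, which move an $H^\C$ factor to the right past a $P_+$ factor (and analogously past $P_-$), together with the fact that $H^\C$ is a group. The only genuinely new move is absorbing a factor from $\fp_-$ on the \emph{right} (equivalently, from $\fp_+$ on the left), i.e.\ commuting $e^{\hat{\Lambda'}+\hat{\xi'}}$ past $e^{\hat\Lambda^\fa+\hat\xi^\fa}$ — and this is exactly the content of Lemma~\ref{lem:relpp}, equation~\eqref{eq:relpp}, which rewrites such a product in the ordered form $P_-$-type $\cdot\, H^\C\cdot\, P_+$-type. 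The hypotheses of that lemma ($\one-\Lambda'\Lambda$ invertible, $b\neq-2$) hold on a dense set of parameters, so one gets the product into $\overline{P_- H^\C P_+}$ for a dense set and then invokes continuity of multiplication in the operator norm to conclude for all of $\Galg^\C$.

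The main obstacle I expect is the density argument's bookkeeping, specifically verifying that the ``reordering'' moves stay within the class of operators with the required trace-class/Hilbert–Schmidt-type decay, so that the output really is an element of $P_- H^\C P_+$ (whose factors carry these constraints) rather than of a spuriously larger set — and handling the measure-zero parameter values excluded by Lemma~\ref{lem:relpp} by a limiting argument in the operator norm. The injectivity direction is comparatively routine once the block-triangular picture on $\Wo$ is set up, the one subtlety there being, again, to confirm that each reconstructed factor satisfies its trace-class membership condition; this follows from the same holomorphic-functional-calculus argument used in the proof of the Proposition characterizing $H^\C$, applied to the relevant submatrices/suboperators.
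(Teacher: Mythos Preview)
Your density argument follows the same route as the paper: reorder finite products of exponentials using the identities \eqref{eq:rellL}, \eqref{eq:rellx}, \eqref{eq:relpm}, \eqref{eq:relpp}, treating the exceptional parameter values in Lemma~\ref{lem:relpp} by a small perturbation and continuity in operator norm. One step you gloss over that the paper makes explicit: your ``it suffices'' reduction to exponentials $e^{\hat x}$ with $x$ lying in one of the subspaces $\fh^\C,\fp_+,\fp_-$ is not automatic, since $\Galg^\C$ is by definition generated by $e^{\hat x}$ for \emph{arbitrary} $x\in\fg^\C$. The paper invokes Trotter's product formula at this point to approximate a general $e^{\hat x}$ by a finite product of subspace-type exponentials; you need the same device (or an equivalent one) before your reordering moves apply.

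For injectivity the paper simply writes ``clear''; the natural argument uses the $\Z$-grading on $\cF$ directly: $P_+$ is strictly degree-lowering with identity diagonal, $H^\C$ preserves degree, $P_-$ is strictly degree-raising with identity diagonal, so $p_- h p_+$ is a Gauss-type factorization and hence unique. Your block-triangular argument on $\Wo$ is in the same spirit but has a genuine gap: the adjoint representation of $G^\C$ on $\Wo$ (or on $\fn^\C$) is \emph{not} faithful. The scalar $-\one$ lies in $H^\C$ (take $\lambda\in\tcl$ of rank one with single nonzero eigenvalue $2\pi\im$, so that $e^{-\lambda}=\one_L$ while $e^{-\frac12\tr\lambda}=-1$) and acts trivially by conjugation, so from $\Wo$ alone you recover the $H^\C$ factor only up to sign. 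Either supplement with the action on $\vac$ to pin down the sign, or run the triangular argument directly on $\cF$ with its Fock grading, where no such kernel appears.
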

\begin{proof}
Injectivity is clear. For the denseness reformulate the statement as follows: Given $g\in G^\C$ and $\epsilon>0$ there are $\lambda\in\tcl$, $\Lambda,\Lambda'\in\asl$ and $\xi,\xi'\in L$ such that $\lop g-g'\rop<\epsilon$, where
\begin{equation}
 g'\defeq e^{\hat{\Lambda}^\fa+\hat{\xi}^\fa} e^{\hat{\lambda}} e^{\hat{\Lambda'}+\hat{\xi'}} .
\label{eq:gcdec}
\end{equation}
We note first that any element of $G^\C$ is arbitrarily approximated by products of exponentials with the special property that the argument of each exponential is in one of the subspaces $\fh^\C$, $\fp_+$, $\fp_-$ of $\fg^\C$. This follows by repeated application of Trotter's formula \cite{Suz:gentrotter}. We can thus use the identities (\ref{eq:rellL}), (\ref{eq:rellx}), (\ref{eq:relpp}) and their adjoint versions to ``move'' all exponentials of $\fp_-$ to the left, and all exponentials of $\fp_+$ to the right, in finitely many steps. The exponentials of $\fp_-$ on the left are then combined by the adjoint of equation (\ref{eq:relpm}) and those of $\fp_+$ on the right by equation (\ref{eq:relpm}). Finally, the exponentials of $\fh$ in the middle are combined by the group property of $H^\C$. The result is a decomposition of the required form (\ref{eq:gcdec}). There is an obstruction, however. Namely, the identity (\ref{eq:relpp}) fails for very special values of the arguments of the exponentials. In this case we slightly tweak the arguments of the exponentials on the left hand side of the identity so that the right hand side is well defined. Due to continuity this only slightly modifies the group element in terms of the operator norm topology. Since we have to do this only a finite number of times, we can still arbitrarily approximate the original group element. This completes the proof.
\end{proof}

We denote the image of the map (\ref{eq:gcdeca}) in the decomposition theorem by $\Gdec^\C\subseteq\Galg^\C$.
Instead of defining the real subgroup $G$ similarly, it will be more convenient to derive its definition from that of $G^\C$. Thus, define $G$ as the intersection of $G^\C$ with the unitary operators on $\cF$.
Note that $G$ is a group and that it is closed in the operator norm topology. Moreover, if $L$ is finite-dimensional, of dimension $n$, $G$ can be identified with the Lie group $\SO(2n+1)$. We also set $\Gdec\defeq \Gdec^\C\cap G$. This intersection can be characterized as follows.

\begin{thm}
\label{thm:gdec}
Let $r,s\in\tcl$ such that $r^*=r$ and $s^*=-s$. Let $\Lambda,\Lambda'\in\asl$ and $\xi,\xi'\in L$. Set
\begin{equation}
 e^{-\mu}=1-\Lambda^2 .
 \label{eq:gdmu}
\end{equation}
This determines $-\mu\in\tcl$ uniquely as a positive operator. Let $b\defeq \{\xi,e^\mu\xi\}$. Set
\begin{equation}
 e^{-\frac{1}{2}\beta}=\one+\frac{1}{2}\xi\{e^{\mu}\xi,\cdot\} .
 \label{eq:gdbeta}
\end{equation}
This determines $-\beta\in\tcl$ uniquely as a positive operator. Let $\nu\in\tcl$ be
\begin{equation}
 \nu\defeq \frac{1}{2} e^{\mu}\Lambda\xi \{\xi,\cdot\} .
 \label{eq:gdnu}
\end{equation}
 Then,
\begin{equation}
 g\defeq e^{\hat{\Lambda}^\fa+\hat{\xi}^\fa} e^{\hat{r}}e^{\hat{s}} e^{\hat{\Lambda'}+\hat{\xi'}}
\label{eq:gdec}
\end{equation}
is element of $G$ iff the following conditions are satisfied:
\begin{align}
 e^{-2\hat{r}} & = e^{\hat{\nu}^\fa} e^{\hat{\beta}} e^{\hat{\mu}} e^{\hat{\nu}} \label{eq:gd1}\\
 \Lambda' & = -e^s e^r\left(e^\mu \Lambda -\left(2+b\right)^{-1}\left(
 e^\mu \Lambda \xi\{\cdot,e^{\mu}\xi\}-e^{\mu}\xi\{\cdot,e^\mu \Lambda \xi\}\right)\right)e^{r} e^{-s}
 \label{eq:gd2}\\
 \xi' & = - 2\left(2+b\right)^{-1} e^s e^r (\one-\Lambda) e^\mu \xi . \label{eq:gd3}
\end{align}
\end{thm}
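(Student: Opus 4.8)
The plan is to characterize membership of $g$ in $G$ by combining two facts: that $g\in G^\C$ automatically (it is a finite product of exponentials of $\fg^\C$), so that membership in $G$ is equivalent to unitarity of $g$ as an operator on $\cF$; and that unitarity can be tested by the single condition $g^\fa g=\one$, since $g$ is invertible. The strategy is therefore to compute $g^\fa g$ using the decomposition (\ref{eq:gdec}), normal-order it into the canonical form of the Decomposition Theorem, and read off when the result equals $\one$.

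First I would write down $g^\fa$ explicitly. Since $(\hat\Lambda^\fa)^\fa=\hat\Lambda$, $(\hat\xi^\fa)^\fa=\hat\xi$, $\hat r^\fa=\hat r$ (because $r^*=r$ and $\hat{\lambda^\la}=\hat\lambda^\fa$), and $\hat s^\fa=-\hat s$ (because $s^*=-s$), we get
\[
 g^\fa g= e^{\hat{\Lambda'}^\fa+\hat{\xi'}^\fa}\, e^{-\hat s}\, e^{\hat r}\, e^{\hat\Lambda+\hat\xi}\cdot e^{\hat\Lambda^\fa+\hat\xi^\fa}\, e^{\hat r}\, e^{\hat s}\, e^{\hat{\Lambda'}+\hat{\xi'}}.
\]
The central product $e^{\hat\Lambda+\hat\xi}\, e^{\hat\Lambda^\fa+\hat\xi^\fa}$ is of ``$\fp_-$ times $\fp_+$'' type, so Lemma~\ref{lem:relpp} applies and rewrites it as $e^{\hat\Omega^\fa+\hat\eta^\fa}e^{\hat\nu^\fa}e^{\hat\beta}e^{\hat\mu}e^{\hat\nu'}e^{\hat\Omega'+\hat\eta'}$, where the data $\mu,\beta,\nu,\nu',\eta,\eta',\Omega,\Omega'$ are those of that lemma specialized to $\Lambda'\rightsquigarrow\Lambda$, $\xi'\rightsquigarrow\xi$, so that $e^{-\mu}=\one-\Lambda^2$ (matching (\ref{eq:gdmu}); here one uses that $\Lambda$ anti-symmetric forces $\Lambda^2$ self-adjoint, so the square root / logarithm picks $-\mu$ positive), $b=\{\xi,e^\mu\xi\}$, and $\beta,\nu$ as in (\ref{eq:gdbeta})--(\ref{eq:gdnu}). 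Then I would use (\ref{eq:rellL}), (\ref{eq:rellx}) and their adjoints, together with the group law of $H^\C$, to sweep the leftover factors $e^{\hat r}e^{-\hat s}$ on the left and $e^{\hat r}e^{\hat s}$ on the right through the $\fp_\pm$-exponentials, conjugating the $\Omega$'s and $\eta$'s by $e^{-s}e^r$ on the left and $e^re^{-s}$ on the right (the $e^{\lambda^\la},e^\lambda$ appearing in (\ref{eq:rellL})), thereby producing the modified $\fp_-$-block on the far left, $\fp_+$-block on the far right, and a single $H^\C$-block $e^{\hat\nu^\fa}e^{\hat\beta}e^{\hat\mu}e^{\hat\nu'}\cdot(\text{conjugating factors})$ in the middle.

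At this point $g^\fa g$ is in the canonical decomposed form $e^{(\fp_-)}\, e^{(\fh^\C)}\, e^{(\fp_+)}$, and by injectivity of the decomposition map (the Decomposition Theorem) the equation $g^\fa g=\one$ holds iff each of the three blocks is trivial. The $\fp_+$-block being $\one$ gives two equations: vanishing of the $\fm_+$-part yields (\ref{eq:gd2}) for $\Lambda'$, and vanishing of the $\fn_+$-part yields (\ref{eq:gd3}) for $\xi'$ — here I expect the conjugating factor $e^re^{-s}$ to appear exactly as written, and the coefficient $-2(2+b)^{-1}$ and the operator $e^\mu\Lambda-(2+b)^{-1}(\cdots)$ to come straight out of the $\Omega',\eta'$ formulas of Lemma~\ref{lem:relpp} after the conjugation. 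The $\fp_-$-block being $\one$ gives the adjoint equations, which (using $r^*=r$, $s^*=-s$) are automatically equivalent to the same two conditions, so they are not independent. Finally the $\fh^\C$-block being $\one$ is the condition $e^{2\hat r}=$ (the collected middle factor)$^{-1}$, equivalently $e^{-2\hat r}=e^{\hat\nu^\fa}e^{\hat\beta}e^{\hat\mu}e^{\hat\nu}$ after absorbing the conjugating factors — this is (\ref{eq:gd1}); one checks the conjugation by $e^{\pm s}$ acts trivially here because $e^{\hat s}$ commutes appropriately, or is absorbed into the definition. Conversely, if (\ref{eq:gd1})--(\ref{eq:gd3}) hold then retracing the computation gives $g^\fa g=\one$, and since $g^\fa=g^{-1}$ together with $g\in G^\C$ gives $g\in G$.

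The main obstacle I anticipate is bookkeeping the conjugations carefully: tracking how $e^{\hat r}e^{-\hat s}$ on the left and $e^{\hat r}e^{\hat s}$ on the right thread through the five-factor output of Lemma~\ref{lem:relpp}, making sure the $\fh^\C$-factors $e^{\hat\beta},e^{\hat\mu}$ (which do not commute with generic $e^{\hat r},e^{\hat s}$) are moved correctly and that the resulting $\Lambda'$ and $\xi'$ expressions collapse to exactly (\ref{eq:gd2}) and (\ref{eq:gd3}) — in particular verifying that the ``$\fm_-$ vs $\fm_+$'' asymmetry in (\ref{eq:mrel1})--(\ref{eq:mrel2}) (the $-\lambda\Lambda-\Lambda\lambda^\la$ versus $\lambda^\la\Lambda+\Lambda\lambda$) produces the specific placement of $e^r,e^{-s}$ in (\ref{eq:gd2}). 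A secondary subtlety is the well-definedness of $\mu$ and $\beta$: one must note $\one-\Lambda^2\geq\one$ so $-\mu\geq 0$ is forced and unique, and that $b=\{\xi,e^\mu\xi\}\geq 0$ so $2+b\neq 0$ automatically, meaning (unlike in Lemma~\ref{lem:relpp}) no genericity hypothesis is needed here.
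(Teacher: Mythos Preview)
Your approach is essentially the same as the paper's: write the unitarity condition (you as $g^\fa g=\one$, the paper as $g^\fa=g^{-1}$), apply Lemma~\ref{lem:relpp} to the resulting $P_+\times P_-$ product, reorder with identities (\ref{eq:rellL}), (\ref{eq:rellx}), (\ref{eq:relpm}), and then invoke injectivity of the decomposition to read off the three conditions. One small slip: the central factor $e^{\hat\Lambda+\hat\xi}\,e^{\hat\Lambda^\fa+\hat\xi^\fa}$ is of $P_+\times P_-$ type (not ``$\fp_-$ times $\fp_+$''), but that is exactly the input for Lemma~\ref{lem:relpp}, so your application is correct.
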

\begin{proof}(Sketch)
Imposing the unitarity condition leads to
\begin{equation}
e^{\hat{\Lambda'}^\fa+\hat{\xi'}^\fa} e^{\hat{s}^\fa}e^{\hat{r}^\fa} e^{\hat{\Lambda}+\hat{\xi}}
= e^{-\hat{\Lambda'}-\hat{\xi'}} e^{-\hat{s}}e^{-\hat{r}} e^{-\hat{\Lambda}^\fa-\hat{\xi}^\fa} .
\end{equation}
Rewrite this as,
\begin{equation}
e^{\hat{\Lambda'}^\fa+\hat{\xi'}^\fa} e^{\hat{s}^\fa}e^{\hat{r}^\fa} e^{\hat{\Lambda}+\hat{\xi}} e^{\hat{\Lambda'}+\hat{\xi'}}
= e^{-\hat{s}}e^{-\hat{r}} e^{-\hat{\Lambda}^\fa-\hat{\xi}^\fa} .
\end{equation}
Apply identity (\ref{eq:relpp}) to the two rightmost terms on the left hand side. Then, reorder and combine on each side separately via identities (\ref{eq:rellL}), (\ref{eq:rellx}), (\ref{eq:relpm}) and their adjoint versions. Finally, use the uniqueness of the decomposition to equate suitable terms on both sides.
\end{proof}

\begin{cor}
\label{cor:gdecparam}
There is a surjective map $\alpha:\fp_-\oplus\fh\to \Gdec$ such that for $\Lambda\in\asl$, $\xi\in L$, $s\in\tcl$ with $s^*=-s$ the image of $(\hat{\Lambda}^\fa+\hat{\xi}^\fa,\hat{s})$ is expression (\ref{eq:gdec}) where $r,\Lambda',\xi'$ are determined by conditions (\ref{eq:gd1}), (\ref{eq:gd2}), (\ref{eq:gd3}).
\end{cor}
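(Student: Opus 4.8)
The plan is to read the statement as a repackaging of Theorem~\ref{thm:gdec} and the Decomposition Theorem. First I would record a ``polar decomposition'' inside $H^\C$: every $h\in H^\C$ factors uniquely as $h=e^{\hat{r}}e^{\hat{s}}$ with $r,s\in\tcl$, $r^*=r$, $s^*=-s$. Indeed, by the proposition characterising $H^\C$, the adjoint action of $h$ on $\fn_+$ is an invertible operator of the form $\one+x$ with $x$ trace class; its polar factors $(hh^\fa)^{1/2}$ (positive) and $(hh^\fa)^{-1/2}h$ (unitary) are again of this form, and by the functional-calculus argument from the proof of that proposition their logarithms are trace class, producing $r$ and $s$; uniqueness follows from faithfulness of the adjoint action of $H^\C$ on $\fn_+$. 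Combined with the composition formula (\ref{eq:relpm}) for $P_\pm$, this shows that every $g\in\Gdec^\C$ has a presentation of the form (\ref{eq:gdec}), i.e.\ $g=e^{\hat{\Lambda}^\fa+\hat{\xi}^\fa}e^{\hat{r}}e^{\hat{s}}e^{\hat{\Lambda'}+\hat{\xi'}}$ with $\Lambda,\Lambda'\in\asl$, $\xi,\xi'\in L$, $r^*=r$, $s^*=-s$.

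Next I would define $\alpha$. Given $(\hat{\Lambda}^\fa+\hat{\xi}^\fa,\hat{s})\in\fp_-\oplus\fh$, Theorem~\ref{thm:gdec} already guarantees that $\mu,\beta,\nu$ are well defined by (\ref{eq:gdmu}), (\ref{eq:gdbeta}), (\ref{eq:gdnu}), with $-\mu,-\beta\in\tcl$ positive and $\nu\in\tcl$. The right-hand side of (\ref{eq:gd1}) is a finite product of exponentials of elements of $\fh^\C$, hence lies in $H^\C$; I would show it is the positive polar factor of an element of $H^\C$ in the sense of the previous paragraph, so that it equals $e^{-2\hat{r}}$ for a unique $r\in\tcl$ with $r^*=r$ (the delicate point, discussed below). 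The formulas (\ref{eq:gd2}) and (\ref{eq:gd3}) then determine $\Lambda'$ and $\xi'$ outright; $\Lambda'\in\asl$ because anti-symmetry and the trace-class-square condition are preserved by the operations involved, exactly as in Lemma~\ref{lem:relpp}, and $\xi'\in L$ is immediate. Setting $\alpha(\hat{\Lambda}^\fa+\hat{\xi}^\fa,\hat{s})\defeq g$ with $g$ as in (\ref{eq:gdec}): since $e^{\hat{r}}e^{\hat{s}}\in H^\C$ while $e^{\hat{\Lambda}^\fa+\hat{\xi}^\fa}\in P_-$ and $e^{\hat{\Lambda'}+\hat{\xi'}}\in P_+$, the product $g$ lies in $P_-H^\C P_+=\Gdec^\C$; and since (\ref{eq:gd1}), (\ref{eq:gd2}), (\ref{eq:gd3}) hold by construction, the ``if'' direction of Theorem~\ref{thm:gdec} gives $g\in G$. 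Hence $g\in\Gdec^\C\cap G=\Gdec$, so $\alpha$ maps $\fp_-\oplus\fh$ into $\Gdec$.

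For surjectivity I would take $g\in\Gdec$ arbitrary. Being in $\Gdec^\C$, it has a presentation (\ref{eq:gdec}) as above, with data $\Lambda,\xi,s$ and some $r,\Lambda',\xi'$. Being in $G$, hence unitary, the ``only if'' direction of Theorem~\ref{thm:gdec} forces the relations (\ref{eq:gd1}), (\ref{eq:gd2}), (\ref{eq:gd3}) for this presentation. But (\ref{eq:gd2}) and (\ref{eq:gd3}) determine $\Lambda'$ and $\xi'$ uniquely from $(\Lambda,\xi,s)$, and (\ref{eq:gd1}) determines $e^{-2\hat{r}}$, hence $\hat{r}$ (injectivity of $X\mapsto e^{-2X}$ on self-adjoint operators), hence $r$ (faithfulness on $\fn_+$); so $r,\Lambda',\xi'$ coincide with the quantities produced in the definition of $\alpha$ from $(\hat{\Lambda}^\fa+\hat{\xi}^\fa,\hat{s})$. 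Therefore $g=\alpha(\hat{\Lambda}^\fa+\hat{\xi}^\fa,\hat{s})$, and $\alpha$ is surjective.

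I expect the main obstacle to be showing that $\alpha$ is defined on all of $\fp_-\oplus\fh$ — concretely, that (\ref{eq:gd1}) is always solvable with $r$ self-adjoint and trace class, equivalently that $e^{\hat{\nu}^\fa}e^{\hat{\beta}}e^{\hat{\mu}}e^{\hat{\nu}}$ acts on $\fn_+$ as a positive trace-class perturbation of $\one$. The perturbation being trace class is automatic, and its logarithm is then trace class by the functional-calculus argument from the proposition on $H^\C$, so the real content is positivity. I would try to obtain this by recognising, from the derivation in the proof of Theorem~\ref{thm:gdec}, that this product is precisely the $H^\C$-factor produced when the unitarity relation $g^\fa g=\one$ is resolved into $P_-H^\C P_+$ form on each side, which makes its positivity structural; failing that, by a direct computation with the explicit expressions for $\mu,\beta,\nu$. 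Once (\ref{eq:gd1}) is known to be solvable, the remaining steps — the polar decomposition inside $H^\C$, the invocation of the Decomposition Theorem, and the bookkeeping with (\ref{eq:relpm}) — are routine.
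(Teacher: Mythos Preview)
Your proposal is correct and follows the approach the paper intends: the paper states the corollary without proof, as an immediate repackaging of Theorem~\ref{thm:gdec} together with the Decomposition Theorem, and your argument simply makes that repackaging explicit. Your identification of the one nontrivial point---solvability of (\ref{eq:gd1}) for self-adjoint $r$---is apt; note that it becomes transparent once you observe that here $\Lambda$ commutes with $e^{\mu}=(\one-\Lambda^2)^{-1}$, so the $H^\C$-factor in (\ref{eq:gd1}) is exactly the middle block of the decomposition (\ref{eq:relpp}) applied to the manifestly positive operator $\bigl(e^{\hat{\Lambda}^\fa+\hat{\xi}^\fa}\bigr)^{\fa}e^{\hat{\Lambda}^\fa+\hat{\xi}^\fa}$, which forces the required positivity structurally, just as you anticipated.
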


Essentially all considerations (including the definitions) restrict straightforwardly to the subgroup $\Ge^\C\subset G^\C$. In particular, the Decomposition Theorem reduces to a simpler Decomposition Theorem with $P_+,P_-$ restricted to $M_+,M_-$, the exponentials of $\fm_+,\fm_-$. This restricted Decomposition Theorem is an infinite-dimensional generalization of a Decomposition Theorem of Balian and Brezin \cite{BaBr:nbogtrans}.

\section{Homogeneous spaces for coherent states}
\label{sec:homspace}

The coherent states in the approaches of Gilmore \cite{ZFG:coherent} and Perelomov \cite{Per:coherent} are the states that arise from the unitary action of the real dynamical group on the vacuum state. These are (up to phase factors) in correspondence to the elements of the homogeneous space obtained by dividing out the subgroup that maps the vacuum to a multiple of itself. We consider here the full dynamical group $G$. The relevant subgroup is then precisely $H$ which acts for $\lambda\in\tcl$ as
\begin{equation}
 \exp(\hat{\lambda})\vac=\exp\left(-\frac{1}{2}\tr_L(\lambda)\right)\vac .
\end{equation}
Thus the space of coherent states can be identified, up to phase factors, with the homogeneous space $G/H$. If we restrict to the even coherent states the relevant homogeneous space is $\Ge/H$, a submanifold of $G/H$.

We proceed to clarify the general structure and some properties of the homogeneous space $G/H$. To this end we shall use the language of differential geometry which in part is only appropriate for finite dimensional manifolds. We return to a language that is manifestly well defined also in the infinite-dimensional case at the end of this section. We start by considering the decomposition of the Lie algebras associated to the inclusions $H\subset \Ge$ and $H\subset G$. For the even dynamical Lie algebra $\fge=\fh\oplus\fm$ we have from the relations in Section~\ref{sec:CAR},
\begin{equation}
 [\fh,\fh]\subseteq\fh,\quad [\fm,\fm]\subseteq\fh,\quad [\fm,\fh]\subseteq \fm .
\label{eq:decfge}
\end{equation}
This implies that the map $\fge\to\fge$ which is the identity on $\fh$ and minus the identity on $\fm$ is an automorphism of $\fge$. This automorphism lifts to $\Ge$ and makes $\Ge/H$ into a \emph{symmetric space} \cite{Hel:dglgss}. This gives $\Ge/H$ a rich structure and ensures among other things that it is a Kähler manifold. In the context of coherent state this homogeneous space was analyzed in some detail by Berezin, including remarks on the infinite-dimensional case \cite{Ber:grossneveu}.

The full dynamical Lie algebra $\fg$ decomposes as $\fg=\fh\oplus\fm\oplus\fn$. Additional to (\ref{eq:decfge}) we have,
\begin{equation}
 [\fn,\fn]\subseteq\fh\oplus\fm,\quad [\fm,\fn]\subseteq\fn,\quad [\fh,\fn]\subseteq \fn .
\end{equation}
It will be convenient to denote $\fp\defeq \fm\oplus\fn$ so $\fg=\fh\oplus\fp$. Observe that we do not have $[\fp,\fp]\subseteq \fh$. So we cannot introduce an automorphism as above and $G/H$ is not a symmetric space. However, we still have $[\fh,\fp]\subseteq \fp$, implying that $G/H$ is a \emph{reductive homogeneous space} \cite{KoNo:diffgeo2}.

We recall how $G$, being a compact Lie group, acquires the structure of a Riemannian manifold with biinvariant metric.
Denote the left action of $G$ on itself by $L^g: G\to G, a\mapsto g a$  and the right action by $R^g: G\to G, a\mapsto a g$. We canonically identify the Lie algebra $\fg$ with the tangent space $T_e G$ at the unit element $e$ of $G$.
Denote by $x_a\in T_a G$ the left-invariant vector field generated by $x\in\fg=T_e G$ at $a\in G$, i.e., $x_a\defeq\xd L^a_e(x)$. Taking the inner product (\ref{eq:gip}) yields a Riemannian metric $d$ on $G$ via,
\begin{equation}
 d_a(x_a,y_a)\defeq \lhs x,y \rhs .
\end{equation}
This metric is left-invariant by construction.
The inner product (\ref{eq:gip}), being a multiple of the Killing form, is invariant under the adjoint action of $\fg$. Extending this to the Lie group $G$, it is invariant under the adjoint action of $G$, i.e., for $g\in G$, $x,y \in \fg$,
\begin{equation}
\lhs \ad_{g}(x), \ad_{g}(y)\rhs = \lhs x, y\rhs .
\end{equation}
The right action by $G$ on left-invariant vector fields is via the adjoint action on the Lie algebra, i.e,
\begin{equation}
\xd R^g_a(x_a)= (\ad_{g^{-1}}(x))_{ag} .
\end{equation}
Thus right action by $g\in G$ leaves the metric invariant as well,
\begin{equation}
 d_{ag}(\xd R^g_a(x_a),\xd R^g_a(y_a))=d_{ag}(x_{ag},y_{ag}) .
\end{equation}

The homogeneous space $G/H$, being reductive, acquires a left-invariant Riemannian metric from $G$ as follows.
Fix a point $p_0$ in $G/H$ and consider the associated projection
\begin{equation}
 \pi: G\to G/H,\quad g\mapsto g\, p_0 .
\end{equation}
The homogeneous space $G/H$ is then characterized by the relation $\pi(g)=\pi(g')$ iff there is $h\in H$ with $g'=g h$. We can write this relation as $\pi\circ R^h=\pi$ for all $h\in H$. For the tangent spaces this implies, $\xd\pi_g=\xd\pi_{g h}\circ \xd R^h_g$ as a map $T_gG\to T_{\pi(g)} G/H=T_{\pi(gh)} G/H$. Note that with respect to the invariant inner product (\ref{eq:gip}) in $\fg$ the subspaces $\fh$ and $\fp$ are orthogonal. Denote the orthogonal projection to $\fp$ by $r:\fg\to\fp$.
For $g\in G$ define a positive definite inner product $d'_{\pi(g)}$ on $T_{\pi(g)}G/H$ as follows,
\begin{equation}
 d'_{\pi(g)}(\xd\pi_g(x_g),\xd\pi_g(y_g))\defeq \lhs r(x),r(y)\rhs .
\end{equation}
This is well defined because $\xd\pi_g$ is surjective on $T_{\pi(g)}G/H$ and if $\xd\pi_g(x_g)=\xd\pi_g(x'_g)$ then $x-x'\in \fh$ and so $r(x-x')=0$. As the notation suggests, the inner product $d'$ does not actually depend on the chosen point $g\in G$. To see this, consider $h\in H$ and note,
\begin{multline}
 d'_{\pi(gh)}(\xd\pi_g(x_g),\xd\pi_g(y_g))
 =d'_{\pi(gh)}(\xd\pi_{gh}\circ \xd R^h_g (x_g),\xd\pi_{gh}\circ \xd R^h_g (y_g)) \\
 =\lhs(r(\ad_{h^{-1}}(x)),r(\ad_{h^{-1}}(y))\rhs= \lhs r(x),r(y) \rhs .
\end{multline}
Here in addition to the invariance of the inner product on $\fg$ under the adjoint action we have crucially used that the adjoint action commutes with $r$. This is true precisely due to the condition $[\fh,\fp]\subseteq \fp$ (in addition to $[\fh,\fh]\subseteq \fh$). The so defined Riemannian metric on $G/H$ is left-invariant under the action of $G$, again by construction.

Define the complex linear map $I:\fp^\C\to\fp^\C$ by $I x=\pm\im x$ for $x\in \fp_{\pm}$. Then, $I^2=-\one_{\fp^\C}$, i.e. $I$ is a complex structure on $\fp^\C$, different from the standard one given by multiplication by $i$. Observe from equation (\ref{eq:gipeval}) that $\fp_+$ and $\fp_-$ are orthogonal under the inner product. Thus, $I$ leaves the inner product invariant,
\begin{equation}
 \lhs I x, I y\rhs =\lhs x, y\rhs,\quad\forall x,y\in\fp^\C .
\end{equation}
What is more, by construction $I$ commutes with complex conjugation in $\fp^\C$, i.e., with the operation of taking minus the adjoint. Thus, $I$ restricts to a complex structure on the real vector space $\fp$. Indeed, it coincides with the one inherited from the natural complex structures on $L$ and $\asl$ via the maps $\xi\mapsto \hat{\xi}-\hat{\xi}^\fa$ and $\Lambda\mapsto \hat{\Lambda}-\hat{\Lambda}^\fa$.
$I$ gives thus rise to a Kähler polarization \cite{KoNo:diffgeo2}. The associated symplectic form is,
\begin{equation}
 \omega (x,y)\defeq\frac{1}{2}\lhs I x, y\rhs
\end{equation}
Parametrizing in the usual way, we get,
\begin{multline}
 \omega( \check{\Lambda_1}-\check{\Lambda_1}^\wa+\check{\xi_1}-\check{\xi_1}^\wa,
  \check{\Lambda_2}-\check{\Lambda_2}^\wa+\check{\xi_2}-\check{\xi_2}^\wa) \\
 =\frac{\im}{2}\left(\tr_L(\Lambda_2\Lambda_1)-\tr_L(\Lambda_1\Lambda_2)\right) +\im\left(\{\xi_2,\xi_1\} -\{\xi_1,\xi_2\}\right).
\end{multline}
The real inner product and the symplectic form combine to a complex inner product on $\fp$,
\begin{equation}
 s(x,y)\defeq \lhs x,y\rhs+2\im \omega(x,y) .
\end{equation}
Concretely, this is,
\begin{multline}
 s( \check{\Lambda_1}-\check{\Lambda_1}^\wa+\check{\xi_1}-\check{\xi_1}^\wa,
  \check{\Lambda_2}-\check{\Lambda_2}^\wa+\check{\xi_2}-\check{\xi_2}^\wa) \\
 =-2\tr_L(\Lambda_2\Lambda_1) +4\{\xi_1,\xi_2\}.
\end{multline}

We also note from relations (\ref{eq:mrel1}) and (\ref{eq:mrel2}) that $[\fh^\C,\fm_\pm]\subseteq \fm_\pm$ and from relations (\ref{eq:nrel1}) and (\ref{eq:nrel2}) that $[\fh^\C,\fn_\pm]\subseteq \fn_\pm$. Therefore $I$ is invariant under the adjoint action of $\fh^\C$, i.e., $[h,I x]=I [h,x]$ for $x\in\fp^\C$ and $h\in \fh^\C$. This implies that we can lift $I$ to the tangent bundle of $G/H$. For $g\in G$ define the linear map $I_{\pi(g)}:T_{\pi(g)}G/H\to T_{\pi(g)} G/H$ via,
\begin{equation}
 I_{\pi(g)}(\xd\pi_g(x_g))\defeq \xd\pi_g((I(r(x)))_g) .
\end{equation}
The same arguments as for the metric $d'$ show that $I_{\pi(g)}$ is well-defined, independent of the choice of $g$ and left-invariant. We can thus lift the structures $\omega$ and $s$ as well and obtain that $G/H$ becomes a Kähler manifold with $G$-invariant Kähler structure.
Under the action of the subgroup $\Ge\subset G$, $G/H$ decomposes into a union of orbits which are copies of the homogeneous space $\Ge/H$. Since the complex structure $I$ on $\fp$ leaves the subspace $\fm\subset\fp$ invariant, $\Ge/H$ inherits in this way a Kähler structure from $G/H$, coinciding with its known Kähler structure.

We shall now take advantage of the structural description of $G^\C$ and $G$ that we have obtained in Section~\ref{sec:gparam} to provide a satisfactory definition of the homogeneous space $G/H$ also in the infinite-dimensional case. To this end we note that the decomposition (\ref{eq:gcdeca}) is precisely adapted to making the action of the dynamical group on the vacuum state tractable. Namely, given $r,s\in\tcl$ with $r$ self-adjoint, $s$ skew-adjoint, $\Lambda,\Lambda'\in\asl$, $\xi,\xi'\in L$ we have,
\begin{equation}
e^{\hat{\Lambda}^\fa+\hat{\xi}^\fa} e^{\hat{r}} e^{\hat{s}} e^{\hat{\Lambda'}+\hat{\xi'}} \vac
= e^{-\frac{1}{2}\tr(r)} e^{-\frac{1}{2}\tr(s)} e^{\hat{\Lambda}^\fa+\hat{\xi}^\fa} \vac .
\label{eq:gcactvac}
\end{equation}
It is therefore convenient to take $\Gdec$ rather than $G$ as the starting point to define the homogeneous space. The usefulness of the parametrization of $\Gdec$ in terms of $\fp_-\oplus\fh$ of Corollary~\ref{cor:gdecparam} is also obvious. What is more, the decomposition also allows us to realize the homogeneous space as an actual subspace of $\Gdec$, i.e., it allows us to systematically choose one representative in each equivalence class. The equality, for $\Gdec$,
\begin{equation}
e^{\hat{\Lambda}^\fa+\hat{\xi}^\fa} e^{\hat{r}} e^{\hat{s}} e^{\hat{\Lambda'}+\hat{\xi'}}
= e^{\hat{\Lambda}^\fa+\hat{\xi}^\fa} e^{\hat{r}} e^{\hat{\Lambda''}+\hat{\xi''}} e^{\hat{s}}
\end{equation}
where $\Lambda''$ and $\xi''$ are functions of $\Lambda'$, $\xi'$ and $s$ shows that such a choice is given by setting $s=0$. We shall from now on use the notation $G/H$ to denote this specific realization of the homogeneous space. This definition also resolves the phase ambiguity we would in general have in identifying the space of coherent states with $G/H$.
\begin{cor}
\label{cor:qparam}
There is a bijective map $\fp_-\to G/H$ such that for $\Lambda\in\asl$, $\xi\in L$, the image of $\hat{\Lambda}^\fa+\hat{\xi}^\fa$ is expression (\ref{eq:gdec}) where $r,\Lambda',\xi'$ are determined by conditions (\ref{eq:gd1}), (\ref{eq:gd2}), (\ref{eq:gd3}) and $s=0$.
\end{cor}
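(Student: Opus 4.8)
The plan is to build the bijection $\fp_-\to G/H$ by composing two results already established: Corollary~\ref{cor:gdecparam}, which gives a surjection $\alpha:\fp_-\oplus\fh\to\Gdec$, and the observation at the end of the section that every element of $\Gdec$ can be rewritten so that the middle skew-adjoint factor $e^{\hat s}$ is absorbed to the right, i.e.\ that $s=0$ can be imposed without changing the orbit through $\vac$. Concretely, first I would invoke the Decomposition Theorem to write a generic element of $\Gdec$ as $e^{\hat\Lambda^\fa+\hat\xi^\fa}e^{\hat r}e^{\hat s}e^{\hat{\Lambda'}+\hat{\xi'}}$ with $r$ self-adjoint, $s$ skew-adjoint, and then use the identity $e^{\hat s}e^{\hat{\Lambda'}+\hat{\xi'}}=e^{\hat{\Lambda''}+\hat{\xi''}}e^{\hat s}$ (the conjugation relation, a consequence of Lemmas~\ref{lem:relpp} and the preceding ones, or more directly of (\ref{eq:rellL}) and (\ref{eq:rellx}) applied with $\lambda=s$) to move the $e^{\hat s}$ factor to the right. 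Since $e^{\hat s}\vac=e^{-\frac12\tr(s)}\vac$ by (\ref{eq:gcactvac}), and more relevantly since $e^{\hat s}\in H$, this shows the element lies in the same $H$-coset as $e^{\hat\Lambda^\fa+\hat\xi^\fa}e^{\hat r}e^{\hat{\Lambda''}+\hat{\xi''}}$, which is the $s=0$ element of $\Gdec$ parametrized (via Corollary~\ref{cor:gdecparam}) by $(\hat\Lambda^\fa+\hat\xi^\fa,\hat r)$ — but note $r$ here is itself determined by $\Lambda,\xi$ through condition (\ref{eq:gd1}). Hence the specific realization of $G/H$ (those elements with $s=0$) is exactly the image of the map sending $\hat\Lambda^\fa+\hat\xi^\fa\in\fp_-$ to (\ref{eq:gdec}) with $r,\Lambda',\xi'$ fixed by (\ref{eq:gd1})--(\ref{eq:gd3}) and $s=0$; this gives surjectivity onto $G/H$.

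For injectivity I would argue as follows. Suppose two parameters $\hat\Lambda_1^\fa+\hat\xi_1^\fa$ and $\hat\Lambda_2^\fa+\hat\xi_2^\fa$ give rise to group elements $g_1,g_2$ lying in the same $H$-coset, i.e.\ $g_2=g_1 h$ for some $h\in H$. Writing $g_j=e^{\hat\Lambda_j^\fa+\hat\xi_j^\fa}e^{\hat r_j}e^{\hat{\Lambda_j'}+\hat{\xi_j'}}$ (with $s=0$), the relation $g_2=g_1h$ becomes a factorization identity in $\Gdec^\C\subseteq G^\C$ with all factors of the prescribed $P_-$, $H^\C$, $P_+$ type: absorbing $h\in H\subset H^\C$ into the middle factor on the right of $g_1$ keeps $g_1 h$ in $P_-\cdot H^\C\cdot P_+$ form with the \emph{same} $P_-$-part $e^{\hat\Lambda_1^\fa+\hat\xi_1^\fa}$. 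By the injectivity half of the Decomposition Theorem (\ref{eq:gcdeca}) the three factors are uniquely determined, so in particular $e^{\hat\Lambda_1^\fa+\hat\xi_1^\fa}=e^{\hat\Lambda_2^\fa+\hat\xi_2^\fa}$. Since $\fp_-$ has the simple composition law (\ref{eq:relpm}) — in particular the exponential map on $\fp_-$ is injective, as the Baker--Campbell--Hausdorff series truncates — this forces $\hat\Lambda_1^\fa+\hat\xi_1^\fa=\hat\Lambda_2^\fa+\hat\xi_2^\fa$, i.e.\ $(\Lambda_1,\xi_1)=(\Lambda_2,\xi_2)$.

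The main obstacle I anticipate is the well-definedness and consistency of the map, namely verifying that conditions (\ref{eq:gd1})--(\ref{eq:gd3}) with $s=0$ can always be solved: one must check that $\one-\Lambda^2$ is invertible (true, since $\Lambda^2$ is strictly negative, so $\one-\Lambda^2\ge\one$, hence $-\mu$ is a well-defined positive trace-class operator via (\ref{eq:gdmu})), that $b=\{\xi,e^\mu\xi\}\ge 0$ so that $2+b\neq 0$ and (\ref{eq:gdbeta}) likewise yields a positive operator, and — most delicately — that the operator on the right-hand side of (\ref{eq:gd1}) is genuinely of the form $e^{-2\hat r}$ for some self-adjoint $r\in\tcl$, so that $r$ (and then $\Lambda',\xi'$ via (\ref{eq:gd2}), (\ref{eq:gd3})) is determined. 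This last point is exactly what Theorem~\ref{thm:gdec} packages: the conditions (\ref{eq:gd1})--(\ref{eq:gd3}) are precisely the constraints that single out, among elements (\ref{eq:gdec}), those lying in the unitary group $G$; and Corollary~\ref{cor:gdecparam} already records that for every $(\Lambda,\xi)$ and every skew-adjoint $s$ there \emph{is} a solution. Taking $s=0$ in that corollary therefore directly supplies the map, and the only genuinely new content here is the bijectivity, which reduces to the two uniqueness statements above — the injectivity of $\exp$ on $\fp_-$ and the injectivity of the decomposition (\ref{eq:gcdeca}).
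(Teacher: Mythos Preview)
Your proposal is correct and follows essentially the same route as the paper: invoke Corollary~\ref{cor:gdecparam}, use the conjugation identity (from (\ref{eq:rellL}) and (\ref{eq:rellx}) with $\lambda=s$) to push $e^{\hat s}$ to the right so that setting $s=0$ selects one representative per $H$-coset, and read off the map from $\fp_-$. The paper in fact \emph{defines} $G/H$ to be this $s=0$ slice of $\Gdec$, so bijectivity is largely by construction; your explicit injectivity argument via the uniqueness half of the Decomposition Theorem and the injectivity of $\exp$ on $\fp_-$ is a welcome elaboration of what the paper leaves implicit.
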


The decomposition (\ref{eq:gcdeca}) also suggests a slightly different perspective on coherent states. Namely, we may consider the action of the complexified dynamical group $\Gdec^\C$ on the vacuum vector rather than that of its real version. Recalling equation (\ref{eq:gcactvac}) we see that the subgroup leaving invariant the vacuum state is now the group $X_+$ generated by $H^\C$ and $P_+$. Thus the corresponding homogeneous space parametrizing coherent states is $G^\C/X_+$. Using the decomposition (and restricting to $\Gdec^\C$) suggests to identify this with the subgroup $P_-\subset G^\C$ which amounts to a choice of representative for each element of $G^\C/X_+$. We can then parametrize $G^\C/X_+$ through $\fp_-$ via the exponential map. While the homogeneous spaces $G/H$ and $G^\C/X_+$ constructed this way are both parametrized by $\fp_-$ and are isomorphic, the realizations as subgroups that we have chosen are not identical. In particular, $P_-$ is not a subgroup of $G$. The actions on the vacuum of $G/H$ and $G^\C/X_+$ realized in this way differ only by a multiplicative factor, however. Namely, the term $\exp(\hat{r})$ is present only for $G/H$. It enforces unitarity by normalizing the resulting vector to have unit norm.

\section{Coherent states in the Fock space}
\label{sec:cohspace}

Recall that $\fp_-$ carries a positive definite inner product by restriction of expression (\ref{eq:gipeval}),
\begin{equation}
 \lhs (\Lambda_1,\xi_1), (\Lambda_2,\xi_2) \rhs =-\tr_L(\Lambda_1\Lambda_2) +2\{\xi_2,\xi_1\}.
\end{equation}
Here as in the following we write $(\Lambda,\xi)$ for the element $\hat{\Lambda}^\fa+\hat{\xi}^\fa$ of $\fp_-$. Remember also that the complex structure on $\fp_-$ is opposite to that of $\asl\oplus L$ under this natural identification. The inner product makes $\fp_-$ into a Hilbert space.
Define the map $K:\fp_-\to \cF$ as the exponential composed with application to the vacuum vector,
\begin{equation}
 (\Lambda,\xi) \mapsto \exp\left(\hat{\Lambda}^\fa+\hat{\xi}^\fa\right)\vac .
\end{equation}
As discussed in the previous section this gives rise to coherent states in terms of the homogeneous space $G^\C/X_+$ as identified with the subgroup $P_+\subset G^\C$.

\begin{prop}
The map $K$ is continuous, holomorphic and injective.
\end{prop}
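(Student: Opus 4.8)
The plan is to obtain continuity and holomorphy by writing $K$ as a composition of three transparent maps, and to obtain injectivity by reading off the one- and two-particle components of the image vector.

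For continuity and holomorphy, I would set $\phi:\fp_-\to\mathcal{B}(\cF)$, $(\Lambda,\xi)\mapsto\hat\Lambda^\fa+\hat\xi^\fa$, where $\mathcal{B}(\cF)$ denotes the bounded operators on $\cF$ with the operator norm. First one records that $\phi$ is complex linear for the complex structure carried by $\fp_-$: this is exactly the content of the identifications $\fm_-\cong\overline{\asl}$ and $\fn_-\cong\overline{L}$, namely that $\Lambda\mapsto\hat\Lambda^\fa$ and $\xi\mapsto\hat\xi^\fa$ reverse the complex structures of $\asl$ and of $L$, which is precisely compensated by the opposite complex structure placed on $\fp_-$. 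Moreover $\phi$ is bounded, since by (\ref{eq:Lopn}) and (\ref{eq:xopn})
\begin{equation}
 \lop\hat\Lambda^\fa+\hat\xi^\fa\rop\le\lop\hat\Lambda^\fa\rop+\lop\hat\xi^\fa\rop=\sqrt{-\tfrac{1}{2}\tr(\Lambda^2)}+\sqrt{\tfrac{1}{2}\|\xi\|_L^2}.
\end{equation}
Next, $\exp:\mathcal{B}(\cF)\to\mathcal{B}(\cF)$ is entire (its power series converges in operator norm, uniformly on bounded sets), and $T\mapsto T\vac$ is a bounded, hence holomorphic, linear map $\mathcal{B}(\cF)\to\cF$. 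Therefore $K$ is a composition of continuous holomorphic maps between complex Banach spaces, hence itself continuous and holomorphic. Alternatively one may check directly that $K(\Lambda,\xi)=\sum_{n\ge 0}\tfrac{1}{n!}(\hat\Lambda^\fa+\hat\xi^\fa)^n\vac$ is, on every bounded subset of $\fp_-$, a uniformly convergent series of continuous homogeneous polynomials, estimating term by term with the displayed bound.

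For injectivity I would exploit two elementary facts about the creation operators: $\hat\Lambda^\fa$ commutes with $\hat\xi^\fa$ (both are polynomials in creation operators only, with $\hat\Lambda^\fa$ quadratic; equivalently, the corresponding bracket is among those omitted in Section~\ref{sec:CAR}), and $(\hat\xi^\fa)^2=\tfrac{1}{2}(a_\xi^\fa)^2=0$. Consequently $e^{\hat\Lambda^\fa+\hat\xi^\fa}=e^{\hat\Lambda^\fa}e^{\hat\xi^\fa}=e^{\hat\Lambda^\fa}(\one+\hat\xi^\fa)$, so that
\begin{equation}
 K(\Lambda,\xi)=e^{\hat\Lambda^\fa}\vac+\tfrac{1}{\sqrt{2}}\,e^{\hat\Lambda^\fa}a_\xi^\fa\vac,
\end{equation}
where the first summand consists of states of even degree and the second of states of odd degree. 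Now assume $K(\Lambda_1,\xi_1)=K(\Lambda_2,\xi_2)$. Comparing the even-degree parts and then projecting onto the two-particle subspace gives $\hat\Lambda_1^\fa\vac=\hat\Lambda_2^\fa\vac$; since $\Lambda\mapsto\hat\Lambda^\fa\vac$ is real linear and $\|\hat\Lambda^\fa\vac\|_\cF^2=-\tfrac{1}{2}\tr(\Lambda^2)>0$ for every nonzero $\Lambda\in\asl$ by (\ref{eq:Lopn}) together with the strict negativity of $\Lambda^2$, this forces $\Lambda_1=\Lambda_2=:\Lambda$. The operator $e^{\hat\Lambda^\fa}$ is bounded and invertible with inverse $e^{-\hat\Lambda^\fa}$, so applying it to the equality of odd-degree parts yields $a_{\xi_1}^\fa\vac=a_{\xi_2}^\fa\vac$, i.e.\ $a_{\xi_1-\xi_2}^\fa\vac=0$; since $\|a_\eta^\fa\vac\|_\cF^2=\|\eta\|_L^2$ we conclude $\xi_1=\xi_2$, which gives injectivity.

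I do not expect a genuine obstacle. The two points that require attention are, first, making sure the complex structure with respect to which $\phi$ is linear really matches the (opposite) complex structure of $\fp_-$, so that ``holomorphic'' has its intended meaning rather than ``antiholomorphic''; and second, the bookkeeping of which components of $K(\Lambda,\xi)$ carry the data $\Lambda$ and $\xi$, which is where the commutation of $\hat\Lambda^\fa$ with $\hat\xi^\fa$ and the nilpotence of $\hat\xi^\fa$ are used. The substantive input to injectivity is entirely the strict positivity of $-\tr(\Lambda^2)$ and of $\|\xi\|_L^2$ already established in Section~\ref{sec:CAR}.
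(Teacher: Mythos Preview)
Your proposal is correct and follows essentially the same approach as the paper: continuity and holomorphy via the composition $\fp_-\to\mathcal{B}(\cF)\xrightarrow{\exp}\mathcal{B}(\cF)\xrightarrow{\cdot\,\vac}\cF$, and injectivity by reading off $\Lambda$ and $\xi$ from the low-degree components of $K(\Lambda,\xi)$ using (\ref{eq:Lopn}) and (\ref{eq:xopn}). The only cosmetic difference is that the paper extracts $\xi$ and $\Lambda$ directly from the degree-$1$ and degree-$2$ components of the exponential (the quadratic contribution in degree $2$ vanishes since $(\hat\xi^\fa)^2=0$), whereas you first split into even and odd parts and then invert $e^{\hat\Lambda^\fa}$; both arrive at the same conclusion by the same mechanism.
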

\begin{proof}
First observe that the norm coming from the inner product $\lhs\cdot,\cdot\rhs$ on $\fp_-$ is not the same as the operator norm, but is equivalent to it. Comparing equation (\ref{eq:gipeval}) to equations (\ref{eq:Lopn}) and (\ref{eq:xopn}) yields 
\begin{equation}
2 \lop x\rop^2 \le \|x\|_{\fp_-}^2 \le 4 \lop x\rop^2 .
\end{equation}
Since the exponential map as well as evaluation on a state are continuous in the operator norm, $K$ is continuous. Also, the exponential map as well as evaluation on a state are holomorphic. So $K$ is holomorphic. Now recall from equations (\ref{eq:Lopn}) and (\ref{eq:xopn}) that for $(\Lambda,\xi)$ to act trivially on the vacuum we must have $\Lambda=0$ and $\xi=0$. But this faithful action on the vacuum can be recovered from the exponential by restricting to the subspaces of the Fock space of degree $1$ and $2$. Thus, $K$ is injective.
\end{proof}

Recall the map $\alpha:\fp_-\oplus\fh\to\Gdec$ from Corollary~\ref{cor:gdecparam}. Define $\tilde{K}:\fp_-\to\cF$ by,
\begin{equation}
 (\Lambda,\xi) \mapsto \alpha\left(\hat{\Lambda}^\fa+\hat{\xi}^\fa\right)\vac .
\end{equation}
From equation (\ref{eq:gcactvac}) we see that this action takes the form
\begin{equation}
 \alpha\left(\hat{\Lambda}^\fa+\hat{\xi}^\fa\right)\vac =  \exp\left(-\frac{1}{2}\tr(r)\right)\exp\left(\hat{\Lambda}^\fa+\hat{\xi}^\fa\right)\vac,
\end{equation}
where $r\in\tcl$ is self-adjoint and determined by equation (\ref{eq:gd1}). In particular $\tr(r)$ is real and $\exp\left(-\frac{1}{2}\tr(r)\right)$ is strictly positive. Since $\Gdec$ acts unitarily, comparing $K$ and $\tilde{K}$ yields,
\begin{equation}
 \tilde{K}(\Lambda,\xi)=\frac{K(\Lambda,\xi)}{\|K(\Lambda,\xi)\|_{\cF}},\qquad \|K(\Lambda,\xi)\|_{\cF} = \exp\left(\frac{1}{2}\tr_L(r)\right) .
\end{equation}
We can evaluate the norm of $K(\Lambda,\xi)$ in terms of a Fredholm determinant.

\begin{prop}
\label{prop:Knorm}
Let $\Lambda\in\asl$, $\xi\in L$. Then,
\begin{equation}
\|K(\Lambda,\xi)\|_{\cF}=\left(1+\frac{1}{2}b\right)^{\frac{1}{2}}\det \left(\one_L-\Lambda^2\right)^{\frac{1}{4}} ,\qquad b\defeq \{\xi,(\one_L-\Lambda^2)^{-1}\xi\} .
\end{equation}
Moreover,
\begin{equation}
1\le \|K(\Lambda,\xi)\|_{\cF} \le \exp\left(\frac{1}{4}\|(\Lambda,\xi)\|_{\fp_-}\right) .
\end{equation}
\end{prop}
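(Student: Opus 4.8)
The plan is to start from the identity $\|K(\Lambda,\xi)\|_{\cF}=\exp\bigl(\tfrac12\tr_L(r)\bigr)$ already established above, where $r\in\tcl$ is the self-adjoint operator fixed by condition~(\ref{eq:gd1}) of Theorem~\ref{thm:gdec}, and to read off $\tr_L(r)$ by pairing both sides of~(\ref{eq:gd1}) with the vacuum. Since $\hat{r}\vac=-\tfrac12\tr_L(r)\,\vac$, the left-hand side yields $\langle\vac,e^{-2\hat{r}}\vac\rangle=\exp(\tr_L(r))=\|K(\Lambda,\xi)\|_{\cF}^2$. On the right-hand side of~(\ref{eq:gd1}) the operators $\hat{\mu}$, $\hat{\beta}$, $\hat{\nu}$ and $\hat{\nu}^{\fa}=\widehat{\nu^{\la}}$ are all of current type, hence act on $\vac$ as multiplication by $-\tfrac12\tr_L(\mu)$, $-\tfrac12\tr_L(\beta)$, $-\tfrac12\tr_L(\nu)$ and $-\tfrac12\overline{\tr_L(\nu)}$ respectively. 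I would therefore obtain $\|K(\Lambda,\xi)\|_{\cF}^2=\exp\bigl(-\tfrac12\tr_L(\beta)-\tfrac12\tr_L(\mu)-\operatorname{Re}\tr_L(\nu)\bigr)$, reducing matters to evaluating three traces from the explicit formulas~(\ref{eq:gdmu}),~(\ref{eq:gdbeta}),~(\ref{eq:gdnu}).

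The three evaluations would go as follows. First, $\nu=\tfrac12 e^{\mu}\Lambda\xi\{\xi,\cdot\}$ has rank one, so $\tr_L(\nu)=\tfrac12\{\xi,e^{\mu}\Lambda\xi\}$; since $e^{\mu}=(\one_L-\Lambda^2)^{-1}$ is a function of $\Lambda^2$ it commutes with $\Lambda$, and one checks directly that $\Lambda e^{\mu}$ is again anti-symmetric, which forces $\{\xi,\Lambda e^{\mu}\xi\}=0$, i.e.\ $\tr_L(\nu)=0$. Second, by~(\ref{eq:gdmu}) we have $\mu=-\log(\one_L-\Lambda^2)$ with $\Lambda^2\in\tcl$, so $\tr_L(\mu)=-\tr_L\log(\one_L-\Lambda^2)=-\log\det(\one_L-\Lambda^2)$ (Fredholm determinant, defined because $\Lambda^2\in\tcl$), giving $e^{-\frac12\tr_L(\mu)}=\det(\one_L-\Lambda^2)^{1/2}$. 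Third, by~(\ref{eq:gdbeta}) the operator $e^{-\frac12\beta}=\one_L+\tfrac12\,\xi\{e^{\mu}\xi,\cdot\}$ is the identity plus a rank-one operator, so $e^{-\frac12\tr_L(\beta)}=\det\bigl(e^{-\frac12\beta}\bigr)=1+\tfrac12\{e^{\mu}\xi,\xi\}=1+\tfrac12 b$, where $b=\{\xi,(\one_L-\Lambda^2)^{-1}\xi\}$ is real and non-negative because $(\one_L-\Lambda^2)^{-1}$ is a positive operator. Multiplying the three contributions yields $\|K(\Lambda,\xi)\|_{\cF}^2=(1+\tfrac12 b)\det(\one_L-\Lambda^2)^{1/2}$, and taking the positive square root gives the stated formula.

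For the estimates I would argue directly from this formula. The lower bound is immediate: $b\ge 0$, and $-\Lambda^2$ being a non-negative trace-class operator its Fredholm determinant satisfies $\det(\one_L-\Lambda^2)=\prod_i(1+\mu_i)\ge 1$, where $\mu_i\ge 0$ are the eigenvalues of $-\Lambda^2$; hence $\|K(\Lambda,\xi)\|_{\cF}\ge 1$. For the upper bound, take logarithms and apply $\log(1+t)\le t$ to each factor: $\log(1+\tfrac12 b)\le\tfrac12 b\le\tfrac12\|\xi\|_L^2$ (using $\|(\one_L-\Lambda^2)^{-1}\|\le 1$) and $\tfrac12\log\det(\one_L-\Lambda^2)=\tfrac12\sum_i\log(1+\mu_i)\le\tfrac12\sum_i\mu_i=-\tfrac12\tr_L(\Lambda^2)$. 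Adding and recalling that $\|(\Lambda,\xi)\|_{\fp_-}^2=-\tr_L(\Lambda^2)+2\|\xi\|_L^2$ gives $\log\|K(\Lambda,\xi)\|_{\cF}^2\le\tfrac12\|(\Lambda,\xi)\|_{\fp_-}^2$, that is $\|K(\Lambda,\xi)\|_{\cF}\le\exp\bigl(\tfrac14\|(\Lambda,\xi)\|_{\fp_-}^2\bigr)$ (note that it is the squared norm that appears here).

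I do not expect a genuine obstacle once Theorem~\ref{thm:gdec} is in hand; the two points that warrant care are the vanishing $\tr_L(\nu)=0$ --- which is exactly the statement that the diagonal of the sesquilinear form of an anti-symmetric, conjugate-linear operator vanishes --- and the two Fredholm-determinant identities invoked, namely $\tr\log=\log\det$ for trace-class perturbations of the identity and $\det(\one_L+u\{v,\cdot\})=1+\{v,u\}$ for rank-one perturbations. One should also record that $\mu,\beta,\nu\in\tcl$ so that the traces and determinants make sense, which is part of Theorem~\ref{thm:gdec} (and for $\mu$ follows directly since $\Lambda^2\in\tcl$).
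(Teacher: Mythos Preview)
Your argument is correct and follows essentially the same route as the paper: both start from $\|K(\Lambda,\xi)\|_{\cF}=\exp\bigl(\tfrac12\tr_L(r)\bigr)$, extract the traces of $\mu,\beta,\nu$ from condition~(\ref{eq:gd1}), show $\tr_L(\nu)=0$ by anti-symmetry, and convert the $\mu$- and $\beta$-contributions to Fredholm determinants. Your treatment of the $\beta$-term via the rank-one formula $\det(\one_L+u\{v,\cdot\})=1+\{v,u\}$ is a bit more direct than the paper's explicit computation of $\beta$, and you are right that the upper bound should read $\exp\bigl(\tfrac14\|(\Lambda,\xi)\|_{\fp_-}^{2}\bigr)$ with the squared norm, matching what the paper's own proof actually establishes.
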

\begin{proof}
We can reexpress the trace of a trace class operator in terms of the Fredholm determinant \cite{Sim:traceideals}. In the present case, this yields
\begin{equation}
\exp\left(\frac{1}{2}\tr_L(r)\right)=\det\left(e^{\frac{1}{2}r}\right)
=\det\left(e^{2r}\right)^{\frac{1}{4}} .
\end{equation}
From condition (\ref{eq:gd1}) we get $e^{2r} = e^{-\nu^\la} e^{-\beta} e^{-\mu} e^{-\nu}$ which in turn yields,
\begin{equation}
 \det\left(e^{2r}\right)^{\frac{1}{4}} =\det\left(e^{-\frac{1}{4}\nu^\la}\right)\det\left(e^{-\frac{1}{2}\beta}\right)^{\frac{1}{2}}\det\left(e^{-\mu}\right)^{\frac{1}{4}}\det\left(e^{-\frac{1}{4}\nu}\right)
\end{equation}
Now rewrite the $\nu$-terms again in terms of traces,
\begin{equation}
\det\left(e^{-\frac{1}{4}\nu}\right)=\exp\left(-\frac{1}{4}\tr_L(\nu)\right),
\end{equation}
an similarly for $\nu^\la$. From equation (\ref{eq:gdnu}) one easily finds that the trace of $\nu$ (and thus also its adjoint) vanishes. It remains to consider the terms containing $\beta$ and $\mu$. For the $\beta$-term recall equation (\ref{eq:gdbeta}), i.e., we have
\begin{equation}
 e^{-\frac{1}{2}\beta}=\one+\frac{1}{2}\xi\{e^{\mu}\xi,\cdot\} .
 \label{eq:calcbeta}
\end{equation}
For $n\ge 1$ we have,
\begin{equation}
 \left(\xi\{e^\mu\xi,\cdot\}\right)^n=b^{n-1}\xi\{e^\mu\xi,\cdot\} .
\end{equation}
Observe also $\tr_L(\xi\{e^{\mu}\xi,\cdot\})=b$.
Thus, we may set $-\frac{1}{2}\beta=t \xi\{e^\mu\xi,\cdot\}$. If $b=0$ then equation (\ref{eq:calcbeta}) is solved by $t=\frac{1}{2}$ but $\tr(\xi\{e^{\mu}\xi,\cdot\})$ vanishes and the determinant of expression (\ref{eq:calcbeta}) is unity. Suppose then that $b> 0$ ($b$ is necessarily non-negative since $e^\mu$ is a positive operator) and consider thus,
\begin{equation}
 \exp\left(t \xi\{e^\mu\xi,\cdot\}\right)=\sum_{n=0}^\infty \frac{1}{n!} t^n \left(\xi\{e^\mu\xi,\cdot\}\right)^n = \one_L + \frac{1}{b} \left(\exp(tb)-1\right) \xi\{e^\mu\xi,\cdot\} .
\end{equation}
This solves equation (\ref{eq:calcbeta}) for
\begin{equation}
\exp(tb)=1+\frac{1}{2}b .
\end{equation}
Then,
\begin{equation}
 \det\left(e^{-\frac{1}{2}\beta}\right)=\exp\left(\tr_L(t \xi\{e^\mu\xi,\cdot\})\right)^{\frac{1}{2}}
 =\exp(tb)^{\frac{1}{2}}=\left(1+\frac{1}{2}b\right)^{\frac{1}{2}}
\end{equation}
This also correctly captures the case $b=0$.
Finally, for the $\mu$-term we simply insert the defining relation (\ref{eq:gdmu}).

The lower estimate on $\|K(\Lambda,\xi)\|_{\cF}$ follows from $b\ge 0$ and the positivity of $-\Lambda^2$. For the upper estimate note,
\begin{multline}
\|K(\Lambda,\xi)\|_{\cF}^2 =\left(1+\frac{1}{2}b\right) \det\left(\one_L-\Lambda^2\right)^{\frac{1}{2}}
 \le \left(1+\frac{1}{2}\|\xi\|_L^2 \right) \exp\left(\frac{1}{2}\tr_L(-\Lambda^2)\right) \\
 \le \exp\left(\frac{1}{2}\|\xi\|_L^2-\frac{1}{2}\tr_L(\Lambda^2)\right)
 \le \exp\left(\|\xi\|_L^2-\frac{1}{2}\tr_L(\Lambda^2)\right) .
\end{multline}
Here we have used that $e^\mu\le \one_L$ since $e^{-\mu}\ge \one_L$ and an estimate for the Fredholm determinant in terms of the trace, $\|\det(\one+X)\|\le \exp(\tr(|X|))$ \cite{Sim:traceideals}.
\end{proof}

\begin{prop}
\label{prop:cohip}
Let $\Lambda_1,\Lambda_2\in\asl$ and $\xi_1,\xi_2\in L$. If $\one_L-\Lambda_1\Lambda_2$ is invertible set
\begin{equation}
b\defeq\{\xi_2,(\one_L-\Lambda_1\Lambda_2)^{-1}\xi_1\} .
\end{equation}
Then,
\begin{equation}
\langle K(\Lambda_1,\xi_1),K(\Lambda_2,\xi_2)\rangle=
 \left(1+\frac{1}{2}b\right)
\det \left(\one_L-\Lambda_1\Lambda_2\right)^{\frac{1}{2}}.
\label{eq:kip}
\end{equation}
The correct branch of the square root is obtained by analytic continuation from $\Lambda_1=\Lambda_2$. If $\one_L-\Lambda_1\Lambda_2$ is not invertible, the inner product vanishes.
\end{prop}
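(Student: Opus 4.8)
The plan is to reduce the overlap to a vacuum expectation value and then apply the reordering identity of Lemma~\ref{lem:relpp}. Using the $\fa$-structure one has $\langle K(\Lambda_1,\xi_1),K(\Lambda_2,\xi_2)\rangle=\langle\vac,e^{\hat\Lambda_1+\hat\xi_1}e^{\hat\Lambda_2^\fa+\hat\xi_2^\fa}\vac\rangle$. I would first assume $\one_L-\Lambda_1\Lambda_2$ invertible and $b\neq-2$ — the latter only excludes a hyperplane of $\xi_2$'s, which can be restored afterwards by continuity — and apply Lemma~\ref{lem:relpp} with $(\Lambda',\xi',\Lambda,\xi)$ replaced by $(\Lambda_1,\xi_1,\Lambda_2,\xi_2)$. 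Then $e^{-\mu}=\one_L-\Lambda_1\Lambda_2$, the quantity ``$b$'' of the lemma is exactly the $b$ of the proposition, and
\[
 e^{\hat\Lambda_1+\hat\xi_1}e^{\hat\Lambda_2^\fa+\hat\xi_2^\fa}=e^{\hat\Omega^\fa+\hat\eta^\fa}e^{\hat\nu^\fa}e^{\hat\beta}e^{\hat\mu}e^{\hat\nu'}e^{\hat\Omega'+\hat\eta'} .
\]

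Next I would evaluate this on $\vac$ and pair with $\vac$, reading the exponentials from the right. The rightmost factor has argument in $\fp_+$, built from annihilation operators, hence $e^{\hat\Omega'+\hat\eta'}\vac=\vac$. The factors $e^{\hat\nu'},e^{\hat\mu},e^{\hat\beta},e^{\hat\nu^\fa}$ have degree-zero arguments, so each acts on $\vac$ by the scalar $\exp(-\tfrac12\tr_L(\cdot))$ of its argument (using $\hat\lambda\vac=-\tfrac12\tr(\lambda)\vac$ and $\tr(\nu^\la)=\overline{\tr\nu}$); and $e^{\hat\Omega^\fa+\hat\eta^\fa}\vac=\vac+(\text{higher degree})$ contributes the factor $1$ to the pairing with $\vac$. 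This yields $\langle K(\Lambda_1,\xi_1),K(\Lambda_2,\xi_2)\rangle=e^{-\frac12\tr\mu}\,e^{-\frac12\tr\beta}\,e^{-\frac12\overline{\tr\nu}}\,e^{-\frac12\tr\nu'}$, so the whole computation comes down to four traces.

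The traces $\mu$ and $\beta$ are handled as in the proof of Proposition~\ref{prop:Knorm}: from $\det(e^X)=e^{\tr X}$ and $e^{-\mu}=\one_L-\Lambda_1\Lambda_2$ we get $e^{-\frac12\tr\mu}=\det(\one_L-\Lambda_1\Lambda_2)^{1/2}$, and since $e^{-\frac12\beta}=\one_L+\tfrac12\xi_1\{e^{\mu^\la}\xi_2,\cdot\}$ is a rank-one perturbation whose only nonzero eigenvalue is $\{e^{\mu^\la}\xi_2,\xi_1\}=b$, its determinant is $1+\tfrac12 b$. The point on which the clean form of the answer hinges is that the two $\nu$-terms drop out. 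I would show $\tr\nu=\tfrac12\{\xi_1,\Lambda_2 e^\mu\xi_1\}=0$ by combining (i) the identity $\Lambda_2(\one_L-\Lambda_1\Lambda_2)^{-1}=(\one_L-\Lambda_2\Lambda_1)^{-1}\Lambda_2$, (ii) the relation $\big((\one_L-\Lambda_2\Lambda_1)^{-1}\big)^\la=(\one_L-\Lambda_1\Lambda_2)^{-1}=e^\mu$, which uses that an anti-symmetric map satisfies $\Lambda_j^\la=-\Lambda_j$, and (iii) the anti-symmetry of $\Lambda_2$; together these give $\{\xi_1,\Lambda_2 e^\mu\xi_1\}=\{e^\mu\xi_1,\Lambda_2\xi_1\}=-\{\xi_1,\Lambda_2 e^\mu\xi_1\}$, so the trace vanishes, and symmetrically $\tr\nu'=0$. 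Substituting proves (\ref{eq:kip}) on the open set where $\one_L-\Lambda_1\Lambda_2$ is invertible and $b\neq-2$; the branch of the square root is then pinned down by specialising to $\Lambda_1=\Lambda_2$, $\xi_1=\xi_2$, where $\one_L-\Lambda_1^2\geq\one_L$ and the left-hand side equals $\|K(\Lambda_1,\xi_1)\|_{\cF}^2>0$, fixing the positive continuation exactly as in Proposition~\ref{prop:Knorm}.

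The remaining work is to extend the identity across the excluded sets. Dropping the restriction $b\neq-2$ is routine: for fixed $\Lambda_1,\Lambda_2,\xi_1$ both sides of (\ref{eq:kip}) are continuous in $\xi_2$ and agree off a hyperplane. The genuinely delicate step — and the one I expect to be the main obstacle — is the behaviour on the locus where $\one_L-\Lambda_1\Lambda_2$ fails to be invertible: there Lemma~\ref{lem:relpp} does not apply, and the conclusion that the overlap vanishes must be obtained separately, e.g.\ by approaching this locus from the invertible (and $b\neq-2$) region while controlling the limit of $(1+\tfrac12 b)\det(\one_L-\Lambda_1\Lambda_2)^{1/2}$ (using that $\langle K(\Lambda_1,\xi_1),\cdot\rangle$ is bounded and $K$ continuous, so the left-hand side is jointly continuous), or by a finite-dimensional reduction together with holomorphy of $\langle K(\Lambda_1,\xi_1),K(\Lambda_2,\xi_2)\rangle$ in $(\Lambda_1,\xi_1)$. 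A useful cross-check throughout is the splitting $K(\Lambda,\xi)=e^{\hat\Lambda^\fa}\big(\vac+\hat\xi^\fa\vac\big)$ — valid because $[\fm_-,\fn_-]=0$ and $(\hat\xi^\fa)^2=0$ — which separates the overlap into an even part reproducing $\det(\one_L-\Lambda_1\Lambda_2)^{1/2}$ and an odd part giving the term $\tfrac12 b\,\det(\one_L-\Lambda_1\Lambda_2)^{1/2}$.
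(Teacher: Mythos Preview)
Your proposal is correct and follows essentially the same route as the paper: reduce to a vacuum expectation value, apply Lemma~\ref{lem:relpp}, and evaluate the resulting traces as in Proposition~\ref{prop:Knorm}. In fact you supply more detail than the paper does---in particular the argument that $\tr\nu=\tr\nu'=0$ via $\Lambda_2 e^\mu=(\one-\Lambda_2\Lambda_1)^{-1}\Lambda_2$ together with $((\one-\Lambda_2\Lambda_1)^{-1})^\la=e^\mu$ and anti-symmetry, and the discussion of the limiting behaviour on the non-invertible locus---whereas the paper simply refers back to Proposition~\ref{prop:Knorm} and asserts the vanishing in the degenerate case.
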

\begin{proof}
We use Lemma~\ref{lem:relpp} to evaluate the inner product.
\begin{align}
\langle K(\Lambda_1,\xi_1),K(\Lambda_2,\xi_2)\rangle & =
\langle \vac, e^{\hat{\Lambda_1}+\hat{\xi_1}}e^{\hat{\Lambda_2}^\fa+\hat{\xi_2}^\fa}\vac\rangle \\
& = \langle \vac, e^{\hat{\Omega}^\fa+\hat{\eta}^\fa} e^{\hat{\nu}^\fa} e^{\hat{\beta}} e^{\hat{\mu}} e^{\hat{\nu'}} e^{\hat{\Omega'}+\hat{\eta'}} \vac\rangle \\
& = \langle \vac, e^{\hat{\nu}^\fa} e^{\hat{\beta}} e^{\hat{\mu}} e^{\hat{\nu'}} \vac\rangle \\
& = \exp\left(-\frac{1}{2}\tr_L(\nu^\la+\beta+\mu+\nu')\right)
\end{align}
Here $\nu,\nu',\beta,\mu$ are determined by equations (\ref{eq:ppnu}), (\ref{eq:ppnup}), (\ref{eq:ppbeta}), (\ref{eq:ppmu}) with $(\Lambda,\xi)=(\Lambda_2,\xi_2)$ and $(\Lambda',\xi')=(\Lambda_1,\xi_1)$. The procedure to evaluate the exponential of the traces is rather analogous to the one followed in the proof of Proposition~\ref{prop:Knorm}. We omit the details. Care has to be taken due to the fact that $\one_L-\Lambda_1\Lambda_2$ is not necessarily positive and might in fact be non-invertible. In the latter case the Fredholm determinant of this operator vanishes and so does the inner product. Also $b$ might take the exceptional value $-2$ in which case, again, the inner product vanishes. Furthermore, the correct branch has to be chosen for the square root in (\ref{eq:kip}).
\end{proof}

\begin{prop}
\label{prop:kdense}
The images of $K$ and $\tilde{K}$ span dense subspaces of $\cF$.
\end{prop}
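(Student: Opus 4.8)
The plan is to prove the equivalent statement that no nonzero $\psi\in\cF$ is orthogonal to every $K(\Lambda,\xi)$. Since $\tilde K(\Lambda,\xi)$ is a strictly positive scalar multiple of $K(\Lambda,\xi)$, the two maps have the same closed linear span, so it suffices to treat $K$. The first move exploits that the image of $K$ is exactly $\{\exp(X)\vac : X\in\fp_-\}$ and that $\fp_-=\fm_-\oplus\fn_-$ is a complex vector space in which $\fm_-$ and $\fn_-$ are complex subspaces. Hence, for fixed $\hat{\Lambda}^\fa\in\fm_-$ and $\hat{\xi}^\fa\in\fn_-$ and all $z,w\in\C$, the operator $z\hat{\Lambda}^\fa+w\hat{\xi}^\fa$ lies in $\fp_-$, so $\exp(z\hat{\Lambda}^\fa+w\hat{\xi}^\fa)\vac$ is in the image of $K$; if $\psi$ is orthogonal to that image, then $(z,w)\mapsto\langle\psi,\exp(z\hat{\Lambda}^\fa+w\hat{\xi}^\fa)\vac\rangle$ is an entire function on $\C^2$ vanishing identically. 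Using $[\hat{\Lambda}^\fa,\hat{\xi}^\fa]=0$ (from the relations of Section~\ref{sec:CAR}) and $(\hat{\xi}^\fa)^2=\tfrac12(a_\xi^\fa)^2=0$, this function is affine in $w$, and extracting its Taylor coefficients yields $\langle\psi,(\hat{\Lambda}^\fa)^n\vac\rangle=0$ and $\langle\psi,(\hat{\Lambda}^\fa)^n\hat{\xi}^\fa\vac\rangle=0$ for all $n\ge0$, all $\Lambda\in\asl$ and all $\xi\in L$.

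It then remains to show that the vectors $(\hat{\Lambda}^\fa)^n\vac$ (which lie in the degree-$2n$ component of $\cF$) and $(\hat{\Lambda}^\fa)^n\hat{\xi}^\fa\vac$ (degree $2n+1$), as $n$, $\Lambda$ and $\xi$ vary, span a dense subspace of $\cF$; equivalently, for each $n$ they span dense subspaces of the corresponding homogeneous components, which are the completed exterior powers of $L$. The key computation is that for arbitrary $\xi_1,\dots,\xi_{2n}\in L$ one can produce a $\Lambda\in\asl$ with $(\hat{\Lambda}^\fa)^n\vac$ proportional to the decomposable vector $a_{\xi_1}^\fa\cdots a_{\xi_{2n}}^\fa\vac$. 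Indeed, relation (\ref{eq:nrel6}) (together with the identity $[\hat{\xi}^\fa,\hat{\xi'}^\fa]=a_\xi^\fa a_{\xi'}^\fa$) supplies, for each $j=1,\dots,n$, a finite-rank anti-symmetric map $\Lambda_j$ with $\hat{\Lambda_j}^\fa=a_{\xi_{2j-1}}^\fa a_{\xi_{2j}}^\fa$; finite rank forces $\Lambda_j^2$ trace class, so $\Lambda_j\in\asl$, and we set $\Lambda=\sum_j\Lambda_j\in\asl$. The operators $\hat{\Lambda_j}^\fa$ commute pairwise and square to zero by the CAR, so the multinomial expansion of $(\hat{\Lambda}^\fa)^n=\bigl(\sum_j\hat{\Lambda_j}^\fa\bigr)^n$ collapses to $n!\,a_{\xi_1}^\fa\cdots a_{\xi_{2n}}^\fa$, and similarly $(\hat{\Lambda}^\fa)^n\hat{\xi}^\fa\vac$ is a multiple of $a_{\xi_1}^\fa\cdots a_{\xi_{2n}}^\fa a_\xi^\fa\vac$. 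Since the decomposable vectors of a given degree are dense in the corresponding component of $\cF$, the vanishing relations above force $\psi=0$, which proves the density of the image of $K$ and hence of $\tilde K$.

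I expect the only real work to be this "key computation": confirming that the maps $\Lambda_j$ furnished by (\ref{eq:nrel6}) lie in $\asl$ (the finite-rank, hence trace-class-square, point) and checking the commutation and nilpotency of the $\hat{\Lambda_j}^\fa$ so that the $n$-th power of their sum reduces to a single creation monomial. The holomorphy reduction, the degree-by-degree splitting and the density of decomposables in each exterior power are all routine. A minor bookkeeping remark is that the natural complex structure on $\fp_-$ is opposite to that of $\asl\oplus L$; since we parametrize directly by the elements $z\hat{\Lambda}^\fa+w\hat{\xi}^\fa$ of $\fp_-$ rather than by scaling $(\Lambda,\xi)$, this never enters and one speaks simply of holomorphic dependence on $(z,w)$.
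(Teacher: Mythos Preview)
Your argument is correct. The core construction---building $\Lambda=\sum_j\Lambda_j$ so that $(\hat{\Lambda}^\fa)^n\vac$ collapses to the decomposable vector $n!\,a_{\xi_1}^\fa\cdots a_{\xi_{2n}}^\fa\vac$---is essentially the same $\Lambda$ the paper writes down (when the $\xi_i$ are orthonormal the two formulas agree), and the verification that the $\hat{\Lambda_j}^\fa$ commute and square to zero is exactly as you say.

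Where you differ from the paper is in the surrounding organisation. The paper works constructively: it observes that for this $\Lambda$ the \emph{top} degree-$2n$ piece of $K(\Lambda,0)$ equals the generating state, and handles the lower-degree remainder by induction on degree (and similarly for odd degree using $K(\Lambda,\sqrt{2}\,\xi_{2n+1})$). You instead pass to the orthogonal complement and use the two-variable holomorphic family $(z,w)\mapsto\exp(z\hat{\Lambda}^\fa+w\hat{\xi}^\fa)\vac$ to strip off the Taylor coefficients directly, which isolates $(\hat{\Lambda}^\fa)^n\vac$ and $(\hat{\Lambda}^\fa)^n\hat{\xi}^\fa\vac$ without any induction. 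Your route buys a cleaner endgame (no inductive bookkeeping on the lower-order terms), at the cost of the short holomorphy/Taylor step; the paper's route is more hands-on and stays entirely within linear combinations of coherent states. Both rest on the same algebraic identity for $(\sum_j\hat{\Lambda_j}^\fa)^n$.
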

\begin{proof}
Consider states of the Fock space $\cF$ of the form
\begin{equation}
 a_{\xi_1}^\fa a_{\xi_2}^\fa\cdots a_{\xi_n}^\fa\vac ,
\end{equation}
where $\xi_1,\dots,\xi_n$ are orthogonal and of unit norm in $L$. The space of linear combinations of such states is dense in $\cF$. We shall call these states \emph{generating states}. It is sufficient to show that any generating state is equal to a linear combination of coherent states. We consider states of even degree first and proceed by induction. The vacuum state $\vac$ is obviously equal to $K(0,0)$. Now suppose the statement is true for generating states of even degree up to $2n-2$. Consider now the generating state
\begin{equation}
 a_{\xi_1}^\fa a_{\xi_2}^\fa\cdots a_{\xi_{2n}}^\fa\vac ,
\end{equation}
Extend $\{\xi_1,\dots,\xi_{2n}\}$ to an orthonormal basis $\{\xi_i\}_{i\in I}$ of $L$. Define $\Lambda:L\to L$ as follows. Set for $k\in\{1,\dots,n\}$,
\begin{equation}
 \Lambda(\xi_{2k})\defeq\xi_{2k-1},\quad \Lambda(\xi_{2k-1})\defeq-\xi_{2k} .
\end{equation}
Moreover, set $\Lambda(\xi_l)\defeq 0$ if $l>2n$. Extend $\Lambda$ to all of $L$ conjugate linearly. It is easy to verify that $\Lambda$ is anti-symmetric and $\Lambda^2$ is trace class. What is more, in terms of degree $K(\Lambda,0)$ contains only contributions up to order $2n$ and the latter coincide with the generating state. Indeed,
\begin{equation}
 K(\Lambda,0)- a_{\xi_1}^\fa a_{\xi_2}^\fa\cdots a_{\xi_{2n}}^\fa\vac
\end{equation}
can be written as a linear combination of generating states of degree up to $2n-2$. This completes the induction.

We proceed to consider generating states of odd degree. Again we proceed by induction. We assume the statement to be true for generating states up to degree $2n$. Consider thus a generating state
\begin{equation}
 a_{\xi_1}^\fa a_{\xi_2}^\fa\cdots a_{\xi_{2n+1}}^\fa\vac .
\end{equation}
Again, extend $\{\xi_1,\dots,\xi_{2n+1}\}$ to an orthonormal basis of $L$ and define $\Lambda$ in exactly the same way as above. Then,
\begin{equation}
 K(\Lambda,\sqrt{2}\xi_{2n+1})- a_{\xi_1}^\fa a_{\xi_2}^\fa\cdots a_{\xi_{2n+1}}^\fa\vac
\end{equation}
is a linear combination of generating states of degree up to $2n$. This completes the proof.
\end{proof}

\begin{prop}
Given $\psi\in\cF$ define the function
\begin{equation}
f_\psi:\fp_-\to\C\quad\text{by}\quad f_\psi(\Lambda,\xi)\defeq \langle K(\Lambda,\xi),\psi\rangle.
\label{eq:defwf}
\end{equation}
Then, $f_\psi$ is continuous and anti-holomorphic.
\end{prop}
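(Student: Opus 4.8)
The plan is to deduce both continuity and anti-holomorphy of $f_\psi$ directly from the corresponding properties of the map $K:\fp_-\to\cF$ established in the Proposition just above, together with elementary properties of the inner product $\langle\cdot,\cdot\rangle$ on $\cF$. For continuity, I would argue that $K$ is continuous (by the cited Proposition) and that the map $\phi\mapsto\langle\phi,\psi\rangle$ is continuous on $\cF$ (indeed Lipschitz, by Cauchy--Schwarz: $|\langle\phi,\psi\rangle-\langle\phi',\psi\rangle|\le\|\phi-\phi'\|_{\cF}\|\psi\|_{\cF}$); hence $f_\psi=\langle\cdot,\psi\rangle\circ K$ is continuous as a composition of continuous maps.

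For anti-holomorphy, the key point is a conjugation. First I would note that $K$ is holomorphic, so in particular each restriction of $K$ to a complex line in $\fp_-$ is a holomorphic $\cF$-valued function; equivalently, $K$ is weakly holomorphic, meaning $\langle\chi,K(\cdot)\rangle$ is an ordinary holomorphic $\C$-valued function for every fixed $\chi\in\cF$ (weak and strong holomorphy agree for Banach-space-valued maps). Then $f_\psi(\Lambda,\xi)=\langle K(\Lambda,\xi),\psi\rangle=\overline{\langle\psi,K(\Lambda,\xi)\rangle}$, and $\langle\psi,K(\cdot)\rangle$ is holomorphic on $\fp_-$ by weak holomorphy of $K$; taking the complex conjugate of a holomorphic function yields an anti-holomorphic function. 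One subtlety worth stating explicitly is the role of the complex structure on $\fp_-$: the paper has stressed that the complex structure on $\fp_-$ is opposite to the natural one on $\asl\oplus L$, so ``holomorphic'' and ``anti-holomorphic'' here are understood with respect to the complex structure on $\fp_-$ carried by its Hilbert-space inner product $\lhs\cdot,\cdot\rhs$, and the statement that $K$ is holomorphic (from the previous Proposition) is already with respect to that structure; hence no further adjustment is needed beyond the single conjugation.

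I do not expect a serious obstacle here, since everything reduces to the already-proven holomorphy and continuity of $K$. The only mildly delicate points are (i) being careful that conjugating a holomorphic function does give an anti-holomorphic one in the infinite-dimensional sense intended — which is immediate from the definition of anti-holomorphy as holomorphy of the complex conjugate, or equivalently as satisfying the Cauchy--Riemann equations with the sign of the complex structure reversed — and (ii) making sure the antilinearity of $\langle\cdot,\psi\rangle$ in its first argument is handled correctly, which is exactly what produces the complex conjugate and hence the anti- (rather than plain) holomorphy. So the proof is essentially: $f_\psi=\overline{g_\psi}$ with $g_\psi(\Lambda,\xi)\defeq\langle\psi,K(\Lambda,\xi)\rangle$ holomorphic and continuous by the preceding Proposition and Cauchy--Schwarz, therefore $f_\psi$ is continuous and anti-holomorphic.
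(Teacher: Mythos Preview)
Your argument is correct and is essentially the same as the paper's: continuity follows because $K$ is continuous and the inner product is continuous, and anti-holomorphy follows because $K$ is holomorphic while the inner product is conjugate linear in its first argument. The extra details you supply (Cauchy--Schwarz, weak versus strong holomorphy, the remark on the complex structure of $\fp_-$) are all fine elaborations, but the paper's own proof is just the two-sentence version of what you wrote.
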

\begin{proof}
$f_\psi$ is continuous since $K$ is continuous and evaluation with the inner product is continuous. Moreover, $f_\psi$ is anti-holomorphic since $K$ is holomorphic and evaluation on the left with the inner product is conjugate linear.
\end{proof}

Let $\hol(\fp_-)$ denote the complex vector space of continuous and anti-holomorphic functions on $\fp_-$.

\begin{lem}
The complex linear map
\begin{equation}
 f:\cF\to\hol(\fp_-)\quad\text{given by}\quad
 \psi \mapsto f_\psi 
\end{equation}
is injective.
\end{lem}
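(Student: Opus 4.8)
The plan is to show that the kernel of $f$ is trivial. Suppose $\psi\in\cF$ satisfies $f_\psi=0$, i.e. $\langle K(\Lambda,\xi),\psi\rangle=0$ for all $\Lambda\in\asl$ and $\xi\in L$. I want to conclude $\psi=0$. The obvious route is to invoke Proposition~\ref{prop:kdense}: the image of $K$ spans a dense subspace of $\cF$. If $\psi$ is orthogonal to every $K(\Lambda,\xi)$, then $\psi$ is orthogonal to the closure of their span, which is all of $\cF$, hence $\psi=0$. This is the whole argument in outline, and it is short because the substantive work has already been done in establishing density.

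Concretely, first I would fix $\psi\in\ker f$ and unfold the definition \eqref{eq:defwf}: $f_\psi=0$ means $\langle K(\Lambda,\xi),\psi\rangle=0$ for every $(\Lambda,\xi)\in\fp_-$. Next I would let $V\subseteq\cF$ denote the linear span of $\{K(\Lambda,\xi):\Lambda\in\asl,\xi\in L\}$; by linearity of the inner product in its second slot and conjugate-linearity in the first, $\psi\perp V$. Then I would take closures: since the inner product is continuous, $\psi\perp\overline{V}$, and by Proposition~\ref{prop:kdense} we have $\overline{V}=\cF$. Therefore $\psi\perp\cF$, in particular $\psi\perp\psi$, so $\|\psi\|_{\cF}^2=0$ and $\psi=0$. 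This shows $\ker f=\{0\}$, i.e. $f$ is injective. Complex linearity of $f$ itself is immediate from the linearity of $\psi\mapsto\langle K(\Lambda,\xi),\psi\rangle$ for each fixed $(\Lambda,\xi)$, and the fact that each $f_\psi$ lies in $\hol(\fp_-)$ has already been recorded in the preceding proposition, so that $f$ is genuinely a map into $\hol(\fp_-)$.

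There is essentially no obstacle here; the only thing to be careful about is that injectivity of a \emph{linear} map is equivalent to triviality of its kernel, so it genuinely suffices to treat $\psi$ with $f_\psi=0$ rather than comparing two arbitrary elements. One could alternatively avoid passing through Proposition~\ref{prop:kdense} and argue directly from Proposition~\ref{prop:kdense}'s proof that the generating states $a_{\xi_1}^\fa\cdots a_{\xi_n}^\fa\vac$ are finite linear combinations of coherent states, so orthogonality of $\psi$ to all $K(\Lambda,\xi)$ forces orthogonality to all generating states, which again span a dense subspace; but invoking Proposition~\ref{prop:kdense} as a black box is cleaner and I would do that. The entire proof is two or three sentences.

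\begin{proof}
Since $f$ is complex linear, it suffices to show that its kernel is trivial. Let $\psi\in\cF$ with $f_\psi=0$. By the definition \eqref{eq:defwf} this means $\langle K(\Lambda,\xi),\psi\rangle=0$ for all $\Lambda\in\asl$ and $\xi\in L$. Hence $\psi$ is orthogonal to the linear span of the image of $K$, and, by continuity of the inner product, to its closure. By Proposition~\ref{prop:kdense} this closure is all of $\cF$. In particular $\langle\psi,\psi\rangle=0$, so $\psi=0$.
\end{proof}
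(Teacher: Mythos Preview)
Your proof is correct and is exactly the argument the paper gives, just spelled out in more detail: the paper's proof is the single sentence ``The injectivity is ensured by the denseness of the coherent states, i.e., by Proposition~\ref{prop:kdense}.'' Your expansion of this into the standard density-plus-orthogonality argument is the intended one.
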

\begin{proof}
The injectivity is ensured by the denseness of the coherent states, i.e., by Proposition~\ref{prop:kdense}.
\end{proof}

Let $F(\fp_-)\subseteq\hol(\fp_-)$ denote the image of $f$.

\begin{thm}
The complex linear isomorphism
\begin{equation}
 f:\cF\to F(\fp_-) 
\end{equation}
realizes the Fock space $\cF$ as a \emph{reproducing kernel Hilbert space} of continuous anti-holomorphic functions on the Hilbert space $\fp_-$ with reproducing kernel $K:\fp_-\times\fp_-\to\C$,
\begin{equation}
 K\left((\Lambda_1,\xi_1),(\Lambda_2,\xi_2)\right)= \langle K(\Lambda_1,\xi_1),K(\Lambda_2,\xi_2)\rangle
\end{equation}
given by equation (\ref{eq:kip}) of Proposition~\ref{prop:cohip}. In particular, the reproducing property is equation (\ref{eq:defwf}).
\end{thm}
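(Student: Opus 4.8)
The plan is to verify directly the abstract defining property of a reproducing kernel Hilbert space, since the analytic substance has already been assembled in the preceding results. First I would transport the Hilbert space structure of $\cF$ to $F(\fp_-)$ along the injective complex linear map $f$ of the preceding Lemma, declaring $f$ to be a unitary isomorphism, so that $\langle f_\psi, f_\phi\rangle_{F(\fp_-)}\defeq\langle\psi,\phi\rangle$ for $\psi,\phi\in\cF$; that the elements of $F(\fp_-)$ are continuous anti-holomorphic functions on $\fp_-$ has already been established.

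Next I would check continuity of point evaluation. Fix $(\Lambda_2,\xi_2)\in\fp_-$ and let $g=f_\psi\in F(\fp_-)$. By the definition (\ref{eq:defwf}) of $f_\psi$ we have $g(\Lambda_2,\xi_2)=\langle K(\Lambda_2,\xi_2),\psi\rangle$, so by Cauchy--Schwarz $|g(\Lambda_2,\xi_2)|\le\|K(\Lambda_2,\xi_2)\|_{\cF}\,\|g\|_{F(\fp_-)}$. Thus every evaluation functional is bounded, which is precisely what makes $F(\fp_-)$ a reproducing kernel Hilbert space.

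The third step is to identify the reproducing kernel. The function $(\Lambda_1,\xi_1)\mapsto\langle K(\Lambda_1,\xi_1),K(\Lambda_2,\xi_2)\rangle$ is by definition $f_{K(\Lambda_2,\xi_2)}$, hence an element of $F(\fp_-)$; denote it $K(\cdot,(\Lambda_2,\xi_2))$. Using that $f$ is unitary, for $g=f_\psi$ one computes
\[
 \langle K(\cdot,(\Lambda_2,\xi_2)),g\rangle_{F(\fp_-)}
 =\langle f_{K(\Lambda_2,\xi_2)},f_\psi\rangle_{F(\fp_-)}
 =\langle K(\Lambda_2,\xi_2),\psi\rangle=g(\Lambda_2,\xi_2),
\]
which is the reproducing property, recovering (\ref{eq:defwf}). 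The closed form of the kernel is then read off from Proposition~\ref{prop:cohip}, equation (\ref{eq:kip}), and the correctness of the branch of the square root is already part of that statement.

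Finally, I would note that there is no genuine obstacle left in this last step: the main difficulty has already been dispatched in the ingredients in hand, namely the denseness of the coherent states (Proposition~\ref{prop:kdense}), which is what guarantees that $f$ is injective and hence an isomorphism onto $F(\fp_-)$; the continuity and anti-holomorphy of each $f_\psi$; and the explicit evaluation of $\langle K(\Lambda_1,\xi_1),K(\Lambda_2,\xi_2)\rangle$ as a Fredholm determinant in Proposition~\ref{prop:cohip}. Given these, the theorem is a matter of unwinding the definition of a reproducing kernel Hilbert space, as above.
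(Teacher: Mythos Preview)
Your argument is correct and is precisely the ``straightforward combination of the previous results'' that the paper invokes as its entire proof; you have simply written out explicitly the transport of the Hilbert structure along $f$, the boundedness of point evaluations via Cauchy--Schwarz, and the identification of the kernel with $f_{K(\Lambda_2,\xi_2)}$. There is no substantive difference in approach.
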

\begin{proof}
This follows straightforwardly by combining the previous results.
\end{proof}

\subsection*{Acknowledgments}

This work was supported in part by UNAM--DGAPA--PASPA through a sabbatical grant, by UNAM--DGAPA--PAPIIT through project grant IN100212 and by the Emerging Fields project ``Quantum Geometry'' of the Friedrich-Alexander-Universität Erlangen-Nürnberg.

\bibliographystyle{stdnodoi} % bibliography
\bibliography{stdrefsb}

\begin{thebibliography}{10}
\providecommand{\url}[1]{\texttt{#1}}
\providecommand{\urlprefix}{URL }
\providecommand{\selectlanguage}[1]{\relax}
\providecommand{\eprint}[2][]{\url{#2}}

\bibitem{ItZu:qft}
C.~Itzykson, J.-B. Zuber, \textit{Quantum Field Theory}, McGraw-Hill, New York,
  1980.

\bibitem{Bar:hilbanalytic}
V.~Bargmann, \textit{On a Hilbert Space of Analytic Functions and an Associated
  Integral Transform}, Commun. Pure Appl. Math. \textbf{14} (1961) 187--214.

\bibitem{OhKa:cohfermip}
Y.~Ohnuki, T.~Kashiwa, \textit{Coherent States of Fermi Operators and the Path
  Integral}, Prog. Theor. Phys. \textbf{60} (1978) 548--564.

\bibitem{ZFG:coherent}
W.~M. Zhang, D.~H. Feng, R.~Gilmore, \textit{Coherent states: theory and some
  applications}, Rev. Modern Phys. \textbf{62} (1990) 867–927.

\bibitem{Per:coherent}
A.~Perelomov, \textit{Generalized Coherent States and Their Applications},
  Springer, Berlin, 1986.

\bibitem{BaBr:nbogtrans}
R.~Balian, E.~Brezin, \textit{Nonunitary Bogoliubov Transformations and
  Extension of Wick's Theorem}, Nuovo Cimento B \textbf{10} (1969) 37--55.

\bibitem{Ber:grossneveu}
F.~A. Berezin, \textit{Models of Gross-Neveu Type are Quantization of a
  Classical Mechanics with Nonlinear Phase Space}, Commun. Math. Phys.
  \textbf{63} (1978) 131--153.

\bibitem{Maj:cpbos}
S.~Majid, \textit{Cross Products by Braided Groups and Bosonization}, J.
  Algebra \textbf{163} (1994) 165--190.

\bibitem{HaKa:aqft}
R.~Haag, D.~Kastler, \textit{An Algebraic Approach to Quantum Field Theory}, J.
  Math. Phys. \textbf{5} (1964) 848--861.

\bibitem{Sch:pfunctanal}
M.~Schechter, \textit{Principles of Functional Analysis}, 2 ed., Amer. Math.
  Soc., Providence, 2002.

\bibitem{Suz:gentrotter}
M.~Suzuki, \textit{Generalized Trotter's formula and systematic approximants of
  exponential operators and inner derivations with applications to many-body
  problems}, Commun. Math. Phys. \textbf{51} (1976) 183--190.

\bibitem{Hel:dglgss}
S.~Helgason, \textit{Differential Geometry, Lie Groups, and Symmetric Spaces},
  Academic Press, New York, 1978.

\bibitem{KoNo:diffgeo2}
S.~Kobayashi, K.~Nomizu, \textit{Foundations of Differential Geometry}, vol.~2,
  Interscience Publishers, New York, 1969.

\bibitem{Sim:traceideals}
B.~Simon, \textit{Trace Ideals and Their Applications}, 2 ed., Amer. Math.
  Soc., Providence, 2005.

\end{thebibliography}
\end{document}